\newtheorem{definition}{Definition}[section]
\newtheorem{lemma}{Lemma}[section]
\newtheorem{theorem}{Theorem}[section]
\newcommand{\blue}[1]{{\textcolor[rgb]{0,0,1}{#1}}}
\newcommand{\gray}[1]{{\textcolor[rgb]{0.5,0.5,0.5}{#1}}}
 \providecommand\BibTeX{{%
  \normalfont B\kern-0.5em{\scshape i\kern-0.25em b}\kern-0.8em\TeX}}}
\begin{document}

\title{Half a Century of Distributed Byzantine Fault-Tolerant Consensus: Design Principles and Evolutionary Pathways}

\author{Huanyu Wu}
\email{h.wu.3@research.gla.ac.uk}
\orcid{0000-0003-2067-6086}
\affiliation{%
 \institution{James Watt School of Engineering, University of Glasgow}
 \streetaddress{G12 8QQ}
 \city{Glasgow}
 \state{Scotland}
 \country{UK}
}
\author{Chentao Yue}
\email{chentao.yue@sydney.edu.au}
\orcid{0000-0001-5877-2319}
\affiliation{%
 \institution{School of Electrical and Computer Engineering, The University of Sydney}
 \streetaddress{NSW 2006}
 \city{Sydney}
 \state{NSW}
 \country{Australia}
}

\author{Yixuan Fan}
\email{y.fan.3@research.gla.ac.uk}
\orcid{0000-0003-2865-1422}
\affiliation{%
 \institution{James Watt School of Engineering, University of Glasgow}
 \streetaddress{G12 8QQ}
 \city{Glasgow}
 \state{Scotland}
 \country{UK}
 \postcode{G12 8QQ}
}

\author{Yonghui Li}
\email{yonghui.li@sydney.edu.au}
\orcid{0000-0001-7702-1123}
\affiliation{%
 \institution{School of Electrical and Computer Engineering, The University of Sydney}
 \streetaddress{NSW 2006}
 \city{Sydney}
 \state{NSW}
 \country{Australia}
}

\author{David Flynn}
\email{David.Flynn@glasgow.ac.uk}
\orcid{0000-0002-1024-3618}
\affiliation{%
 \institution{James Watt School of Engineering, University of Glasgow}
 \streetaddress{G12 8QQ}
 \city{Glasgow}
 \state{Scotland}
 \country{UK}
}

\author{Lei Zhang}
\email{lei.zhang@glasgow.ac.uk}
\orcid{0000-0002-4767-3849}
\affiliation{%
 \institution{James Watt School of Engineering, University of Glasgow}
 \streetaddress{G12 8QQ}
 \city{Glasgow}
 \state{Scotland}
 \country{UK}
 \postcode{G12 8QQ}
}

\renewcommand{\shortauthors}{Huanyu Wu, et al.}

\begin{abstract}

The concept of distributed consensus originated in the 1970s and gained widespread attention following Leslie Lamport's influential publication on the Byzantine Generals Problem in the 1980s. Over the past five decades, distributed consensus has become an extensively researched field. Practical Byzantine Fault Tolerance (PBFT) has emerged as a prominent and widely adopted solution due to its conceptual clarity, effectiveness, and resilience to arbitrary failures. However, PBFT does not universally address all scenarios, highlighting the necessity of developing a comprehensive understanding of the history, evolution, and foundational principles of distributed consensus.
This article systematically reviews the historical evolution and foundational principles of distributed consensus, examining pivotal advancements including fault-tolerant state machine replication (SMR), consensus protocols in partially synchronous and asynchronous networks, and recent innovations in Directed Acyclic Graph (DAG)-based consensus mechanisms. We further analyse the core design rationales, essential components, and underlying primitives across various distributed fault-tolerant protocols. The relationship between BFT consensus mechanisms and their applications in environments requiring robust resilience against adversarial faults is also explored. Finally, we discuss emerging research areas and challenges, such as consensus for wireless and blockchain scenarios, highlighting potential future developments. This comprehensive overview offers valuable insights to inform the design, optimisation, and implementation of distributed consensus systems across multiple application scenarios.
\end{abstract}

\begin{CCSXML}
<ccs2012>
  <concept>
    <concept_id>10002978.10003006.10003013</concept_id>
    <concept_desc>Security and privacy~Distributed systems security</concept_desc>
    <concept_significance>300</concept_significance>
    </concept>
  <concept>
    <concept_id>10010520.10010575.10010577</concept_id>
    <concept_desc>Computer systems organization~Reliability</concept_desc>
    <concept_significance>500</concept_significance>
    </concept>
 </ccs2012>
\end{CCSXML}

\ccsdesc[300]{Security and privacy~Distributed systems security}
\ccsdesc[500]{Computer systems organization~Reliability}

\keywords{Distributed systems, consensus, Byzantine fault tolerance, Blockchain, review}

\received{20 February 2007}
\received[revised]{12 March 2009}
\received[accepted]{5 June 2009}

\maketitle

\section{Introduction}

The increasing digitalisation and interconnection of critical systems, networks, and markets are transforming industries globally. Modern power grids, connected autonomous systems, financial platforms, cloud services, and communication networks, etc., all rely on fault-tolerant architectures to maintain operational reliability, high service quality, and even facilitate goals such as drive decarbonisation. However, the complexity of these adaptive systems, which involve dynamic and rapidly evolving infrastructures, introduces significant challenges, particularly in ensuring reliable and consistent operation. As these systems become more coupled and distributed, achieving fault tolerance and resilience becomes crucial to their success.
Fault tolerance is the ability of a distributed system to continue operating correctly even when some components fail or behave arbitrarily. This capability is mission-critical as industries become increasingly digital and interconnected. As systems grow distributed and dynamic, early detection and mitigation of faults are vital to prevent minor glitches from cascading into major failures. In essence, fault tolerance forms the backbone that allows distributed systems to deliver continuous, reliable service in our always-on world.
One of the most widely researched and applied concepts in this domain is Byzantine Fault Tolerance (BFT). BFT ensures that distributed systems can achieve agreement even when some components or processes fail or act maliciously, which is widely deployed in high-stakes applications like financial systems, cloud infrastructure, and critical communication networks.

The nascent stage of BFT dates back to the late 1960s, when engineers were considering using computers in mission-critical space vehicles and aircraft control systems, which comprised a central multiprocessor with duplicated processors for error detection, dedicated local processors, and multiplexed buses connecting them~\cite{hopkins1971fault}. However, a challenge arose regarding how to ensure that all the multiple processes could make consistent decisions on a consistent view, where each process is an instance that independently executes a specific task or a set of tasks. Subsequently, in 1970s, NASA's Software Implemented Fault Tolerant Project (SIFT) sought to develop resilient systems for aircraft control~\cite{hopkins1971fault}. Through the project, in 1982, Leslie Lamport's introduction of the Byzantine generals problem~\cite{lamport1982byzantine} highlighted the challenge of achieving consensus in the presence of faulty or adversarial nodes. This problem has since been a focal point of research, with numerous protocols proposed to address different failure models in distributed systems.
One of the key paradigms to emerge from this research is State Machine Replication (SMR), which ensures consistency across replicas of a distributed service by executing the same sequence of operations~\cite{lamport1979make}. Unlike the one-time consensus achieved in the Byzantine agreement, SMR offers a continuous and reliable service that remains consistent across multiple replicas over time. Over the decades, various fault-tolerant SMR solutions, such as Crash Fault-Tolerant (CFT) and Byzantine Fault-Tolerant (BFT) protocols, have been developed. Notable examples include Paxos\cite{lamport1998part}, Raft\cite{ongaro2014search}, and PBFT~\cite{castro1999practical}, which have had a profound impact on modern distributed systems.

In particular, PBFT marked a turning point, demonstrating the practicality of BFT protocols in real-world systems. Subsequent developments have sought to optimize BFT performance, especially in environments where faults are rare or minimal. These advancements include quorum-related hybrid solutions like Q/U \cite{abd2005fault}, HQ \cite{cowling2006hq}, and Zyzzyva \cite{kotla2010zyzzyva}. More recently, innovations in BFT protocols have explored the use of trusted hardware, certificate-based solutions, and parallel BFT approaches \cite{chun2007attested, cloutier2000diapm, correia2004tolerate, hotstuff2019yin}.
The rise of blockchain technologies, especially with the advent of Bitcoin in 2008~\cite{nakamoto2008bitcoin}, brought renewed attention to distributed consensus protocols. Although Bitcoin primarily employs Proof-of-Work (PoW), it is essentially built upon the principles of SMR. This has led to an explosion of research into blockchain-oriented fault-tolerant consensus, where BFT protocols like BFT-SMaRt \cite{bessani2014state} and HotStuff \cite{hotstuff2019yin} are adapted to meet the unique needs of decentralised ledgers. Asynchronous BFT solutions such as Honey Badger BFT \cite{miller2016honey} and BEAT \cite{duan2018beat}, as well as innovative DAG-based protocols like DAG-Rider \cite{keidar2021all} and Bullshark \cite{spiegelman2022bullshark}, have further expanded the potential applications and performance of BFT in blockchain and other distributed systems.


To offer a comprehensive perspective, this article presents a critical review of BFT consensus mechanisms in distributed systems, tracing its evolution from foundational principles to its advanced applications in blockchain and beyond. This review emphasizes the analysis of strategic design and verification of BFT-based mechanisms and blockchain architectures. We begin by examining the fault tolerance challenges in highly distributed and interdependent systems origins from the early years, underscoring the need for resilient consensus mechanisms and the distributed consensus primitives developed since then. Next, we analyse key developments in BFT, including state machine replication and prominent protocols such as PBFT, before expanding into more recent adaptations for both partially and fully asynchronous environments. Particular attention is given to the integration of BFT in blockchain, where novel topologies such as tree and DAG-based consensus models are emerging as promising alternatives for scalability and efficiency. Finally, we review the evolving field of wireless consensus, which is attracting increasing research attention due to the rising adoption of the Internet of Things and intelligent wirelessly connected vehicle systems.

Beyond surveying existing work, this review aims to provide deeper insights into the future design and strategic development of BFT-driven systems. We examine key research directions, including the optimisation and refinement of BFT protocols, the design of adaptive consensus mechanisms tailored for diverse deployment environments, and hybrid approaches that integrate classical BFT with external components such as cryptographic techniques and trusted hardware. Additionally, we explore the potential for cross-domain integration, highlighting emerging applications including blockchain use cases to enhance scalability, security, and performance in distributed systems.

\textbf{Rationale for Another Review on Distributed Fault-Tolerant Consensus.}
Several existing surveys have examined topics related to Byzantine Fault-Tolerant (BFT) consensus, particularly within the context of blockchain systems~\cite{cachin2017blockchain,bano2019sok,gramoli2020blockchain,natoli2019deconstructing,xiao2020survey,wang2022bft,zhang2024reaching}. In addition, traditional fault-tolerant state machine replication (SMR) has been thoroughly reviewed in works such as~\cite{schneider1990implementing,Correia2011byzantine,Distler2021byzantine,zhang2024reaching}, which discuss its respective advantages and limitations. A comparison of existing surveys is presented in Table~\ref{tab:existing}.
However, most of these surveys primarily focus on listing or classifying various consensus protocols and providing performance or complexity comparisons. They often lack a cohesive framework that captures the evolution of distributed fault-tolerant consensus and do not provide a comprehensive or accessible explanation of the underlying principles that guide the design of these protocols. Our review of recent works on BFT finds that many BFT protocols, particularly in permissioned blockchain applications, are based on PBFT. Many researchers default to PBFT as a baseline for optimisation without fully exploring or leveraging alternative BFT approaches or distributed primitives. We attribute this to a lack of comprehensive understanding among researchers, particularly newcomers and those from interdisciplinary fields, regarding the design principles and developmental pathways of distributed Byzantine fault-tolerant consensus.
As a result, they are left without the necessary guidance to understand the foundational principles and primitives, gain insights into protocol selection or improvement, or design the solutions tailored to specific environments and requirements. Furthermore, the role of timing assumptions, which has been pivotal in linking the early evolution of Byzantine fault-tolerant consensus protocols and is still an important consideration in recent research, is frequently overlooked or only superficially introduced. This omission makes it challenging for readers to grasp the evolution and rationale behind certain design choices, as well as the interrelationships and applicability of different consensus mechanisms.

\begin{table}[tbp]
\centering
\captionsetup{font=tiny}
\caption{Comparison of existing surveys}
\begin{tabular}{|p{3cm}|p{10cm}|}
\hline
\multicolumn{1}{|l|}{\textbf{Review Paper}} & \textbf{Contribution} \\ 
\hline
Schneider, 1990 \cite{schneider1990implementing} & An early review that introduces SMR and how to build fault-tolerant services using SMR. \\
\hline
Correia et al., 2011 \cite{Correia2011byzantine} & Introduces FLP impossibility and reviews the technologies to circumvent FLP. \\
\hline
Cachin et al., 2017 \cite{cachin2017blockchain} & Focuses on several particular consensus in permissioned and permissionless blockchain. \\
\hline
Bano et al., 2019 \cite{bano2019sok} & Focuses on blockchain consensus, ranging from classic BFT, PoX to hybrid consensus. \\
\hline
Natoli et al., 2019 \cite{natoli2019deconstructing} & Reviews membership selection, consensus mechanism, and structure in blockchain, and analyses goals and assumptions of the consensus. \\
\hline
Gramoli, 2020 \cite{gramoli2020blockchain} & Analyses several consensus problems tackled by blockchain, formalizes Bitcoin and Ethereum consensus, and analyses several attacks. \\
\hline

Xiao et al., 2020 \cite{xiao2020survey} & Reviews fault-tolerant distributed consensus and blockchain consensus, especially Nakamoto consensus, its variations, and PoX consensus. \\
\hline

Distler et al., 2021 \cite{Distler2021byzantine} & Focuses on BFT-SMR and its components, analyses their practical implications, and discusses the building blocks of BFT. \\
\hline
Wang et al., 2022 \cite{wang2022bft} & Analyses recent BFT protocols in blockchains, categorizes them into partially synchronous, asynchronous, and permissionless. \\
\hline
Zhang et al., 2024 \cite{zhang2024reaching} & Reviews a wide range of BFT consensus, listed by different single solutions, categorized by efficiency, robustness, and availability. \\
\hline
\end{tabular}
\label{tab:existing}
\end{table}

\begin{figure*}[tbp]
 \centering
 \includegraphics[width=6.0in,trim=220 30 100 10,clip]{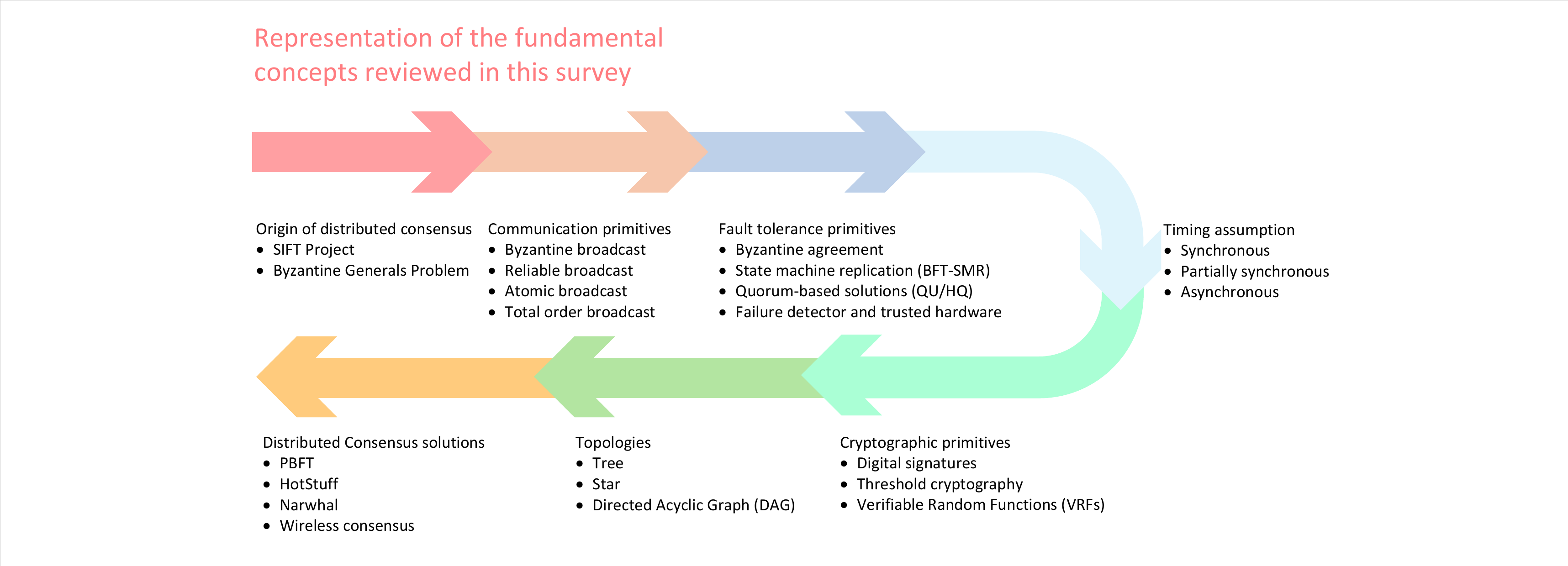}
 \caption{Fundamental concepts reviewed in this survey.}
 \Description{Fundamental concepts reviewed in this survey.}
 \label{fig:represent}
\end{figure*}

Compared with previous review papers, our article makes the following additional contributions:
\begin{itemize}
    \item Our article gives a comprehensive review of the origin and development of fault-tolerant consensus and SMR before and after the advent of blockchain, and it introduces easily-confused concepts such as Byzantine agreement and Byzantine broadcast from the perspective of the historical evolution of consensus, to build a comprehensive view of BFT problem.
    \item This article introduces the important components and primitives of fault-tolerant consensus, such as broadcast primitives and random coin built from threshold cryptography, and explains how they could be used to deal with different system and network assumptions. 
    \item We discussed the consensus designed for different scenarios, including blockchain, and analyses their design rationales, working principles and core elements. The growing novel wireless consensus, probabilistic consensus and consensus for decision making are also introduced. Furthermore, it explores future BFT research issues and potential use cases of blockchains.
\end{itemize}

The fundamental concepts reviewed in this survey is presented in Fig.~\ref{fig:represent}.
Readers can benefit from this article by gaining an understanding of the history and evolution of fault-tolerant consensus, from early solutions to current improvements, and how they can be utilised in practical blockchain systems. In addition, readers will develop insights into the different components of a consensus protocol, the principles guiding their design, and how to design a novel fault-tolerant protocol for specific deployment environments.

\begin{figure*}[tbp]
 \centering
 \includegraphics[width=6.0in,trim=30 40 20 140,clip]{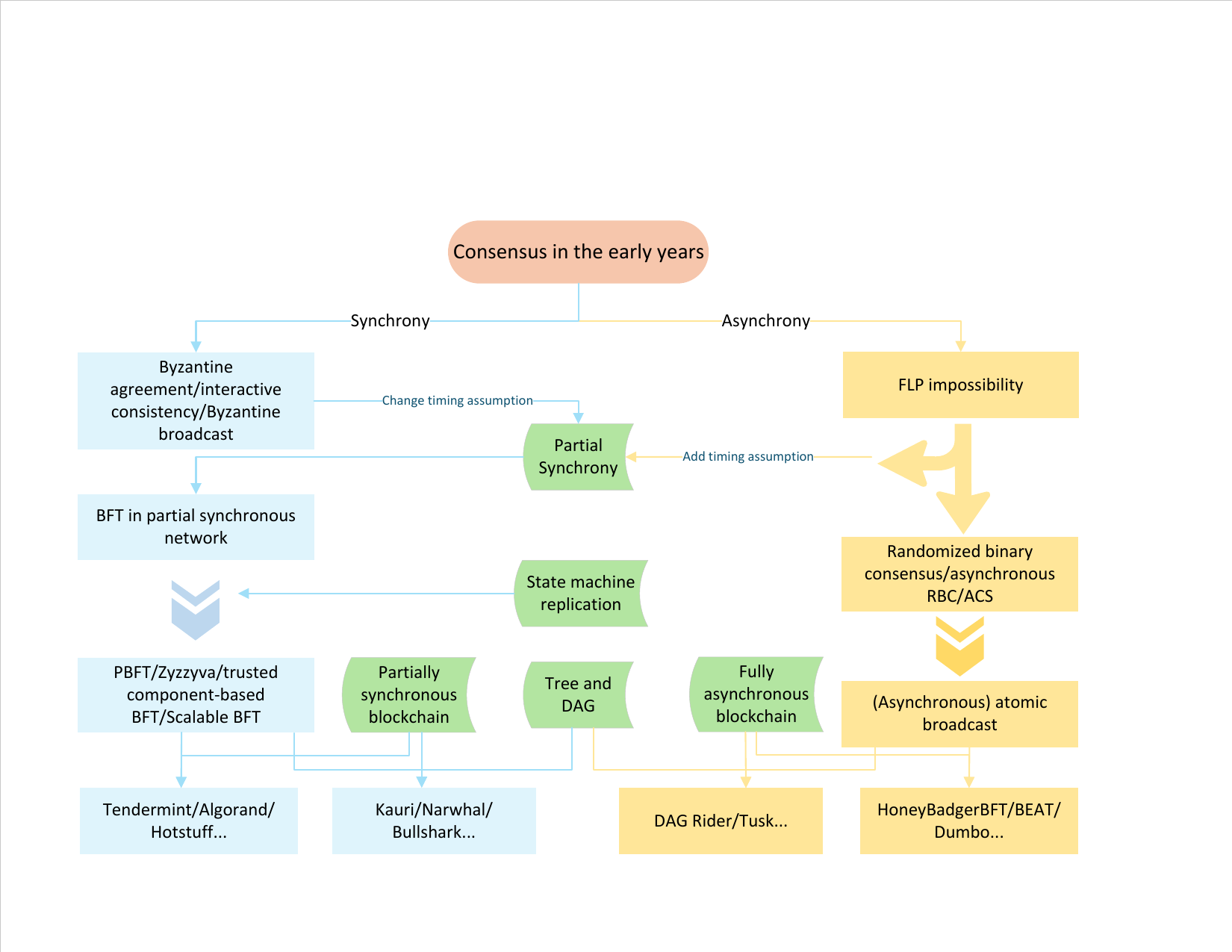}
 \caption{Development path of consensus.}
 \Description{Development path of consensus.}
 \label{fig:devpath}
\end{figure*}
The development path of distributed fault-tolerant consensus that we have identified is illustrated in Fig.~\ref{fig:devpath}. Following this path, the rest of the paper is organized as follows. Section \ref{Sect:AinSyn} begins with the origin of (synchronous) Byzantine agreement and the Byzantine Generals Problem. We then introduce the FLP impossibility result and discuss early solutions that address this issue, along with an introduction to state machine replication. In Section \ref{sect:practicalbft}, we present the milestone PBFT and its variants that operate under a partially synchronous network model, both before and after the advent of blockchain. Section \ref{sect:asynchronousbft} explores practical consensus protocols designed for an asynchronous model, which emerged before blockchain but have regained relevance due to the distinct requirements of blockchain-based systems. In Section \ref{sect:treeandDAG}, we examine recently proposed DAG-based BFT solutions and summarize different fault-tolerant consensus mechanisms. Section \ref{sect:sumBFT} summarises and compares different BFT protocols. Then, building on this review, Section \ref{sect:futureBFT} discusses open challenges and future research directions in BFT consensus, while Section \ref{sect:usecases} explores blockchain use cases in the context of Web3, underscoring the continued significance of BFT research. Finally, Section \ref{sect:conclusion} presents concluding remarks, summarizing the key insights of our study.

\section{Consensus in the Early Years}
\label{sect:consensusearlyyears}
In this section, we provide a comprehensive analysis of the origins of consensus primitives, tracing their development from the early years of distributed systems research. We link these primitives according to their historical progression and compare their relationships and properties. A summary of the reviewed primitives is presented in Table~\ref{tab:consensus_primitives}. We begin by introducing the different timing assumptions in distributed systems (Section~\ref{sect:timing assumption}), followed by the original Byzantine agreement in the ideal synchronous model (Section~\ref{Sect:AinSyn}). Next, we discuss how the FLP impossibility result renders synchronous Byzantine agreement impractical (Section~\ref{Sect:FLP}), and explore how early randomized protocols circumvent this limitation (Section~\ref{sect:earlyrandom}). Finally, we demonstrate how SMR serves as a general abstraction for solutions to the distributed consensus problem (Section~\ref{sect:SMR}).
\begin{table*}[h]
  \centering
  \renewcommand{\arraystretch}{1.3} 
  \caption{Early Distributed Consensus and Primitives}
  \label{tab:consensus_primitives}
  \resizebox{\linewidth}{!}{
  \begin{tabular}{p{3.2cm} p{2.6cm} p{2.6cm} p{5.5cm} p{3.5cm}}
    \toprule
    \textbf{Protocol (Year)} & \textbf{Fault Model} & \textbf{Scalability} & \textbf{Properties} & \textbf{Timing Assumption} \\
    \midrule
    Interactive Consistency ($\sim$ 1980s) & Byzantine, $f < n/3$ & Low ($O(n^3)$) & Agreement, Validity & Synchronous \\
    \midrule
    Byzantine Agreement / Broadcast ($\sim$ 1982) & Byzantine, $f < n/3$ & Low ($O(n^3)$) & Agreement, Validity & Synchronous \\
    
    \midrule
    Ben-Or's Randomized Consensus ($\sim$ 1983) & Byzantine, $f < n/5$; Crash, $f < n/2$ & Low (Exponential worst-case) & Agreement, Validity, Randomized Termination (probabilistic consensus) & Asynchronous \\
    \midrule
    Bracha's Reliable Broadcast ($\sim$ 1985) & Byzantine, $f < n/3$ & Moderate ($O(n^2)$) & Validity, Agreement (all correct nodes deliver the same messages) & Asynchronous \\
    \midrule
    Total Ordering Protocol ($\sim$ 1990s) & Crash / Byzantine Faults & Low-Moderate (varies by implementation) & Agreement, Validity, Termination & Synchronous / Partially Synchronous \\
    \midrule
    Atomic Broadcast ($\sim$ 1996) & Byzantine, $f < n/3$ & Low (Consensus overhead) & Total Order, Agreement, Validity, Integrity & Partially Synchronous / Synchronous \\
    \bottomrule
  \end{tabular}
  }
\end{table*}
\subsection{Timing Assumption}
\label{sect:timing assumption}
The network model in ``Byzantine generals problem'' is a synchronous network. In order to achieve synchrony, Lamport's early work~\cite{malkhi2019concurrency} focused on synchronising the clock of a group of processes. Based on the synchronous network, the problem can then be reduced to ``reaching coordination despite Byzantine failure'', which is also named Byzantine agreement.
However, Fischer, Lynch and Paterson (FLP) proved in 1985 that there
is no deterministic algorithm that can solve Byzantine agreement in asynchronous network models even only one node has crash failure, known as FLP impossibility~\cite{fischer1985impossibility}. To solve this problem in the asynchronous model, several techniques have been utilised. Failure detector~\cite{chandra1996unreliable,malkhi1997unreliable,kihlstrom2003byzantine} and ordering oracle~\cite{pedone2002solving,correia2004tolerate} was introduced to reach consensus in the presence of crash-fault and Byzantine fault in asynchronous network. These oracles and detectors allow the protocols to circumvent FLP because they encompass some degree of synchrony, e.g., synchrony to detect the crash of a single process. However, it is still impossible to achieve the level of encapsulation of the
original crash failure detector model when Byzantine failure exists~\cite{doudou2002encapsulating}. Another way to achieve consensus is to use randomised consensus algorithm which was originally presented by Ben-Or~\cite{ben1983another}. The randomised consensuses is not efficient, and its speed can be significantly boosted by the shared coin~\cite{bracha1987asynchronous} (also known as Common Coin). Nevertheless, the most commonly used method is to relax the asynchrony assumption, i.e., utilising partially synchrony~\cite{dwork1988consensus} or weakly synchrony. These network models are widely considered in many famous protocols, including practical Byzantine fault tolerance (PBFT)~\cite{castro1999practical} and its variants, i.e., PBFT-based systems require weak synchrony to guarantee liveness.

\subsection{Byzantine Agreement in Synchronous Model}
\label{Sect:AinSyn}
In the 1960s, it becomes clear that computers will be used in space vehicles and aircraft control systems, such as those in the Apollo Project. These systems are mission-critical, and processors are duplicated for error detection. To ensure the multiple machines in the system share a consistent view or decision, NASA sponsored the Software implemented Fault Tolerant (SIFT) project~\cite{Wensley1978sift} in the 1970s to build the ultra-reliable computer for critical aircraft control applications.

In SIFT for an aircraft controller, interactive consistency was needed to synchronise different processes, where each sensor supplies its local reading values of an environment of interest as the input\blue{. }
\begin{definition}
  \textbf{Interactive Consistency}: Each process $p_i$ inputs a value $v_i$, and each process outputs a vector $V$ or such values, which satisfies:
  \begin{itemize}
    \item \textbf{Agreement}: Each correct process outputs the same vector $V$ of values.
    \item \textbf{Validity}: If a correct process outputs $V$ and a process $p_i$ is correct and inputs $v_i$, then $V[i]=v_i$.
  \end{itemize}
\end{definition}




After reaching the agreement on an output vector, each process can derive its actions deterministically based on it.

Later, in 1982, Lamport wrote the paper~\cite{lamport1982byzantine} that inspired widespread interest in this problem by introducing the ``Byzantine Generals Problem''. This paper introduced an agreement problem: Imagine several divisions of the Byzantine army planning to attack an encircled city. However, to win the war, the divisions must agree on a battle plan: attack or retreat. Some generals might want to attack, while others might choose retreat; some generals are loyal, and some might be traitors, preventing the loyal generals from reaching an agreement. The generals can only communicate with one another by messenger. An algorithm was needed to address the following conditions:
\begin{itemize}
  \item CD1\blue{:} All loyal generals decide on the same action.
  \item CD2\blue{:} A small number of traitors cannot cause the loyal generals to adopt a bad plan
\end{itemize}
However, Lamport et al. found the second condition difficult to formalise as it requires a predefined ``bad plan''. Therefore, they consider how the generals reach a decision. Byzantine generals problem can thus be abstracted as follows:
A commanding general must send an order to his $n-1$ lieutenant generals such that
\begin{itemize}
  \item CD3\blue{:} All loyal lieutenants obey the same series of orders. (Agreement)
  \item CD4\blue{:} If the commanding general is loyal, then every loyal lieutenant obeys the order he sends. (Validity)
\end{itemize}
The Byzantine generals problem is equivalent to \textit{\textbf{Byzantine broadcast}}, a broadcast primitive where a sender starts with an input value $v$ and others have no input. During the broadcast, the sender conveys its input to all processes in such a way that, at the end of the broadcast, all processes output the same value.

Readers might wonder, what if in CD4 the commanding general is not loyal? In fact, the commanding general does not have to be loyal. When Byzantine broadcast is solved, to address CD1 and CD2, we only have to build $n$, (i.e., the total number of generals) Byzantine broadcast instances in parallel, one for each process. CD1 can be achieved by having all the generals use the Byzantine broadcast method to gather information, and CD2 can be achieved by a robust method, e.g., if the only decision is whether to attack or retreat, (i.e., the decision value is binary), then the final decision can be achieved based on a majority vote among all the orders gathered during the Byzantine broadcast.

Following Byzantine generals problem, we introduce the notion of Byzantine agreement:
\begin{definition}
\label{def:byzantineagreement}  
\textbf{Byzantine Agreement.} In this primitive, every process holds an input value and outputs \gray{again} a single value, and all processes satisfy
\begin{itemize}
  \item \textbf{Agreement}. Each correct process outputs the same value.
  \item \textbf{(All-Same) Validity}. If all the correct processes have the same input value $v$, then they output the same output $v$. Otherwise, all correct processes will still determine on a same value.
\end{itemize}
\end{definition}
Note that the notion of \textit{validity} encompasses many subtle definitions. The original validity condition mentioned above may not be practical when correct processes start with different input values, as this could result in no agreed output. One possible solution is to set the output value to a default value $\bot$ in such cases. Other validity terminologies include requiring that the output value was input by any process, known as Any-Input Validity~\cite{wattenhofer2019blockchain}, or ensuring that the output value is the input from a correct process, referred to as Correct-Input Validity~\cite{wattenhofer2019blockchain}. In some contexts, external-validity~\cite{cachin2001secure} is necessary, which requires that the agreed value is legal according to a global predicate known to all parties and determined by the specific higher-level application. Different subtle definitions of validity are considered based on varying assumptions and research questions.

Lamport proposed two solutions for the (synchronous) Byzantine agreement problem and its variations: an oral message solution and a signature based solution~\cite{lamport1982byzantine}. These solutions incur exponentially large communication costs, hence are also called exponential information gathering~\cite{lynch1996distributed}. After the pioneering work of Lamport, Byzantine agreement solutions with polynomial complexity were developed~\cite{dolev1982polynomial}. Synchronous Byzantine agreement protocols were rarely implemented in real systems, and this problem was considered to be primarily of academic interest until the introduction of PBFT~\cite{castro1999practical}. Therefore, we will not elaborate on these solutions in detail but only introduce the key concepts behind them.

\textbf{$\triangleright$ Remark.} It is important to note that even though the Byzantine Generals Problem does not explicitly state timing assumptions, it inherently necessitates that all generals make their decisions prior to executing their attacking or retreating. In other words, all messages must be delivered before the final decision is made, which essentially assumes a synchronous system. Working on synchronous algorithms for Byzantine agreement made Lamport realise that clock synchronisation among the processes was necessary \cite{malkhi2019concurrency}.

\textbf{\textit{Byzantine agreement}} could be built on top of \textbf{\textit{interactive consistency}}~\cite{malkhi2019concurrency}. In this approach, the processes first run interactive consistency to agree on an input value for each process, after which every process locally runs a deterministic procedure to determine the output according to the validity requirement of Byzantine agreement.
\textbf{\textit{Byzantine broadcast}} could be built on top of \textit{\textbf{Byzantine agreement}}~\cite{malkhi2019concurrency}. Here, the sender process $p_s$ first sends its value to each process, then all processes run \textbf{\textit{Byzantine agreement}} once, whereby each correct process uses the value received from $p_s$ as its input. Intuitively, \textbf{\textit{Byzantine broadcast}} could also be extended to build \textbf{\textit{interactive consistency}}, where each correct process uses \textbf{\textit{Byzantine broadcast}} once to finally construct the same vector. In summary, \textbf{\textit{interactive consistency}}, \textbf{\textit{Byzantine broadcast}}, and \textbf{\textit{Byzantine agreement}} can be implemented based on one another. All of these three notions presented here are designed for a synchronous network; note that there is no explicit termination property, as they all implicitly assume that every message arrives within a bounded time and the processes terminate naturally.

\subsection{FLP Impossibility and Asynchronous Solutions}
\label{Sect:FLP}
In 1985, Fischer, Lynch, and Paterson (FLP)~\cite{fischer1985impossibility,Fischer1982impossible} proved a famous result in the field of distributed systems, stating that in an asynchronous system, a \textbf{\textit{deterministic}} consensus protocol is impossible to achieve even in the simple case of having only one Fail-Stop (crash) process. Readers can refer to the original papers for the formal proof. To help readers understand the theory, we will give a simplified, informal example:

Assume five processes A, B, C, D, and E that need to decide whether to submit a transaction. Every process has a binary initial value, namely submit (1) or rollback (0). They will send their initial values to each other to reach an agreement. Therefore, each process needs to make a decision based on their received messages from other processes. We assume that at most one process can crash, and thus the requirement of agreement is that all of the other four correct processes must get a consistent decision (either 0 or 1). Consider a deterministic algorithm P, in which every process makes the decision based on the majority value. Now, C's initial value is 0, so C broadcasts 0 and it receives 1 from A and B, and 0 from D. Now, both 0 and 1 have two votes, and the value from E will determine the final result. Unfortunately, E crashes, or E is suffering from a network partition and the message from E is delayed forever. In this scenario, C can decide to wait forever until it receives the message from E, but in an asynchronous network, C will not be able to know if E has crashed or is simply slow. As a result, it is possible that C may never make a decision. Alternatively, C might choose a decision based on some predefined deterministic rule; w.l.o.g., we say, C decides on 1. Unfortunately, C crashes just after it makes the decision, and the decision is not sent to the others. Meanwhile, E just recovers from the network partition, and receives 1 from A and B, and 0 from C and D. Now, E's decision depends on its initial value; w.l.o.g., the initial value of E is 0, so E decides on 0, which violates the agreement requirement that all processes must decide on the same value, either 0 or 1.

FLP impossibility is regarded as a milestone in distributed system research. Several solutions have been proposed to avoid the FLP impossibility and enable practical implementations. One possibility is to add failure detectors and oracles~\cite{dwork1988consensus,chandra1996unreliable,malkhi1997unreliable,kihlstrom2003byzantine,pedone2002solving,correia2004tolerate}, but it was later realised that although this approach works fine in a crash-fault environment, it is impossible to achieve the same level of encapsulation as the original crash failure detector model when Byzantine failures exist~\cite{doudou2002encapsulating}. Moreover, these oracles generally require a trusted environment, e.g., trusted hardware. Another possibility is to loosen the network assumptions, moving from asynchrony to partial synchrony (a.k.a. almost-asynchrony, weak synchrony). Partial synchrony was implicitly considered in many famous practical CFT and BFT consensus protocols, including PBFT~\cite{castro1999practical}.
Apart from these solutions, randomisation can be used to achieve agreement in a completely asynchronous network, the idea of which was originally presented by Ben-Or~\cite{ben1983another}. 
In summary, the solutions for asynchronous network guarantee whether the network is synchronous, no honest process will be different from each other (i.e., agreement), but the asynchronous network might make the system halt forever (influence termination), and therefore, they make one of the following trade-offs regarding termination:

\begin{itemize}
\item Termination is guaranteed only if the network is partially synchronous (such as PBFT).
\item Termination is guaranteed with probability 1 (such as Ben-Or's).
\end{itemize}

\subsection{Early Randomised Protocols in Completely Asynchronous Model}
\label{sect:earlyrandom}
In the previous section~\ref{Sect:AinSyn}, all the protocols assume synchronous communication, so termination is ignored, as all processes have a global clock and are expected to follow the protocol until the final bounded step. In other words, all correct processes decide within $r$ rounds, for some previously known constant $r$\cite{bracha1987asynchronous}. However, the situation changes in the asynchronous model. As presented in \ref{Sect:FLP}, it is impossible to find a deterministic solution to address the problem in an asynchronous model, and most solutions loosen the termination condition to avoid the FLP impossibility. In this section, we focus on solutions that apply the loosened termination requirement, guaranteeing termination with probability 1. This means that a protocol may never terminate, but the probability of this occurring is 0, and the expected termination time is finite. Two approaches are widely adopted: The first approach is to add random steps into the protocol \cite{ben1983another}, while the second postulates some probabilistic behaviour about the message system~\cite{bracha1987asynchronous}. Before we proceed, we would like to introduce the notion of consensus, which is easily confused with the term ``Byzantine agreement''.

\begin{definition}
\textbf{Consensus Protocols:} A consensus protocol is a protocol used in a group of processes where each process has an initial value and can propose this value. The protocol satisfies the following properties:
\begin{itemize}
\item \textbf{Agreement:} If two (correct) processes $a$ and $b$ decide values $v_a$ and $v_b$, respectively, then $v_a=v_b$.
\item \textbf{Validity:} If a (correct) process decides a value $v$, then $v$ must be the initial value of some process.
\item \textbf{Termination:} If a (correct) process decides a value, then all correct processes eventually decide a value.
\end{itemize}
\end{definition}

Note that the definition could be slightly different for crash failures and Byzantine failures. In crash failures, all processes follow the code, but in Byzantine failures, processes could behave arbitrarily. 
The correct processes means these properties only hold for correct nodes, because a faulty node especially a Byzantine one can do anything, but some times ``correct'' is not written but known as a implicitly requirement.
Therefore, in the definition for Byzantine failures, we only require that the correct processes satisfy the properties. 

Recall that the definition \ref{def:byzantineagreement} of Byzantine agreement (in a synchronous model does not include a ``termination'' condition.
In recent research, especially in SMR and blockchain, consensus is sometimes required to be ``totally ordered'', which we will introduce later in Section \ref{sect:SMR}.

\subsubsection{Ben-Or's Randomised Consensus}\ 
\label{sect:benor}
\noindent In 1983, Ben-Or proposed the first probabilistic solution to the asynchronous agreement problem~\cite{ben1983another}. The protocol of Ben-Or uses binary values for the initial value and decision value, which are elements of {0,1}. In this work, due to the asynchronous assumption, there are no global time or clocks. Instead, this protocol advances in rounds, and in each process $p$'s local view, a round could be regarded as a local clock for $p$. During the protocol, a message is assumed to eventually be delivered. We now take a brief look at Ben-Or's original protocol with \textit{\textbf{crash failures}}: \\

\noindent\rule[0.25\baselineskip]{\textwidth}{0.5pt}
0. In the initial step, process $p$ sets value $v$ as the proposal at local round number $r_p=1$.\\
1. Phase 1:
\par(a) Broadcast <$phase=1, round=r_p, value=v_p$>.
\par(b) Wait to receive more than $N-t$ different messages <$phase=1, round=r_p, value=*$> from others (where $N$ is the total number of processes).
\par(c) If all the received messages in (b) contain the same value $v$, then $p$ sets $v_p=v$, otherwise it sets $v_p=\bot$.\\
2. Phase 2:
\par(a) Broadcast the message <$phase=2, round=r_p, value=v_p$>.
\par(b) Wait for more than $N-t$ different messages <$phase=2, round=r_p, value=*$> from others.
\par(c) Choose one of the following actions:\par
\quad (i) If all messages contain the same non-$\bot$ value $v$, then decide $v$.\par
\quad (ii) If one of the messages contains a non-$\bot$ value $v$, then accept $v$ by setting $v_p=v$. Otherwise, accept a random estimate by setting $v_p\in{0,1}$ randomly with 50\% probability.\\
3. $r_p = r_p + 1$ and \textbf{goto} Step 1.

\noindent\rule[0.25\baselineskip]{\textwidth}{1pt}

In the following paragraphs, we explain how Ben-Or's protocol work for asynchronous environment.
First, let us check the weak validity in crash-fault model.
\begin{definition}
\label{def:weakvalidity}
\textbf{Weak Validity}. If all processes are honest and start with the same initial value $v$, then the decision value must be $v$.
\end{definition}
It is clear that if all processes are non-faulty and start with the same $v$, they will all send <$phase=1, 1, v$> and will receive more than $N-f$ <$phase=1, 1, v$> messages after one round. Consequently, they will send and receive enough <$phase=2, 1, v$> messages and thus decide on $v$.

Before verifying agreement property, we introduce the notion of Quorum:
\begin{definition}
  \textbf{Quorum, and Quorum system}. Let $P=\{p_1,......,p_n\}$ be a set of processes. A quorum $Q\subseteq P$ is a subset of these processes. A quorum system $S\subset 2^{|P|}$ is a set of quorums such that every two quorums intersect, i.e., for any $Q_1, Q_2\in S$, $Q_1 \cap Q_2\neq \emptyset$.
\end{definition}

Then, we introduce the notion of quorum intersection that is used in Ben-Or's consensus.

\begin{definition}
  \textbf{N/2+1-Quorum Intersection} Let $S_1$ with $|S_1|\geq \frac{N}{2}+1$ and $S_2$ with $|S_2|\geq \frac{N}{2}+1$. Then, there exists at least a correct process in $S_1\cap S_2$.
\end{definition}
Next, we verify the agreement property, which states every process must decide on the same value. The agreement property can be proven by the following claims:
\begin{lemma}
  It is impossible for <$phase2, r_p, 1$> and <$phase2, r_p, 0$> messages to exist simultaneously due to quorum intersection. W.l.o.g., sending <$phase2, r_p, 1$> requires at least $\frac{N}{2}+1$ processes to have the same value, so <$phase2, r_p, 1$> and <$phase2, r_p,0>, 0$> is impossible to exist simultaneously, as this would require $N+2>N$ processes, which is a contradiction.
\end{lemma}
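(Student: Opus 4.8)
The plan is to argue by contradiction, using the $N/2+1$-Quorum Intersection property stated just above, and exploiting the fact that in the crash model a process broadcasts a single value $v_p$ in Phase~1(a) and never sends conflicting Phase~1 messages within the same round $r_p$. The whole argument is really a counting argument about \emph{senders} of Phase~1 messages, dressed up as a quorum intersection.

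First I would unpack what it takes for a message $\langle phase2, r_p, 1\rangle$ to be emitted at all. By the protocol, a process $q$ broadcasts $\langle phase2, r_p, 1\rangle$ only if it set $v_q = 1$ in Phase~1(c), and this happens only if every one of the strictly-more-than-$(N-t)$ Phase~1 messages that $q$ received in round $r_p$ carried the value $1$. Let $A$ be the set of processes that sent such a value-$1$ Phase~1 message to $q$; then $|A| > N - t$, and since $t < N/2$ in this model we get $|A| > N/2$, i.e. $|A| \ge N/2 + 1$. Symmetrically, if some process $q'$ broadcasts $\langle phase2, r_p, 0\rangle$, there is a set $B$ of processes that sent value-$0$ Phase~1 messages to $q'$ in round $r_p$ with $|B| \ge N/2 + 1$.

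Next I would invoke the $N/2+1$-Quorum Intersection property with $S_1 = A$ and $S_2 = B$: since $|A| + |B| > N$, the intersection $A \cap B$ is non-empty and in fact contains at least one correct process $p$. Being in $A$, process $p$ sent a Phase~1 message with value $1$ in round $r_p$; being in $B$, it sent a Phase~1 message with value $0$ in the same round. But Phase~1(a) prescribes that each process broadcasts exactly one message $\langle phase=1, r_p, v_p\rangle$ with a single value, and a correct process respects this. This contradiction shows that $\langle phase2, r_p, 1\rangle$ and $\langle phase2, r_p, 0\rangle$ cannot both be produced in round $r_p$, which is the claim; the arithmetic is exactly the ``$N+2 > N$'' remark in the statement.

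The step I expect to require the most care is not the counting but the modelling: one must be explicit that $A$ and $B$ are sets of \emph{senders} of Phase~1 messages (not the recipients $q, q'$), and that no process — crash-faulty ones included — can lie in both $A$ and $B$, since a crash-faulty process may fall silent but never equivocates on its Phase~1 value; hence even restricting to the correct process guaranteed by quorum intersection suffices. I would also flag that the argument is specific to the crash model with $t < N/2$: for the Byzantine variant the threshold and the intersection count both change (the protocol then needs $f < N/5$), so that case must be handled separately.
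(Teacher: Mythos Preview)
Your proposal is correct and follows essentially the same approach as the paper: the paper's entire argument is the one-line counting remark already embedded in the lemma statement (two quorums of size at least $N/2+1$ would require $N+2>N$ senders), and you have simply unpacked it carefully by naming the sender sets $A,B$, invoking the $N/2+1$-Quorum Intersection definition explicitly, and noting that crash-faulty processes do not equivocate. Your added caveats about the crash-only scope and the $t<N/2$ assumption are accurate and more precise than the paper's informal treatment.
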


\begin{lemma}
  If some process receives $\frac{N}{2}+1$ messages of the form <$phase2, r_p, v$>, then all processes will receive at least one message of the form <$phase2, r_p, v$>. This is because at least one of the $\frac{N}{2}+1$ processes that sent <$phase2, r_p, v$> is non-faulty, which again follows from quorum intersection.
\end{lemma}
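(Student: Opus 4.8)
The plan is to pin down the set of senders responsible for the $\frac{N}{2}+1$ received messages and show that this set necessarily contains a non-faulty process, whose phase-$2$ message is then delivered to everyone; I would then sharpen the statement into the slightly stronger form that the agreement argument actually consumes. Throughout we are in the crash-fault variant of Ben-Or's protocol, so the resilience bound is $t<\frac{N}{2}$ and a sender transmits the \emph{same} phase-$2$ value to every recipient it reaches, although a sender that crashes mid-broadcast may fail to reach some of them.

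First I would fix the process $p$ that received $\frac{N}{2}+1$ messages of the form <$phase2, r_p, v$> and let $S_1$ be the set of their senders, so $|S_1|\ge \frac{N}{2}+1$. The only faults are crashes, of which there are at most $t<\frac{N}{2}<\frac{N}{2}+1\le |S_1|$; equivalently, by the $\frac{N}{2}+1$-quorum-intersection property, $S_1$ meets the set of non-faulty processes. Pick $w\in S_1$ non-faulty. Since $w\in S_1$ it sent <$phase2, r_p, v$>, and since $w$ issues exactly one phase-$2$ message in round $r_p$, that is the only such message $w$ ever sends; because $w$ never crashes, the eventual-delivery assumption of the asynchronous model guarantees this message reaches every correct process. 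Hence every correct process eventually receives at least one <$phase2, r_p, v$> message, which is the stated conclusion.

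For the way the lemma is invoked in the proof of agreement --- namely, that when a correct process $q$ finishes phase $2$ of round $r_p$, at least one of the more than $N-t$ messages it has collected carries value $v$, so that $q$ falls into case (c)(ii) of Phase $2$ with the non-$\bot$ value $v$ and, using the preceding lemma to exclude any other value, adopts $v$ for the next round --- I would add a second intersection step. Let $S_2$ be the set of senders of the phase-$2$, round-$r_p$ messages $q$ collected, so $|S_2|>N-t>\frac{N}{2}$. Since $|S_1|\ge \frac{N}{2}+1$ and $|S_2|>\frac{N}{2}$ we get $|S_1|+|S_2|>N$, hence $S_1\cap S_2\ne\emptyset$; any process in the intersection both sent value $v$ (membership in $S_1$) and had a phase-$2$ message collected by $q$ (membership in $S_2$), and that message is therefore exactly <$phase2, r_p, v$>.

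The one place I would be careful is the crash-fault bookkeeping: one cannot argue ``$p$ received it from some sender $r$, therefore $q$ received it from $r$,'' because $r$ may have crashed after sending to $p$. This is precisely why the first part routes the conclusion through a provably non-faulty member of $S_1$, and why the second part intersects $S_1$ with $q$'s \emph{actual} collected set rather than with all $N$ processes. The remaining details --- that ``more than $N-t$'' indeed exceeds $\frac{N}{2}$ given $t<\frac{N}{2}$, and the integer rounding hidden in ``$\frac{N}{2}+1$'' for odd $N$ --- are routine and do not affect the argument.
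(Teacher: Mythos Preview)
Your proposal is correct and follows the same approach as the paper: the paper's justification is embedded in the lemma statement itself and consists exactly of your first step---among the $\frac{N}{2}+1$ senders there must be a non-faulty one, whose message therefore reaches every process. Your second intersection step (with the set $S_2$ actually collected by $q$) and your crash-fault bookkeeping go beyond what the paper spells out for this lemma, but correctly anticipate what the subsequent agreement claim actually requires.
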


Combining both lemmas, we can conclude that for some round $r$, if a process decides $v$ in step (c)(i), then all other non-faulty processes will decide $v$ within the next round.

Finally, let us check why this protocol satisfies termination. Consider the worst case, where no process decides in round $r$ and only $\bot$ messages are received in phase 2. In this case, processes will accept a random value, so there is a chance, albeit small, that all processes accept the same random value $v$ in the next round, and the protocol terminates (by weak validity). Each round has a nonzero chance of deciding, and although the chance could be small, if the protocol runs for an infinite number of rounds, it will satisfy termination with probability 1. Note that this protocol is not efficient, but it proves that it is possible to find a solution to the consensus problem in a completely asynchronous model. The correctness of Ben-Or's protocol was proven in 1998~\cite{aguilera1998correctness}.
This kind of solution is also called a random coin protocol, as it resembles every node tossing a coin, and the protocol terminating when every node's coin lands on the same side. The performance could be improved if we set every process toss the same coin, called a \textbf{Shared Coin} (a similar notion is \textbf{Common Coin}\cite{shi2020foundations}). The concept of the Shared coin is based on the notion of shared memory\cite{chandra1996polylog}. With the development of cryptography and after the invention of threshold cryptosystems \cite{desmedt1993threshold}, a shared coin can also be built on the threshold signature scheme (TSS) or the verifiable random function (VRF)~\cite{micali1999verifiable}.

\begin{definition}
\textbf{A $(t,n)$-Threshold Signature Scheme (TSS)} is a signature scheme that generates a valid and unforgeable single digital signature only if at least $t$ out of the $n$ participants provide their approvals.
\end{definition}

\begin{definition}
  \textbf{A Verifiable Random Function (VRF)} is a function in which each participant $i$ is equipped with a secret key $sk_i$ and the corresponding public key $pk_i$, and for any input $x$, $VRF_{sk_i}(x)$ returns two values: $hash$ and $proof$. The $hash$ is a fixed-length value determined by the pair ($sk_i$, $x$) and is indistinguishable from random values to anyone who does not know $sk_i$. The $proof$ enables anyone who knows $pk_i$ to verify that the $hash$ corresponds to $x$ without needing to know $sk_i$.
\end{definition}

Ben-Or's protocol also has a Byzantine version if slight modifications are made into the quorum intersection. In the Byzantine environment, Ben-Or's protocol can tolerate $f=\frac{N}{5}$ Byzantine processes. Readers can refer to \cite{ben1983another} for the Byzantine version modification.

\subsubsection{Asynchronous Reliable Broadcast Protocol}\ 

\label{sect:asyn}

\noindent It is widely known that Byzantine consensus can tolerate up to $f=\frac{N}{3}$ Byzantine processes, where $N$ is the total total number of processes. Bracha and Toueg proved this lower bound in \cite{Bracha1983resilient}. However, in Ben-Or's protocol, whether there are asynchronous consensus protocols that can tolerate up to $t<\frac{N}{3}$ remains an open question. Bracha~\cite{bracha1987asynchronous} extended the maximum number of tolerated Byzantine processes to meet the lower bound of $f=\frac{N}{3}$. As introduced in the last paragraph, Ben-Or's Byzantine version modified the quorum intersection to tolerate Byzantine processes. Different from Ben-Or's protocol, which deals with Byzantine processes directly (by quorum intersection), Bracha uses a novel technique to reduce the effect of Byzantine processes, to limit their behaviour. This technique is composed of two parts, a broadcast primitive and a validation mechanism. By utilising the reliable broadcast primitive, a crash process or a Byzantine process can either send no messages or the same message to all correct processes. In other words, the behaviour of a Byzantine process is filtered by the reliable broadcast primitive, thereby reducing the affect of faulty processes. Then the validation mechanism forces faulty processes to send messages that could have been sent by correct processes. We now give the definition of reliable broadcast:
\begin{definition}
\label{def:reliablebroadcast}
  \textbf{Reliable Broadcast (RBC).} A protocol is a reliable broadcast protocol if:
  \begin{itemize}
  \item If process $p$ is correct, then all correct processes agree on the value of the message it broadcasts. (Validity)
  \item If $p$ is faulty, then either all correct processes agree on the same value or none of them agree on any value $p$ broadcasts. (Totality/Agreement)
  \end{itemize}
\end{definition}
We briefly review of Bracha's reliable broadcast.

\noindent\rule[0.25\baselineskip]{\textwidth}{0.5pt}
0. initial. leader (sender) with input $v$ and sends $<$v$>$ to all processes.
\par \textbf{for each} process $p$ (including the leader): echo=ture, vote=true\\
1. Phase 1: on receiving <v> from leader
\par\quad \textbf{if} echo == true: \textbf{send} <echo, v> to all processes and \textbf{set} echo = false\\
2. Phase 2: on receiving <echo, v> from $N-f$ distinct processes:
\par\quad \textbf{if} vote == true: \textbf{send} <vote, v> to all processes and \textbf{set} vote = false\\
3. Phase 3: on receiving <vote, v> from $f+1$ distinct processes:
\par\quad \textbf{if} vote == true: \textbf{send} <vote, v> to all processes and \textbf{set} vote = false\\
4. Phase 4: on receiving <vote, v> from $N-f$ distinct processes: \textbf{deliver} v.\\
\noindent\rule[0.25\baselineskip]{\textwidth}{1pt}

Next, we explain why this reliable broadcast mechanism works. First, we check the validity property. If the sender is correct, it will send $v$ to everyone, then all correct processes will send <echo, v> and every correct one will eventually receive at least $N-f$ echoes for $v$ and at most $f$ echoes for other values. Therefore, all correct processes will send <vote, v> in step 2 and will receive $n-f$ <vote, v> and at most $f$ votes for other values. Hence, all correct processes will eventually deliver $v$.

Next, we verify the totality/agreement property. We first prove by contradiction that no two correct processes will vote for conflicting values: consider two votes sent in phase 2 for $v$ and $v'$ ($v\neq v'$) from processes $a$ and $b$, respectively. Process $a$ must have seen $N-f$ echoes for $v$, and process $b$ must have observed $N-f$ echoes for $v'$. However, this is impossible due to quorum intersection, as $2\times(N-f)\geq f+1$, which means at least $f+1$ processes must have sent an echo to both $v$ and $v'$, which is a contradiction. Now, we know that correct processes only vote for the same value. Therefore, if a correct process delivers a value (i.e., agrees on a value), it must have seen $N-f$ votes, of which at least $N-f-f\geq f+1$ votes come from correct processes. Thus, every correct process will eventually deliver the same value $v$, either due to observing $N-f$ echoes for $v$ or due to seeing $f+1$ votes for $v$ from $f+1$ correct processes. Otherwise, no correct process will ever deliver/agree on any value. Readers can refer to ~\cite{bracha1987asynchronous} for the complete correctness proof.


By utilizing the reliable broadcast primitive, the power of faulty (Byzantine) nodes is restricted. Hence, Bracha's protocol improves upon Ben-Or's protocol by reducing the required number of processes from $N>5f$ to $N>3f$.

\subsection{State Machine Replication}
\label{sect:SMR}
Although the distributed primitives introduced in Sections \ref{sect:timing assumption}--\ref{sect:earlyrandom} can address the distributed consensus problem to some extent, designing a distributed system with consensus remains inherently complex. This complexity arises from the diverse properties of each primitive, which often leads to a lack of a unified design standard. To address this challenge, the concept of State Machine Replication (SMR) has been proposed as a general abstraction and framework for distributed systems. By modeling distributed systems as deterministic state machines, SMR provides a standardized approach to consensus design. This abstraction allows researchers and developers to simplify the intricate process of designing distributed consensus protocols into the replication and synchronisation of state machines, thereby promoting both theoretical understanding and practical implementation.

To begin with, we first consider the most simplified SMR model with crash failure and a synchronous network. The notion of SMR first appeared in a report~\cite{rfc677}. The authors aimed to keep different copies of a database eventually consistent while allowing different copies to update operations independently. The solution to this problem was first introduced in Lamport's paper~\cite{lamport1978time}, which is notable for defining logical time and causality in distributed systems. However, it also introduced the insight that by applying commands in the same order at all the copies in a distributed system, the copies can remain consistent with each other. In other words, each replica contains a local copy of a state machine. If initialised to the same initial state, SMR is achieved when different replicas execute the same commands in the exact same order on their local copies. The algorithm that implements a universal state machine in a distributed system by replication is called the SMR algorithm. The first solution for implementing SMR~\cite{lamport1978time} did not consider failures. Algorithms addressing failures were proposed later. The well-known the crash-fault-tolerant algorithm Paxos~\cite{lamport1998part} and the Byzantine-fault-tolerant algorithm PBFT~\cite{castro1999practical} were introduced successively.

Generally, there are two approaches to implementing an SMR: consensus protocols and total ordering protocols. These two approaches are essentially equivalent. We provide the definition of a total ordering protocol:

\begin{definition}
\textbf{Total Ordering Protocols.} A total ordering protocol is a protocol in which processes broadcast messages to one another, while the following properties are satisfied:
\begin{itemize}
\item \textbf{Agreement.} If two (correct) processes deliver the same two messages, they deliver them in the same order.
\item \textbf{Validity.} If a (correct) process delivers a message, it must be a message sent by some process.
\item \textbf{Termination.} A message delivered by a correct process is eventually delivered to all correct processes.
\end{itemize}
\end{definition}


A total ordering protocol only needs to start once to build an SMR, as it delivers consecutive messages (commands). Once these commands are delivered in order and executed by each process in the same sequence, SMR is built. In contrast, consensus protocols need to be triggered for each time slot in the sequence of commands for SMR. In short, SMR can also be built on ``repeated consensus''. In another word, the total ordering protocol is a service to deal with consecutive requests, while consensus could be used once for each request, and needs to be triggered for every message. A total ordering protocol can be built out of a consensus protocol and vice versa.

We then give the definition of a related broadcast primitive.

\begin{definition}
\label{def:atomic}
\textbf{Atomic Broadcast, a.k.a. Total Order broadcast.} Atomic Broadcast is a type of broadcast primitive \gray{that} where all correct processes receive the same set of messages in the same order while satisfying the following properties:
\begin{itemize}
\item \textbf{Validity.} If a correct process broadcasts a message, all correct processes eventually receive it.
\item \textbf{Agreement.} If a correct process receives a message, then all correct processes eventually receive it.
\item \textbf{Integrity.} A message will be received by each process only once, and only if it was broadcasted previously. 
\item \textbf{Total order.} If two processes $p1$ and $p2$ deliver two messages $m1$ and $m2$ such that one correct process receives $m1$ before $m2$, then every other correct process must receive $m1$ before $m2$.
\end{itemize}
\end{definition}

Note that Atomic Broadcast can be used to realise SMR with an even stronger guarantee by ensuring that all broadcasted messages are included. It is a sender-receiver model broadcast primitive where each process has no an initial value except the sender. Similar to the total ordering protocol, atomic broadcast can be converted to consensus, and conversely, consensus can be reduced to atomic broadcast. 


Although synchronous network assumption simplifies the SMR problem, it is not practical as the assumption is unrealistic for real computing environment. In addition, recall that the FLP impossibility result shows that solving deterministic consensus problems in an asynchronous network is impossible. Therefore, most protocols provide only agreement (safety) during asynchrony and make one of the following trade-offs for termination (liveness):
\begin{itemize}
  \item Guarantees termination only when the network is partial synchronous (the network will sometimes be synchronous anyway, e.g., PBFT).
  \item Guarantees termination with probability 1 (e.g., Ben-Or's consensus).
\end{itemize}

Here we introduce safety and liveness properties~\cite{cs452_consistency}:
\begin{itemize}
\item \label{notion:safety}\textbf{Safety:} A safety property states that something ``bad'' never happens. In a BFT context, this means behaving like a centralised implementation that executes operations atomically one at a time, i.e., the nodes agree on a valid value proposed by one of the nodes. \item \label{notion:liveness}\textbf{Liveness:} A liveness property states that something ``good'' eventually happens. In a BFT context, the nodes eventually reach agreement, i.e., the system must make progress.
\end{itemize}

\section{Practical BFT in Partial Synchronous Network}
\label{sect:practicalbft}
As introduced in Section \ref{Sect:FLP}, no deterministic consensus algorithm exists in a fully asynchronous network, and two trade-off schemes are used to address this problem. In this section, we focus on deterministic protocols in a partial synchronous network. Most practical BFT protocols, including PBFT, provide the safety property consistently, but liveness is only guaranteed when the network is synchronous (after applying the Global Stabilisation Time, GST). These protocols are generally leader-based, meaning that one of the replicas acts as a leader while others are followers. This kind of protocol usually has three sub-protocols: the normal-case agreement protocol, the view change protocol, and the garbage collection protocol (checkpoint/snapshot). A ``view'' is a configuration of replicas, i.e., who is the leader and who are the followers. We focus on the former two protocols as they are important to safety and liveness. If the leader is correct, followers run the normal-case agreement protocol, and when the leader is suspected to be faulty, followers run the view change protocol to elect a new leader. If a sufficient number of followers trigger the view change protocol, a correct leader will be elected.

In this section, we introduce the most widely adopted type of BFT (Section~\ref{sect:mostpopular}) and analyse the operation of the most prominent BFT-SMR protocol, PBFT. We also examine quorum-based BFT solutions and BFT-SMR protocols optimized with speculative execution. Furthermore, we explore various optimisations for BFT-SMR, including scalable BFT solutions (Section~\ref{sect:scalingbft}) and approaches that leverage trusted components (Section \ref{sect:bfttrustcomp}), providing a detailed explanation of their mechanisms and effectiveness.Then in Section \ref{sect:partially sync BFT for blockchain}, we examine recent prominent BFT protocols specifically designed for blockchain scenarios, analysing how their underlying design logic align with traditional BFT approaches and why they are better suited for blockchain applications.

\subsection{The Most Popular Type of BFT: Broadcast and Message-Passing-based BFT-SMR}
\label{sect:mostpopular}
As discussed in Section \ref{sect:SMR}, the original SMR framework, introduced by Lamport \cite{lamport1978time}, was designed to handle crash failures. This framework was subsequently extended to address Byzantine failures, leading to the development of BFT-SMR.
Broadcast-based message-passing BFT generally consist of several one-to-all and all-to-all phases, indicating that in these protocols, every process needs to exchange information with others through a communication channel. Message-passing is the most commonly used form of BFT protocols. This category of BFT protocols is divided into different phases, and each process gathers information through different phases. The follower processes also need to monitor the behaviour of the leader to ensure that current leader is correct.

\subsubsection{BFT-SMR-based Total Ordering Protocol: PBFT}\ 
\label{sect:pbft}

\noindent PBFT is the first and perhaps the most instructive practical BFT protocol to achieve SMR in the presence of Byzantine nodes. Only after its development did people realise BFT is not only a complex problem of academic interest but also a practically solvable issue.

We first examine the cryptography assumptions in PBFT. In this protocol, cryptographic techniques are used to prevent spoofing and replays and to detect corrupted messages. Each PBFT message contains public-key signatures, message authentication codes (MACs), and message digests generated by collision-resistant hash functions. All replicas know each other's public keys to verify signatures, meaning that a pre-set public key infrastructure (PKI) is needed. PBFT assumes that although the adversary is strong enough to coordinate faulty processes, it is bounded by computing power and cannot subvert the cryptographic techniques. This means, for example, the adversary cannot produce a valid signature of a non-faulty node or break the hash (digest) function. In early consensus protocols, there was an assumption that the receiver can distinguish the message sender, e.g., by Media Access Control (MAC) address and IP. Note this assumption is different from the cryptographic assumptions in PBFT. For example, a process in PBFT observing two messages from the same sender with different values but valid signatures can prove the sender is faulty, but identifying the sender does not mean faulty messages equivocal messages can be detected.

Other assumptions employed by PBFT include: replicas are connected by a network, where an adversary can coordinate faulty nodes, delay communication, or delay correct nodes in order to cause the most damage to the replicated service~\cite{castro1999practical}. The network model is mostly asynchronous, meaning the delay between the time $t$ when a message is sent and the time it is received by its destination does not increase indefinitely. However, the asynchronous network also has periods of synchrony to ensure liveness. The bound on the number of faulty nodes is $f\leq \frac{N-1}{3}$, where $N$ is the total number of nodes.

Before exploring the agreement protocol of PBFT, readers should be aware that PBFT is a total ordering protocol in which the safety property is ensured by guaranteeing all non-faulty replicas agree on a total order for the execution of requests despite failures. Recall that a total ordering protocol is equivalent to a consensus protocol but differs in operation when building an SMR. Now, we examine how PBFT achieves agreement on a unique order of requests within a view (when the leader is correct). Our introduction of PBFT is based on the simplified version (without checkpoint)~\cite{wattenhofer2019blockchain}.

\begin{figure}[h]
 \centering
 \includegraphics[height=1.5in]{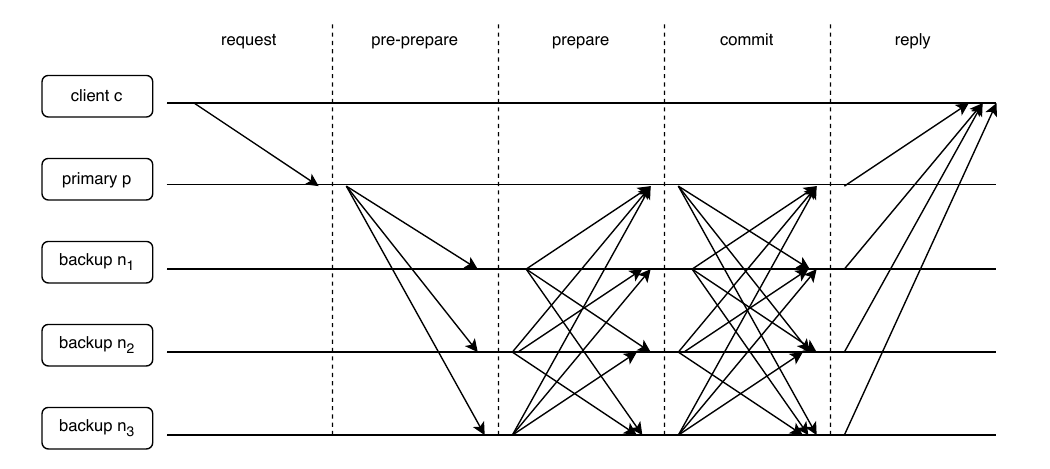}
 \caption{Agreement protocol in PBFT }
 \Description{PBFT.}
\end{figure}

\noindent\rule[0.25\baselineskip]{\textwidth}{0.5pt}
\begin{enumerate}
  \item The leader $l$ receives a request $r$ from a client, picks the next sequence number $s$, and sends \textit{pre-prepare(s,r,l)} to all followers, informing them of the intended execution of the request with the specified sequence number.
  \item For a backup follower $b$, upon receiving \textit{pre-prepare(s,r,l)}, if $l$ is the leader and $b$ has not yet accepted a \textit{pre-prepare} for the sequence $s$, it sends \textit{prepare(s,r,b)} to all other processes to confirm they agree with the leader's suggestion. Once the backup $b$ has sent \textit{prepare}, we say $b$ is pre-prepared for $(s,r)$.
  \item \label{step:2} After being pre-prepared for $(s,r)$, a backup follower $b$ waits until it collects $2f$ \textit{prepare} (including its own) and, together with the \textit{pre-prepare} \gray{one}, they form a \textit{certificate of prepare} with $|\mathrm{certificate\ of\ prepare}|=2f+1$.
  \item Once the certificate is gathered, a process $p$ (leader and follower) sends \textit{commit(s,p)} to all processes.
  \item Process $p$ waits until $2f+1$ \textit{commit} messages matching sequence $s$ have been accepted, and executes request $r$ after all previous requests with lower sequence numbers have been executed.
  \item Process $p$ sends \textit{reply $(r,p)$} to the client.
\end{enumerate}
\noindent\rule[0.25\baselineskip]{\textwidth}{1pt}

Then, we analyse how this agreement protocol works within a view $v$ when the leader is correct.

\begin{lemma}
\label{lemma:1}
  (PBFT: Agreement within a view) Within a view when the leader is correct, if a process gathers a certificate for pre-prepare $(s,r)$, where $s$ is the sequence number and $r$ is the request, then no process can gather a certificate for pre-prepare $(s,r')$ with $r \neq r'$. This ensures that PBFT achieves a unique sequence number within a view.
\end{lemma}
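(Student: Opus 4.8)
The plan is to combine the unforgeability of the correct leader's \textit{pre-prepare} with a quorum-intersection argument on the \textit{prepare} messages, and to derive a contradiction from assuming two competing certificates exist.

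First I would make the structure of a certificate explicit: a \textit{certificate of prepare} for $(s,r)$ consists of the single message \textit{pre-prepare}$(s,r,l)$ issued by the leader $l$ of view $v$, together with $2f$ matching \textit{prepare}$(s,r,\cdot)$ messages from distinct backups, for a total of $2f+1$ messages, all carrying valid signatures (by the PKI and the unforgeability assumption of PBFT). Suppose, for contradiction, that some process gathers a certificate $C$ for $(s,r)$ and some process gathers a certificate $C'$ for $(s,r')$ with $r \neq r'$.

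The easy half uses only the fact that $l$ is correct in view $v$: a correct leader follows the protocol and therefore signs and sends at most one \textit{pre-prepare} for sequence number $s$, and by unforgeability no other party can produce a message of the form \textit{pre-prepare}$(s,\cdot,l)$; hence $C$ and $C'$ must both contain that one message, which already pins down $r = r'$. To make the argument robust (and to set up the later cross-view case), I would also give the quorum-intersection version on the \textit{prepare} supports: the $2f+1$ senders backing $C$ and the $2f+1$ senders backing $C'$ lie among $N = 3f+1$ processes, so by inclusion--exclusion their supports intersect in at least $2(2f+1) - (3f+1) = f+1$ processes, and hence include a correct process $p$. But a correct backup sends a \textit{prepare} for sequence $s$ at most once (it refuses once it has accepted a \textit{pre-prepare} for $s$), and, as above, a correct leader emits at most one \textit{pre-prepare} for $s$; thus $p$ cannot have contributed to certificates for both $r$ and $r'$, contradicting $r \neq r'$. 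Uniqueness of the sequence number within the view follows.

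The step requiring the most care is the bookkeeping rather than any deep idea: one must state clearly that the \textit{pre-prepare} is one of the $2f+1$ members of a certificate, that the members are distinct senders, and that "correctness" of a process bounds by one the number of messages it contributes per sequence number; and one must lean on the standing convention $N = 3f+1$ (equivalently $f \le (N-1)/3$ with equality), which is exactly what makes $2(2f+1) > N$ and forces the intersection to be non-empty, indeed of size at least $f+1$, so that it necessarily contains a correct process.
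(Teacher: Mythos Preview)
Your proposal is correct and follows essentially the same approach as the paper: assume two certificates for $(s,r)$ and $(s,r')$, use quorum intersection on the $2f+1$ signers to find a correct process in both, and derive a contradiction from the fact that a correct leader sends at most one \textit{pre-prepare} and a correct backup at most one \textit{prepare} per sequence number. Your additional ``easy half'' observation---that the correct leader's single signed \textit{pre-prepare} already pins $r=r'$ before any quorum argument---is a clean shortcut the paper does not make explicit, but the core reasoning is the same.
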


\begin{proof}
  Assume two (even one is enough) processes gather certificates for $(s,r)$ and $(s,r')$. Because a \textit{certificate} for prepare contains $2f+1$ messages, a correct process must sent a \textit{pre-prepare} or \textit{prepare} message for $(s,r)$ and $(s,r')$, respectively (quorum intersection). However, a correct leader only sends a single \textit{pre-prepare} for each $(s,r)$, and a correct follower only sends a single \textit{prepare} for each $(s,r)$. This creates a contradiction.
\end{proof}

One may wonder why the \textit{commit} message is necessary, since \textit{pre-prepare} and \textit{prepare} seem sufficient to tolerate Byzantine faults as proven in Lemma \ref{lemma:1}. Note that Lemma \ref{lemma:1} only guarantees agreement on the unique order within a single \textit{view} when the leader is correct, whereas \textit{commit} protects agreement \textit{across views}, i.e., when the leader is faulty and a new one is elected.

If the leader is faulty, the system needs a view change to move to the next leader's view to continue progress. In the original design, a faulty-timer is embedded in each follower to detect the failure of the leader. The fault of the leader could also include malicious behaviors, e.g., sending conflicting sequences. Now we introduce how PBFT handles the view change phase. We first describe the protocol for followers during \textit{view change}.

\noindent\rule[0.25\baselineskip]{\textwidth}{0.5pt}
\begin{enumerate}[start=7]
  \item When a follower backup $b$'s local timer expires or it detects faulty behaviour from the leader in view $v$, it enters \textit{view change phase} and stops accepting \textit{pre-prepare/prepare/commit} messages from view $v$.
  \item $b$ Gathers $P_c$, the set of all \textit{certificate of prepare} that $b$ has collected since system started. Note this could be optimised by \textit{checkpoint}, the certificates since system started could be reduced to certificate since last valid checkpoint in this way. Here we only consider the simplest mechanism to help better understanding. 
  \item $b$ sends \textit{view-change(v+1, $P_c$, b)}.
\end{enumerate}
\noindent\rule[0.25\baselineskip]{\textwidth}{1pt}

The protocol for the new leader $p_n$ in view $v+1$ is:

\noindent\rule[0.25\baselineskip]{\textwidth}{0.5pt}
\begin{enumerate}[start=10]
\item Accept $2f+1$ \textit{view-change} messages(?) (possibly including $p_n$'s) in a \textit{certificate-for-new-view} set $V$.
\item Gather $O$, which is a set of \textit{pre-prepare(s,r,$p_n$)} messages in view $v+1$ for all pairs $(s,p)$, where at least one \textit{certificate for prepare} for $(s,r)$ exists in $V$.
\item Find $s_{max}$, which is the highest sequence number for which $O$ contains a \textit{pre-prepare} message.
\item Add a message \textit{pre-prepare(s',null,$p_n$)} to $O$ for every sequence number $s'<s_{max}$ for which $O$ does not have a \textit{pre-prepare} message.
\item Send a \textit{new-view ($V, O, p_n$)} message to all nodes and start processing requests for view $v+1$ starting from sequence number $s_{max}+1$.
\end{enumerate}

\noindent\rule[0.25\baselineskip]{\textwidth}{1pt}

After the new leader is elected, other backups $b$ need to check if the newly elected leader is correct.

\noindent\rule[0.25\baselineskip]{\textwidth}{0.5pt}
\begin{enumerate}[start=15]
\item Upon receiving a \textit{new-view $(V, O, p_n)$} , a backup follower stops accepting any messages from the previous view $v$, and sets its local view to $v+1$.
\item \label{step:16}Check if the \textit{new-view} message is from the new leader, and check if $O$ is correctly constructed from $V$. If yes, respond to all the \textit{pre-prepare} messages in $O$ just as in the last view $v$ and start accepting messages from the current view $v+1$. Otherwise, if the current leader is still faulty, trigger a new \textit{view change} to $v+2$.
\end{enumerate}

\noindent\rule[0.25\baselineskip]{\textwidth}{1pt}

Now we check how PBFT guarantees agreement for a unique sequence number during the view change phase.
\begin{lemma}
\label{lemma:3.2}
(PBFT: Agreement during view change). If a view change is triggered in view $v$ and the new view is $v'>v$ in which the leader is correct, if a correct process executes a request $r$ in $v$ with sequence number $s$, then no correct process will execute any request $r'\neq r$ with sequence number $s$ in $v'$.
\end{lemma}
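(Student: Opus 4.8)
The plan is to show that the \textit{certificate of prepare} which witnessed the execution of $r$ at sequence number $s$ in view $v$ is necessarily carried into every subsequent view‑change set — in particular the one that the correct leader of $v'$ assembles — and that it is at least as ``recent'' as any other certificate for sequence $s$ that could appear there, so that the correct leader of $v'$ is forced to re‑propose exactly $(s,r)$. First I would unpack the hypothesis: if a correct process executed $r$ with sequence number $s$ in view $v$, then by Step~5 of the agreement protocol it had accepted $2f+1$ matching \textit{commit} messages for $(s,r)$ in view $v$; at least $f+1$ of these were sent by correct processes, and by Step~4 each such correct process sent its \textit{commit} only after gathering a \textit{certificate of prepare} for $(s,r)$. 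Let $A$ be the $(2f+1)$-process set whose commits were collected and $C\subseteq A$ the correct processes in it; then $|C|\ge f+1$, and since this simplified presentation has no checkpoint, every member of $C$ retains its \textit{certificate of prepare} for $(s,r)$ forever inside its local $P_c$.

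The core step is a quorum‑intersection argument run along the chain of views from $v$ up to $v'$. At the transition into any view $w$ with $v<w\le v'$, the new leader builds a \textit{certificate‑for‑new‑view} set $V_w$ from $2f+1$ \textit{view‑change} messages sent by some $(2f+1)$-process set $B_w$. Since $|A|=|B_w|=2f+1$ and $N=3f+1$, we get $|A\cap B_w|\ge f+1$, so $A\cap B_w$ contains a correct process $q$; being a correct member of $A$, $q$ lies in $C$, hence the $P_c$ attached to $q$'s \textit{view‑change} message contains a \textit{certificate of prepare} for $(s,r)$. Therefore $V_w$ always contains such a certificate, and in particular the correct leader of $v'$ must put \textit{pre-prepare}$(s,r,p_n)$ into its $O$ set at sequence $s$.

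It remains to exclude a conflicting certificate for $(s,r')$ with $r'\neq r$ from dominating sequence $s$ in $V_{v'}$. For view $v$ itself this is exactly the argument of Lemma~\ref{lemma:1}: a certificate for $(s,r')$ in view $v$ consists of $2f+1$ \textit{pre-prepare}/\textit{prepare} messages, which by intersection with the certificate held by some $q\in C$ would force a correct process to sign two different prepares for the same (sequence, view) pair, a contradiction. For an intermediate view $w$ with $v<w<v'$ I would argue by induction on $w$ that no correct process ever signs \textit{prepare} for $(s,r')$ there: a correct backup signs only in response to a \textit{pre-prepare}$(s,r',\cdot)$ contained in a \textit{new-view} message whose $O$ it has verified (Step~\ref{step:16}) to be correctly built from $V_w$; but $V_w$ contains a certificate for $(s,r)$ (previous paragraph) and, by the induction hypothesis, no certificate for any $(s,r')$ from a view $\ge v$, while certificates from views $<v$ are strictly older and lose the highest‑view tie‑break — so a correctly built $O$ re‑proposes only $(s,r)$. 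Hence no conflicting certificate can be formed in any view $\le v'$. Consequently, in view $v'$ the correct leader's $O$ re‑proposes $(s,r)$ at sequence $s$, correct backups re‑prepare, re‑commit and re‑execute only $r$ there, and combined with Lemma~\ref{lemma:1} this yields agreement on sequence $s$ across both views.

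I expect the main obstacle to be the interdependence exploited in the third paragraph: ``no conflicting certificate exists in views $\le w$'' and ``no correct process signs a conflicting prepare in view $w$'' mutually reinforce each other, so they cannot be established separately and must be carried through a single joint induction on the view index, with the base case being Lemma~\ref{lemma:1} together with the $|A\cap B_w|\ge f+1$ counting. Two further points of care: the highest‑view tie‑break used when a view‑change set happens to contain a stale pre‑$v$ certificate is elided in the simplified protocol above and should be made explicit; and because this presentation omits checkpoints, one relies on the fact that the witness set $C$ never shrinks (in full PBFT the certificate attached to the latest stable checkpoint plays the same role).
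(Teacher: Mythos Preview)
Your approach is essentially the paper's: both hinge on the quorum‑intersection between the $2f{+}1$ processes that sent \textit{commit} for $(s,r)$ in $v$ (your $A$, the paper's $S_1$) and the $2f{+}1$ processes whose \textit{view‑change} messages form the \textit{certificate‑for‑new‑view} (your $B_w$, the paper's $S_2$), yielding a correct witness that forces the certificate for $(s,r)$ into $V$ and hence $(s,r)$ into $O$.

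Your write‑up is strictly more complete than the paper's. The paper's proof does only the single‑step case (implicitly $v'=v{+}1$): it intersects $S_1$ with $S_2$, concludes the new leader places $(s,r)$ in $O$, and then argues separately that correct followers can only form a prepare certificate for entries of $O$. It never carries out the induction along intermediate views $v<w<v'$, never explicitly rules out a conflicting certificate for $(s,r')$ appearing in $V$, and makes no mention of a highest‑view tie‑break (indeed the simplified Step~11 as written would put \emph{every} certified pair into $O$, which is exactly the gap you flag). So your joint induction and your caveats about the missing tie‑break and the checkpoint‑free persistence of $C$ go beyond what the paper proves; what the paper buys in exchange is brevity and an easier narrative for a survey setting.
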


\begin{proof}
If any correct node executed request $r$ with sequence number $s$, there must be at least $2f+1$ processes that have sent a \textit{commit} message regarding $(s,r)$; we denote them by the set $S_1$ (this is why a commit phase is required). The correct processes in $S_1$ must have all collected a \textit{certificate-for-prepare}. When going through the view change phase, a \textit{certificate-for-new-view} must contain \textit{view-change} messages from at least $2f+1$ processes, denoted by $S_2$. Recall the notion of a quorum system and quorum intersection; $S_1$ and $S_2$ are two quorums which intersect with at least one correct process $c_p$. ($2\times(2f+1)-(3f+1)=1$). Therefore, there is at least one correct process $c_p\in S_1\cap S_2$ that has collected a \textit{certificate-for-prepare} regarding $(s,r)$ and whose \textit{view change} message is contained in $V$. Therefore, if some correct node executes request $r$ with sequence $s$, then $V$ must contain a \textit{certificate-of-prepare} regarding $(s,r)$ from $c_p$. Thus, a newly elected correct leader $p_n$ sends a \textit{new-view($v', V, O, p_n$)} message where $O$ contains a \textit{pre-prepare} message for $(s,r)$ in the new view $v'$.
\end{proof}

Correct followers will enter the new view $v'$ only when the \textit{new-view} message for $v'$ contains a valid \textit{certificate-for-new-view} and $O$ is constructed from $V$ as introduced in step (\ref{step:16}). They will then respond to the messages in $O$ before they start accepting \textit{pre-prepare} messages in $v'$, as introduced in step (\ref{step:16}). Therefore, for any sequence numbers that are included in $O$, correct followers will only send \textit{prepare} messages to respond to the \textit{pre-prepare} messages in $O$. Consequently , in the new view $v'$, correct followers can only collect a \textit{certificate-for-prepare} for $(s,r)$ that appears in $O$, while for some $(s,r')$ where $r'\neq r$, no process can collect a \textit{certificate-for-prepare} because there is a lack of \textit{pre-prepare} messages for $r'$ (recall that in step (\ref{step:2}), a \textit{certificate-for-prepare} needs $2f$ \textit{prepare} messages together with the \textit{pre-prepare} message).

\begin{theorem}
\label{theorem:3.1}
PBFT guarantees a unique sequence number.
\end{theorem}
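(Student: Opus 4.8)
The plan is to obtain Theorem~\ref{theorem:3.1} as a corollary of the two agreement lemmas already proved, via an induction over views. I would first fix the formalisation: ``unique sequence number'' means that whenever a correct process executes a request $r$ with sequence number $s$ in some view $v$ and a correct process executes a request $r'$ with the same $s$ in some view $v'$ (w.l.o.g.\ $v\le v'$), then $r=r'$. The common foundation is that execution is gated by the commit phase: a correct process executes $(s,r)$ only after collecting $2f+1$ \textit{commit} messages for sequence $s$ (step~5), and a correct process emits such a \textit{commit} only after assembling a \textit{certificate-of-prepare} for $(s,r)$ (steps~3--4); since all those \textit{commit}s are from the same view, Lemma~\ref{lemma:1} pins them to a single request. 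Hence execution of $(s,r)$ in view $v$ witnesses that at least $f+1$ correct processes hold a \textit{certificate-of-prepare} for $(s,r)$ obtained in view $v$.

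For the case $v=v'$ I would invoke Lemma~\ref{lemma:1} directly: two executions would require coexisting \textit{certificates-of-prepare} for $(s,r)$ and $(s,r')$, and the quorum-intersection argument forces some correct follower to have issued two conflicting \textit{prepare} messages for the same sequence number, which is impossible, so $r=r'$. (I would remark that this quorum argument does not actually need the within-view leader to be honest, only that correct followers sign at most one \textit{prepare} per sequence number, so it also covers the initial view.) For the case $v<v'$ I would set up an inductive invariant over the views $w$ with $v<w\le v'$ that are actually \emph{entered} by a correct process via an accepted \textit{new-view} message $\langle V,O,\cdot\rangle$: (i) $O$ contains a \textit{pre-prepare} for $(s,r)$, and (ii) no \textit{certificate-of-prepare} for any $(s,r')$ with $r'\neq r$ can be assembled in view $w$. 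Half (i) is essentially Lemma~\ref{lemma:3.2} together with the discussion after it: quorum intersection between the $2f+1$ committers of $(s,r)$ in view $v$ and the $2f+1$ senders whose \textit{view-change} messages form $V$ yields a common correct process that carries its \textit{certificate-of-prepare} for $(s,r)$ inside its $P_c$, so $V$ contains it and a correctly constructed $O$ must echo a \textit{pre-prepare} for $(s,r)$; a correct follower entering $w$ responds only to the \textit{pre-prepare}s of $O$ for sequence $s$ before accepting any fresh \textit{pre-prepare} (step~(\ref{step:16})), so at most $f$ Byzantine processes can ever \textit{prepare} a conflicting $(s,r')$, short of the $2f+1$ needed for a certificate, giving half (ii), whence no correct process ever \textit{commit}s or executes $(s,r')$ in view $w$. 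Views whose \textit{new-view} is rejected by correct processes contribute nothing, since by step~(\ref{step:16}) those processes refuse to act there and move on. Chaining the invariant from $v+1$ to $v'$ contradicts the assumed execution of $(s,r')$ in $v'$, so $r=r'$.

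The step I expect to be the main obstacle is making this invariant airtight across \emph{arbitrarily many} intervening views rather than the single view-change hop that Lemma~\ref{lemma:3.2} states: one must argue that the \textit{certificate-of-prepare} for $(s,r)$ held by the commit quorum is re-propagated through the $P_c$ sets at \emph{every} subsequent view change, and simultaneously that a faulty new leader cannot slip a \textit{pre-prepare} for a conflicting $(s,r')$ into $O$ because no $V$ assembled at a view $>v$ can contain a \textit{certificate-of-prepare} for $(s,r')$ --- which is precisely clause (ii), so the two halves of the invariant must be proved together in the induction. A secondary care point is handling views with faulty or silent leaders cleanly (showing correct processes execute nothing there, so they can be skipped in the chain) and confirming that, in the simplified protocol where \textit{commit} carries only $s$, ``a correct process executes $(s,r)$'' is well-defined via the unique in-view \textit{certificate-of-prepare} guaranteed by Lemma~\ref{lemma:1}.
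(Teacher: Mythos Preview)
Your approach is essentially the paper's: it proves Theorem~\ref{theorem:3.1} simply by citing Lemma~\ref{lemma:1} for the within-view case and Lemma~\ref{lemma:3.2} for the view-change case, and declares the theorem to follow by combining them. Your decomposition into $v=v'$ and $v<v'$ and your appeal to the two lemmas match this exactly.

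Where you go further is in treating the multi-hop case explicitly. The paper's Lemma~\ref{lemma:3.2} is stated for a single transition from a view $v$ to a later view $v'$ with a correct leader, and the paper's two-line proof of the theorem does not spell out the induction over a chain of intervening (possibly faulty-leader) views. Your inductive invariant --- that the certificate for $(s,r)$ is carried forward through every $P_c$ and that no conflicting certificate can arise in any entered view --- is precisely what is needed to make that ``combining'' step rigorous, and your identification of this as the main obstacle is well placed. So your plan is the same in spirit but strictly more careful than the paper's own argument; there is no gap in your proposal.
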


\begin{proof}
  When PBFT is going through the normal case in which the leader is correct, by Lemma~\ref{lemma:1}, PBFT guarantees a unique sequence number. When the leader is faulty and a view change is triggered, by Lemma~\ref{lemma:3.2}, PBFT guarantees a unique sequence number through the view change phase. By combining these two lemmas, we obtain Theorem~\ref{theorem:3.1}.
\end{proof}

\textbf{$\triangleright$ Remark.} We have shown that PBFT guarantees safety. Recall the notion introduced in Section~\ref{notion:safety}, that is, as long as nothing bad happens, the correct processes always agree (never disagree) on requests that were committed with the same unique sequence number. It is important to be clear that PBFT is a total ordering protocol for SMR in which all correct processes agree on the unique sequence for different requests as a consecutive service; it is not a Byzantine agreement protocol in which every process has an initial value and all of them finally agree on the same value once. In addition, to achieve liveness, PBFT assumes that the message delays are finite and bounded , which indicates that the network of PBFT is partially synchronous . In other words , in a completely asynchronous network, PBFT might have no progress at all. However, a partially synchronous network is sufficiently practical for deployment.

\subsubsection{Quorum-based Total Ordering Solutions: Q/U and HQ}\ 

\noindent Although PBFT in \ref{sect:pbft} made the consensus solution in distributed system practical, 
\noindent Q/U~\cite{abd2005fault} (Query/Update) points out that the PBFT protocol in \ref{sect:pbft} is not fault-scalable, i.e., its performance decreases rapidly as more faults are tolerated. Therefore, Q/U adopts another mechanism to achieve replication: the operations-based interface, which can achieve SMR in a similar manner (but different from the notion of SMR, as recalled in Section~\ref{sect:SMR}). Q/U has two kinds of operations: queries and updates. The former do not modify objects, but the latter do . To reduce message and communication complexity, Q/U shifts some tasks from replicas to the clients. Unlike PBFT, the clients in Q/U not only send the requests, but are also responsible for selecting the quorum, storing replica histories returned by servers, and dealing with contention. When performing operations on an object in Q/U, clients issue requests to a quorum of servers. If a server receives and accepts a request, it invokes a method on its local object replica. The servers do not directly exchange information as in PBFT. Each time a request is sent by a client, the client needs to retrieve the object history set. This set is an array of replica histories indexed by server and represents the client's partial observation of the global system state at some time point. Quorum intersection guarantees that for any client requests, there is at least one correct node that has the newest system state. Through client-server communication, replica histories are communicated between servers via the client; if some server does not receive some update requests (it is possible because each request is only sent to a quorum by the client), it synchronises the newest system state with the help of the client. In this way, system overhead is transferred from servers to clients.

Q/U also tolerates Byzantine failures (including clients and servers). If a server crashes, some quorums might be unavailable. Consequently , the clients need to find a live quorum to collect a quorum of responses. Cryptography techniques are used to limit the power of faulty components. No server can forge a newer system state that does not exist; the forged faulty information can be detected by any correct client. Therefore, from the perspective of a correct client, as long as enough correct servers exist, it can continue to make progress. If some replicas are not up to date, correct clients will help them reach the current state. If any faulty client only issues update requests to a subset of a quorum, correct clients will repair the consistency by updating correct servers to the current state in a quorum. In addition, any faulty client that fails to issue update requests to all nodes in a quorum can be detected by techniques like lazy verification~\cite{abd2005lazy}. The total order property is guaranteed by the quorum system and cryptography, i.e., at least one correct replica has the correct history due to quorum intersection. This is used in conjunction with a cryptography-based validation mechanism to prevent forged requests and ensure the system agrees on the same sequence for every update request (query requests do not influence system consistency). If contention occurs , correct clients are responsible for the repair operation. Q/U can tolerate up to $f<\frac{N}{5}$ faulty replicas.


To summarise, Q/U relies on correct clients to make progress and is leaderless, meaning it does not contain a leader to accept and forward clients' requests. The clients are responsible for selecting quorum, storing the history, proving the validity of their history, repairing inconsistency, solving contention, etc. This design reduces the overhead when some replicas are faulty because the correct clients only need enough (a quorum) correct replicas' responses, and no all-to-all communication between replicas is needed. However, this design also leads to the inability to batch clients' requests.

HQ~\cite{cowling2006hq} (Hybrid Quorum) argues that although Q/U has better performance with increasing faulty processes than PBFT, it needs more replicas ($5f+1$) to tolerate faulty nodes than PBFT ($3f+1$). In addition, when contention exists, Q/U performs poorly because it resorts to exponential back-off to resolve contention. Therefore, HQ proposes a hybrid quorum replication protocol. In the absence of contention, HQ uses a quorum protocol where reads (a.k.a. queries in Q/U) and writes (updates in Q/U) require different communication between clients and replicas. The read/write operations in HQ are similar to those in Q/U, in which the clients send requests to a quorum, but HQ adds an additional round to make two rounds for the write operation to tolerate $f<\frac{N}{3}$ faulty nodes. Similar to Q/U, the total order property is ensured by quorum intersection together with cryptography techniques. In HQ, \textit{certificates} are used to ensure that write operations are properly ordered. Quorum intersection guarantees that at least some correct replicas have the newest state, and certificates are used to convince other replicas. When contention occurs , HQ utilises a BFT state machine replication protocol to reach agreement on a deterministic ordering of the conflicting requests. Hence, the total order property is always guaranteed as the system makes progress.

\begin{figure}[h]
 \centering
 \includegraphics[width=3.0in]{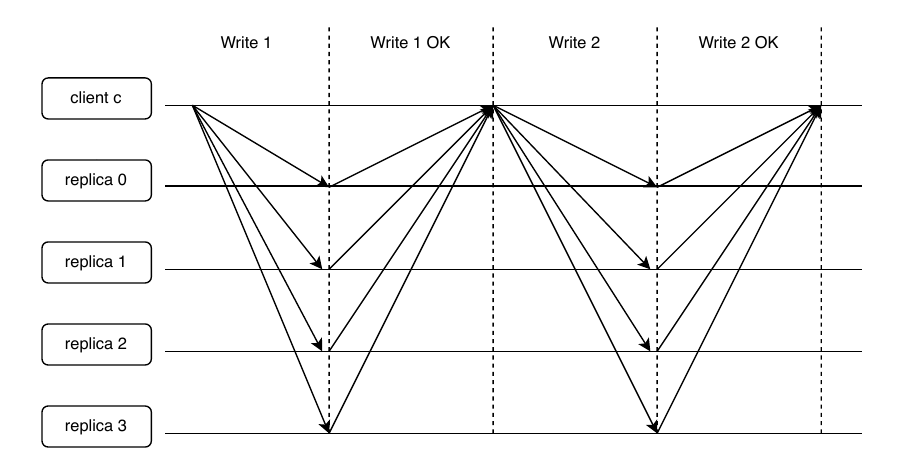}
 \caption{Quorum-based solutions}
 \Description{Quorum}
\end{figure}

\textbf{$\triangleright$ Remark.} The quorum-based solutions reduce the overhead on replicas. These protocols require clients to execute more operations, which helps avoid excessive message communication between replicas, and thus improves performance in the existence of faulty replicas. These protocols rely on correct clients to ensure system progress. The drawback, however, is also caused by their leaderless design: the clients' requests cannot be batched because each client's request is also responsible for dealing with local state updates in replicas and repairing inconsistency through client-server communications. In addition, they do not have a leader funnelling all requests to other replicas. In summary, if accepting $5f+1$ replicas, Q/U is the best choice; otherwise, if accepting $3f+1$ replicas, HQ is the best with a batch size of 1. Otherwise, PBFT is the best option to outperform Q/U and HQ due to its ability to process requests in batches ~\cite{cowling2006hq}.

\subsubsection{Leader-based Total Ordering with SMR design and Speculation: Zyzzyva}\ 

\noindent 
The development of PBFT, as discussed in Section \ref{sect:pbft}, has established a gold standard for the deployment of BFT-SMR, marking a significant milestone in the field of distributed consensus. PBFT's leader-based design, coupled with its view-change protocol, has demonstrated practical feasibility by enabling fault tolerance across all operational phases. However, in real-world applications, faults are often rare or non-existent. Consequently, some practitioners remain hesitant to adopt BFT systems, partly due to the perception that the overheads associated with Byzantine fault tolerance are prohibitively high. Zyzzyva~\cite{kotla2010zyzzyva} is therefore proposed, which is based on the SMR approach with a leader-based design, since quorum-based solutions cannot batch concurrent client requests, which limits throughput. Different from PBFT, which runs an expensive agreement protocol on the requests' deterministic final order before execution, Zyzzyva utilises \textit{speculation} and immediately executes requests. The motivation for this is that, the possibility of the existence of a failure is small in a real environment, and thus Zyzzyva proposes to make fast progress based on the speculation that every node is working properly and only exchanges messages when faults exist.

A response from a replica not only contains an application-level reply, but also includes the history on which the reply depends. This allows the client to determine when a request has completed . A request completes in one of two ways. The first way is the fast case: if the client receives $3f+1$ matching responses, it considers the request complete since all correct nodes must have responded to this request in this case. The second one is the two-phase case: if the number of received matching responses is between $2f+1$ and $3f$, the client gathers $2f+1$ matching responses and forms a \textit{commit certificate}, which includes a cryptographic proof that $2f+1$ replicas agree on a linearizable order of the request and all preceding requests. Due to quorum intersection, the commit certificate of $2f+1$ replicas ensures that no other ordering can get $2f+1$ matching responses to contradict this order. Once $2f+1$ replicas acknowledge receiving the commit certificate, the client considers the request complete. Safety is also guaranteed through clients. A faulty client may alter the commit certificate or fail to send a commit certificate. The altered certificate will be detected and ignored by correct replicas with the property of cryptography. If a faulty client fails to deliver a commit certificate, it may not learn when its request completes, and a replica whose state has diverged from its peers may not discover the situation immediately. However, if in the future a correct client issues a request, it will either complete that request or trigger a view change if it finds an inconsistency (in the history of a response). If the view change is triggered, i.e., the inconsistency causes less than $2f+1$ matching responses, the correct client will resend its request to all replicas, which then forward the request to the primary in order to ensure the request sequence guarantees the request is eventually executed. Additionally, all previous requests are delivered to every replica, and the request is completed. If the primary is faulty and a correct client receives responses indicating inconsistent ordering by the primary, it gathers a proof of misbehaviour and initiates a view change.

\begin{figure}[h]
 \centering
 \includegraphics[height=1.2in]{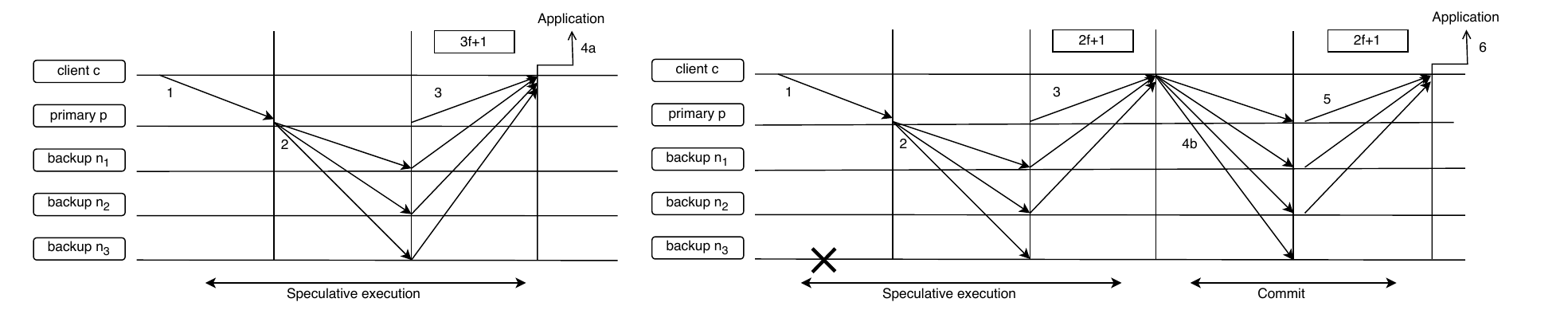}
 \caption{Zyzzyva: Fast case and two-step case}
 \Description{Zyzzyva}
\end{figure}

\textbf{$\triangleright$ Remark.} Zyzzyva follows the leader-based paradigm with an SMR design in which all replicas eventually execute the requests in the same order. It also utilises a similar idea to guarantee total ordering: it ensures a unique sequence within a view and through the view change, which guarantees a unique sequence through the system 's execution . Zyzzyva uses speculation to improve performance when the system is working properly , which is the case during most of the time. However, Zyzzyva utilises heavy cryptography in the existence of a fault to ensure a unique sequence, which causes performance in the faulty case to drop sharply.

\subsection{Make BFT Scaling: Parallel and Sharding/Partition-based BFT}
\label{sect:scalingbft}
Although PBFT have made distributed consensus solution practical, and other optimized BFT solutions have been proposed, as introduced in Section \ref{sect:mostpopular}, BFT-SMR is considered slow as replicas need to agree on a global total order of client requests with the help of the leader node. It is identified in~\cite{aublin2013rbft} that PBFT and other leader-based solutions could be attacked by the leader, i.e., the leader might be maliciously smart and degrade system throughput while not being detected by follower replicas. Aublin \cite{aublin2013rbft} proposed a novel solution called Redundant BFT (RBFT), which executes $f+1$ multiple BFT instances, where each instance has a different leader assigned to it at different machines. Every leader will order the requests, but only the requests ordered by the master leader will be executed. Multiple instances assigned to different leaders make them able to monitor the performance of the master leader. If the performance of the master leader is considered slow, a new master leader will be selected.

Parallel instances are not only used to monitor the performance of the leader, but also deployed to speed up system performance. In this kind of deployment, the motivation is to relax the requirement to execute the client requests following a deterministic sequence, i.e., these approaches usually run concurrent instances to execute non-overlapping requests~\cite{kotla2004high,kapritsos2012all}. The core idea in CBASE \cite{kotla2004high} is to relax the order requirement on state machine replication to allow concurrent execution of independent requests without compromising safety. To accomplish this, the authors introduce a paralleliser that leverages application-specific rules to identify and execute concurrent requests in parallel. By doing so, the replicated system's throughput scales with both the level of parallelism exposed by the application and the available hardware resources. In comparison, Eve~\cite{kapritsos2012all} uses a system-level approach. Their execute-verify approach uses a similar speculation improvement as Zyzzyva: replicas first speculatively execute requests concurrently, and then verify that they have agreed on the state and the output produced by a correct replica. If too many replicas diverge, Eve rolls back and re-executes the requests.


BFT has high client scalability; however, it suffers from low server scalability, i.e., BFT cannot scale to a large number of replicas. Two main streams of scalability solutions are proposed: hierarchical BFT and sharding/partition BFT. The idea of these two kinds of solutions follows the optimisation mechanism of reducing all-to-all messages, which is a common approach in distributed systems~\cite{cachin2011introduction}. Hierarchical BFT protocols usually involve several phase(s) of BFT instances at different hierarchies~\cite{amir2006scaling,wu2021me,arun2019ez,king2011breaking}. The representative example Steward~\cite{amir2006scaling} divides replicas into several groups, where each group has $n>3f+1$ replicas. One group serves as the leader group, and will be changed by a view change protocol if it is detected to be faulty. A client inside a group sends requests to the representative node in its group if it is not in the leader group. The requests will be forwarded to the leader group and will be assigned a sequence number through PBFT in the leader group. The order will be sent back to all groups and finally every replica executes the requests in the same order. The core idea of Steward is that each group is fault-tolerant and could be logically regarded as a replica, thus reducing all-to-all communication to enhance scalability. It is worth noting that the fault-tolerant capacity is reduced in this pattern, as it requires each group to have $3f+1$ replicas, which is essentially a more strict assumption than requiring the total system to have $3f+1$ replicas. EZBFT~\cite{arun2019ez} utilises a leaderless mechanism: each replica can receive requests from clients and assign a sequence number to each request, then forward the requests to other replicas. Each replica has an instance space which contains all the sequence numbers it can assign to a request. The conflict is resolved at the client side: like Q/U and HQ, a request from a client not only contains the request itself, but also includes the dependencies of this request. The dependencies are used to identify if this request does not conflict with others (e.g., a read request) and can be executed immediately. If a request might have a conflict, replicas will execute it in sequence. In ME-BFT~\cite{wu2021me}, the PBFT protocol is only deployed within a fixed number of full replicas, while other light replicas follow a full replica as a representative and rely on it to forward their requests to other full replicas. If a light replica detects that the full replica it relies on is faulty, it will change its representative.

The above design also follows the idea of reducing all-to-all communication overhead to achieve scaling, but the request data needs to be Conflict-free Replicated Data Types (CRDT)\cite{shapiro2011conflict}. In \cite{king2011breaking}, the authors design a novel protocol that replaces electing a leader with electing a random vector; combined with secret-sharing, the adversary needs to collude with more replicas to influence the system, and the possibility of this happening decreases with an increasing number of replicas. The messages sent by each replica are reduced from $O(n)$ to $O(\sqrt n)$ to directly reduce communication overhead and achieve scaling. SBFT\cite{gueta2019sbft} also considers reducing communication complexity. In order to avoid complex all-to-all communication, SBFT assigns one or several nodes as collectors to collect a threshold signature and distribute the aggregated signature to every participant, which reduces the communication complexity from linear to constant. It also uses redundant servers for the fast path , similar to Zyzzyva; it allows a fast agreement protocol when all replicas are non-faulty.

Sharding/partition based BFT divides replicas into several partitions according to different application states to achieve scaling~\cite{bezerra2014scalable,eischer2017scalable,li2017enhancing}. The core idea of these schemes is to improve performance by allowing write requests to be executed concurrently by different partitions. If write requests cause contention, i.e., different write operations access the same partition or a write request accesses multiple partitions~\cite{li2017enhancing}, it needs to be resolved. When contentions need to be resolved frequently, the performance will degrade due to heavy conflict resolution procedures. Some solutions have been proposed to mitigate the conflict problem; for example, \cite{hong2023prophet} designed a conflict-free sharding-based Byzantine-tolerant total ordering protocol. It first pre-executes cross-shard requests, then delegates a random shard to sequence the pre-executed requests for a global order based on prerequisite information. Cross-shard verification is used to verify the pre-execution result. Mir-BFT\cite{stathakopoulou2019mir} is a BFT protocol that combines parallel and partition optimisation schemes. It uses parallel leaders to order clients' requests concurrently to handle scalability. To multiplex PBFT instances into a single total order, Mir-BFT also partitions the request hash space across replicas to prevent request duplication.

\textbf{$\triangleright$ Remark.} 
Parallel and sharding/partition optimisation mechanism\blue{s} generally consider reducing all-to-all messages\cite{amir2006scaling,king2011breaking,hong2023prophet,gueta2019sbft,wu2021me} and increasing concurrency\cite{wu2021me,arun2019ez,stathakopoulou2019mir} to enhance server scalability. The overall structures follow the typical BFT-SMR paradigm with a leader and followers to execute requests in the same order to achieve the same state. Cryptography, such as threshold signatures, is used \cite{king2011breaking,gueta2019sbft} to reduce communication complexity; this works well in environments with powerful computation, such as blockchain systems, but might be a bottleneck for computation-restrained environments such as wireless~\cite{xu2022wireless,cao2022v2v}. In addition, in some parallel BFT protocols, the fault model has also been changed. For instance, in \cite{amir2006scaling}, it is assumed that each group of $n$ replicas is fault tolerant and has up to $\lfloor \frac{n-1}{3}\rfloor$ faulty replicas, which is different from (in essential a harsher assumption than) assuming the whole system has up to one-third faulty nodes.

\subsection{The Power of Trusted Component: Trusted Execution Environment-based BFT}
\label{sect:bfttrustcomp}
In the early years (c.f. Section \ref{sect:asyn}), Bracha's asynchronous reliable broadcast \cite{bracha1987asynchronous} had a disruptive impact on Byzantine fault tolerance by restricting Byzantine nodes' behaviour to a crash failure mode. However, Bracha's reliable broadcast and related solutions that restrict Byzantine nodes' ability rely on message passing, which could result in high communication overhead. The use of a trusted execution environment (TEE) shares a similar motivation: to shift tasks from replica software (i.e., communication) to special trusted hardware in order to restrict the power of a faulty node.

Recall that in Section \ref{Sect:FLP}, we introduced three mechanisms to bypass the FLP impossibility: loosening the asynchrony assumption (e.g., PBFT), utilising randomness (e.g., Ben-Or's randomised algorithm), and adding detectors or oracles. Correia et al.'s solution\cite{correia2004tolerate} employs the third mechanism; it relies on an oracle to circumvent FLP and increase resilience. Correia et al. extend the asynchronous system with an oracle called the Trusted Timely Computing Base (TTCB). The TTCB is a real-time, synchronous subsystem capable of timely behaviour and provides a small set of basic security services to achieve intrusion tolerance~\cite{correia2002design}. It could be implemented on RTAI~\cite{cloutier2000diapm}, which is protected by hardening the kernel. Based on TTCB, the authors proposed a Trusted Multicast Ordering (TMO) service . TMO is implemented in TTCB, making it secure against malicious attacks ; thus, only crash faults exist . The messages are passed through the payload network, which is assumed to be asynchronous but reliable, meaning that the messages will not be altered and will eventually be received. In addition, messages are verified by Message Authentication Codes (MACs) to protect their integrity. The TMO service assigns a unique sequence number to each request, and the messages of TMO are passed through the TTCB control channel, which is reliable and synchronous. The whole procedure is as follows: when a process calls a \textit{send} or \textit{receive} operation in its local TTCB, information about this call is broadcasted to everyone. If the processes acknowledging this operation (by sending a cryptographic hash digest of this request to the coordinator, which is also a local TTCB service) reach the threshold of $\lfloor \frac{n-1}{2}\rfloor$, the coordinator will assign a sequence number to the operation. Other nodes can obtain this sequence number through the TTCB channel. If the coordinator crashes (as previously mentioned, no malicious faults exist ), another coordinator will take over . This is possible because the other coordinator is aware of the broadcast made by the crashed coordinator through the reliable channel.

Correia et al.'s work achieves SMR for several reasons. Recall the definition of atomic broadcast in Definition \ref{def:atomic}: validity is guaranteed by the reliable payload network; agreement is achieved by the reliable payload network and the TMO service, as a sequence number is only assigned to a request when the number of correct processes reaches a threshold; integrity is guaranteed by the reliable payload channel, the TTCB control channel, and MACs; and total order is provided by the TMO. Therefore, Correia et al.'s scheme achieves atomic broadcast (a.k.a. total order broadcast), which can be used to build an SMR if all processes started with the same initial state.

Similar to TMO, which restrains the power of malicious nodes, A2M~\cite{chun2007attested} introduces Attested Append-Only Memory. This memory pool equips a host with a set of trusted, undeniable, ordered logs, which only provide \textit{append}, \textit{lookup}, \textit{end}, \textit{truncate}, and \textit{advance} operation interfaces and has no method to replace values that have already been assigned (a sequence number). In implementation, A2M could be a trusted virtual machine, trusted hardware, etc. A2M can be integrated into various BFT protocols, including PBFT and Q/U. For instance, in the A2M-integrated PBFT protocol, A2M-PBFT-EA, every message needs to be inserted into the corresponding message memory (i.e., the \textit{Pre-Prepare message memory pool}). With the help of A2M, no equivocation exists, i.e., every replica is forced to send consistent messages to others. Therefore, when utilizing the same flow as PBFT, A2M-PBFT-EA only needs $2f+1$ replicas to tolerate $f$ Byzantine faulty nodes.

TMO and A2M prove that by deploying tamper-proof components to extend the server in SMR, one can reduce the number of replicas from $3f+1$ to $2f+1$. However, TMO and A2M are both locally installed services on each computer (server), instead of being distributed deployments . To tackle this problem, MinBFT and MinZyzzyva were proposed by Veronese et al.~\cite{veronese2011efficient}. They designed a local service called the \textit{Unique Sequential Identifier Generator} (USIG). Its function is to assign a counter value to each message. USIG guarantees three properties even if the replica becomes faulty: (1) It will never assign the same identifier to two different messages; (2) It will never assign an identifier lower than a previous one; and (3) It will always attribute a sequential identifier. USIG can be built on public-key cryptography, and it needs to be deployed on an isolated, tamper-proof component that is assumed to be incorruptible.
As reviewed in Section \ref{sect:pbft}, we analysed how the \textit{prepare} and \textit{pre-prepare} phases guarantee a unique sequence within a view. In MinPBFT, the unique sequence is guaranteed by USIG, so the \textit{pre-prepare} phase is not needed. The flow is similar to PBFT: the client first sends a \textit{request} message to all replicas, then the leader obtains a sequence number from USIG and broadcasts a \textit{prepare} message to all replicas. After receiving the \textit{prepare} message, a replica verifies whether the message and sequence are valid, and broadcasts a \textit{commit} message if the verification passes. When a replica receives $f+1$ valid \textit{commit} messages, it accepts the request and sends a \textit{reply} after execution. If the leader is faulty and a \textit{view change} is triggered, replicas stop receiving messages, obtain a unique identifier from USIG for the \textit{view change} message, and then start to listen to the new leader if it is correct. Similar to PBFT, all requests since the last checkpoint will be stored in a new view certificate of the new leader to be broadcast to every replica, and requests will be executed starting from the latest one. However, unlike in PBFT, the unique sequence is not guaranteed by a quorum as proven in Lemma \ref{lemma:3.2}, but rather is assigned by the USIG. MinZyzzyva is designed in a similar way, following the flow of Zyzzyva, and is not described repeatedly.

Based on the idea of MinBFT, CheapBFT~\cite{kapitza2012cheap} was proposed as an optimisation of MinBFT. Similar to MinBFT, CheapBFT also relies on a trusted component named the Counter Assignment Service in Hardware (CASH) to manage a monotonically increasing counter, and guarantee node identity and message authentication. By utilising CASH, no equivocation exists. In the normal case when no faults exist , CheapBFT utilises an optimised flow where only $f+1$ ($f=\lfloor\frac{n-1}{2}\rfloor$) nodes act as active replicas that send and receive \textit{prepare} and \textit{commit} messages. Passive replicas only need to listen to \textit{update} messages that include the execution result and the commit certificate. When a fault is detected or the system fails to reach agreement, CheapBFT falls back to MinBFT to achieve correctness.

\textbf{$\triangleright$ Remark.} TEE-based approaches can enhance the resilience and performance of BFT protocols as well as simplify them. By utilizing trusted components, the power of malicious participants is restricted. Achieving non-equivocation is a core design goal of TEE-based BFT protocols. The power of non-equivocation is introduced in \cite{clement2012limited}; combined with the transferable authentication of messages in the network, e.g., digital signatures , it is possible to use non-equivocation to transform CFT protocols to tolerate Byzantine faults. In these TEE-based works~\cite{correia2004tolerate,chun2007attested,veronese2011efficient,kapitza2012cheap}, the cost of fault-tolerance is reduced from $3f+1$ to $2f+1$. \cite{correia2004tolerate,chun2007attested} use local trusted components, which means replicas should be physically connected. \cite{veronese2011efficient,kapitza2012cheap} use a similar flow as PBFT such that they can be deployed in a distributed manner. However, the trusted environment itself is a challenge and a drawback, as the trusted components themselves are required to participate in the protocols. In addition, since all known TEEs have vulnerabilities~\cite{cerdeira2020understanding,fei2021security}, the protocols should not rely on the trusted components to make progress.

\subsection{Partially Synchronous BFT for Blockchain}
\label{sect:partially sync BFT for blockchain}
Blockchain is essentially a form of SMR \cite{shi2020foundations}, the fast development of blockchain has triggered research interest in traditional SMR and consensus protocols, which have been researched before the advents of blockchain and is introduced in Section \ref{sect:consensusearlyyears} and the first half of Section \ref{sect:practicalbft}. However, blockchain has different requirements and challenges than traditional SMR and consensus. One simple difference is that in a blockchain system, computational power is much cheaper than in general distributed systems, so more complex cryptographic techniques can be used. Another difference is that blockchain requires ``fairness'' and chain quality~\cite{garay2015bitcoin}, which necessitates regular leader rotation, while in SMR there is no need to change a leader if it is working properly. In addition, blockchain has a higher demand for scaling because the number of participants is generally larger than in traditional SMR.

\subsubsection{Tendermint: BFT in Blockchain arena with Voting Power}\ 

\noindent Tendermint~\cite{buchman2018tendermint} is a PBFT-like consensus mechanism designed for the blockchain context. Similar to PBFT, it assumes a partially synchronous network model in which the communication delay is bounded . However, Tendermint identifies several different requirements for blockchain compared to standard SMR. First, in the consensus mechanisms like PBFT, the value for a decision that every correct process considers acceptable is not just the value sent by someone, but a set of $2f+1$ messages with the same value ID . Therefore, each message can be regarded as an ``endorsement'' or ``vote'' for a value. In blockchain, each participant has a different amount of voting power. This makes the scenario different from traditional SMR, where voting power in SMR is completely identical, with each message having an identical weight. It can also be challenging to scale, as the total number of messages being sent in traditional SMR depends on the total number of processes. To address this issue, in Tendermint, the $2f+1$ quorum intersection is replaced by $2f+1$ voting power, with the total system voting power being divided into $N=3f+1$. Another difference is that in PBFT, the requests from clients will be executed, but in Tendermint, any transaction in the proposed block needs application-level validity verification. In Tendermint, one or more proposer(s) will propose a block until a majority (2N/3) of voting power approves it, and the block will be solidified through several rounds of communication involving voting, corresponding to several rounds of message passing in PBFT.

\subsubsection{Algorand: Byzantine Agreement with VRF and Common Coin}\ 

\noindent Unlike most BFT blockchains that generally operate in a permissioned network, Algorand~\cite{gilad2017algorand} proposes to design a BFT protocol that can be used for the deployment of a permissionless blockchain. They argue that a permissionless blockchain requires ``fairness'', which indicates that a participant with a higher balance amount should be more likely to be elected. Therefore, the authors invented a cryptographic sortition scheme using a VRF to secretly elect some participants based on priority (which is positively correlated with their balance amount) as the committee to validate the block sent by the proposer (which has the highest priority). Either only one of the proposed blocks or an empty block should be approved. Because the committee members change each round, and the value that needs to be approved is not a sequence number or an order, but a block, the authors propose using Byzantine Agreement (Definition~\ref{def:byzantineagreement}) to reach agreement on the block. The procedure of the proposed Byzantine agreement protocol is voting-based. Once a proposed block receives a voting threshold , it is approved. Recall that Byzantine Agreement is a synchronous protocol, as introduced in \ref{Sect:AinSyn}, and its ``validity'' property requires that all correct nodes have the same initial value (a.k.a. all-same validity).

The author clearly states that the best-case scenario happens when the network is under strong synchrony and only a same block was sent to everyone by the proposer, and in this scenario Byzantine agreement protocol directly reaches $final$ consensus. If a worse-case scenario happens, either in an asynchronous network or the proposer is malicious, binary agreement is used to decide if the decision is one of the proposed blocks or an empty block. Intuitively, the decision of a binary consensus is simply from $\{0,1\}$. 1 represents one selected block, and 0 is an empty block.
The selection scheme follows: when the proposer proposes different blocks and sends them to others, but one of them receives enough votes will be chosen, or simply one of the proposed blocks is selected if none of them receives enough votes. Recall that in section~\ref{sect:benor}, Ben-Or's randomised binary consensus can reach consensus with the probability of 1, although it could be slow, but a shared coin (common coin) can significantly boost the consensus. In Algorand, the authors applies a randomised binary consensus with the deployment of common coin. The common coin is calculated by the ``$hash$'' produced by VRF which has the least priority. If the committee that yields $hash$ is honest, all correct participants will observe the same coin and terminate fast. The probability for each loop that the committee is honest excels $\frac{2}{3}$ due to the utilisation of VRF. In addition, even if the this committee is malicious, binary consensus will terminate with the probability of 1 anyway with more loops to be executed.

The authors state that the best-case scenario happens when the network is under strong synchrony and only the same block is sent to everyone by the proposer; in this scenario, the Byzantine agreement protocol directly reaches a $final$ consensus. If a worst -case scenario occurs , either in an asynchronous network or when the proposer is malicious, binary agreement is used to decide if the decision is one of the proposed blocks or an empty block. Intuitively, the decision of a binary consensus is simply from ${0,1}$, where 1 represents one selected block, and 0 is an empty block. The selection scheme is as follows: when the proposer proposes different blocks and sends them to others, the block that receives enough votes will be chosen; otherwise, one of the proposed blocks is selected if none of them receives enough votes. Recall that in Section ~\ref{sect:benor}, Ben-Or's randomised binary consensus can reach consensus with a probability of 1, although it may be slow, but a shared coin (or a common coin) can significantly boost the consensus. In Algorand, the authors apply a randomised binary consensus with the deployment of a common coin. The common coin is calculated by the ``hash'' produced by the VRF output . If the committee that yields the $hash$ is honest, all correct participants will observe the same coin and terminate fast. The probability for each loop that the committee is honest exceeds $\frac{2}{3}$ due to the use of the VRF. In addition, even if this committee is malicious, binary consensus will terminate with a probability of 1 regardless, but more loops will need to be executed.

\subsubsection{HotStuff: BFT with Quorum Certificate and Frequent View Change}\ 
\label{sect:hotstuff}

\noindent 
In blockchain, there is sufficient calculation power available for cryptographic mechanisms, and the leader node needs to be frequently rotated to ensure chain quality~\cite{garay2015bitcoin}. Therefore, HotStuff~\cite{hotstuff2019yin} was proposed to utilize a cryptographic threshold signature scheme to reduce authenticator complexity, which refers to the number of authenticators a replica must reach to make a consensus decision. Similar to PBFT, HotStuff is a leader-based SMR protocol which includes several phases, namely \textit{prepare}, \textit{pre-commit}, and \textit{commit}. The core idea of HotStuff is also \textit{quorum intersection}, but it uses a \textit{Quorum Certificate (QC)} instead of a quorum of messages as in PBFT. A QC is essentially a certificate that is signed by a sufficient number of (i.e., $n-f$) replicas to demonstrate their agreement on a \textit{view}; each view corresponds to a unique QC. Another difference is that \textit{view change} is triggered after reaching a consensus decision each time , to address the demand for fairness in blockchain.

HotStuff is similar to PBFT in that it does not require synchrony for safety but does for liveness.
Each replica stores a tree of pending commands as a local data structure. The commands are linked by \textit{parent links}. As a tree structure, the command tree can fork, just like a fork in Bitcoin. A \textit{branch} led by a given tree node is the path from this node all the way back to the tree root following \textit{parent links}. Only one branch will be chosen during each round of the decision process. The main procedure is as follows:
\begin{enumerate}

\item Prepare Phase: For each \textit{new view}, a leader will be elected by a certain scheme known to everyone, and each replica sends a \textit{new view} message that carries the highest \textit{prepareQC} from the last view, which records the last branch that received $n-f$ votes.
\par (a) The leader will find the one with the highest view, namely \textit{highQC}, and choose a branch that extends from the tail of the last QC, i.e., choosing a new leaf node from the command tree. The selected leaf, together with the \textit{highQC}, will be embedded into a \textit{prepare} message as the current proposal \textit{curProposal} and sent to the replicas.
\par (b) Once a replica receives the \textit{prepare} message, it performs a \textit{safeNode} predicate to verify if the \textit{prepare} message is acceptable. The \textit{safeNode} predicate has two rules. One is to check if the proposed leaf node directly extends from a previously decided node; if so, the replica is synchronised with the leader and no gap exists. The other rule is to check if its current view of \textit{lockedQC} is smaller than that of the \textit{prepare} message; if so, the replica might be stuck at a previous view and should try to synchronise with the leader. If \textit{safeNode} predicate returns true, the replica accepts the prepare message.
\item Pre-Commit Phase: Once the leader receives $n-f$ \textit{prepare} votes for \textit{curProposal}, it combines them into a \textit{prepareQC} for the current view. Then, the leader broadcasts the \textit{prepareQC} in a \textit{pre-commit} message. Replicas will respond to the leader with a vote for \textit{pre-commit} along with their signature for the proposal.

\item Commit Phase: Similar to the Pre-Commit Phase, when the leader receives $n-f$ \textit{pre-commit} votes, it combines them into a \textit{precommitQC} and broadcasts them in the \textit{commit} messages. Then, replicas respond with \textit{commit} votes. At this point, a replica becomes locked on this \textit{precommitQC} and sets \textit{lockedQC} to \textit{precommitQC}.
\item Commit Phase: Once the leader receives $n-f$ \textit{commit} votes, it combines them into a \textit{commitQC} and sends it with the \textit{decide} messages to all replicas. After receiving \textit{decide} messages, a replica considers the proposal in the \textit{commitQC} a committed decision and executes the commands in the committed branch. Then, every replica starts the next view.
\end{enumerate}

\begin{figure}[h]
 \centering
 \setlength{\belowcaptionskip}{-0.2em}
 \includegraphics[width=4.0in]{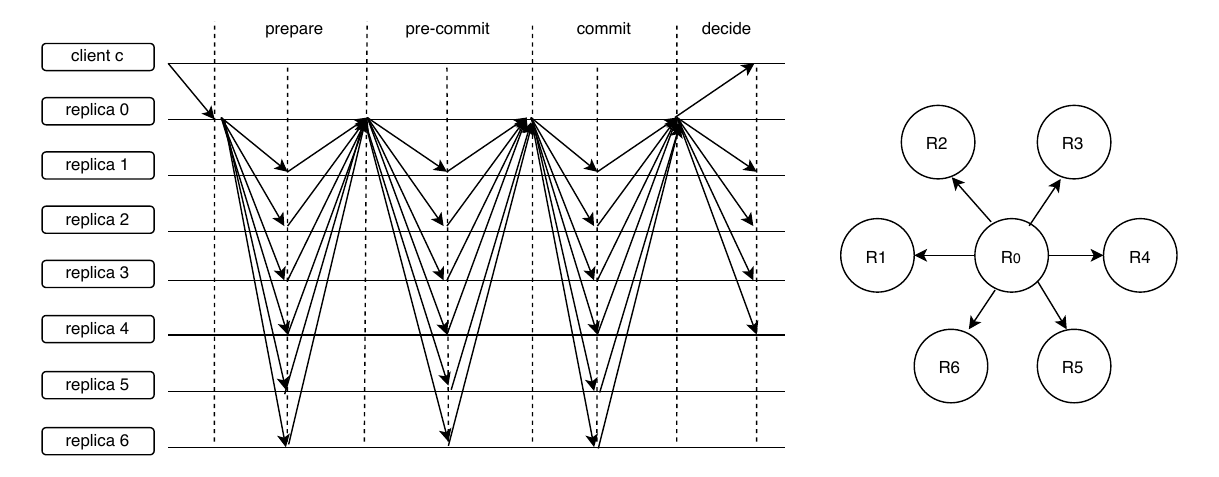}
 \caption{Communication pattern and topology of HotStuff}
 \Description{HotStuff}
 \label{fig:hotstuff}
\end{figure}

The communication pattern with seven processes and the topology are illustrated in Fig.~\ref{fig:hotstuff}. The topology is a star-topology where the leader disseminates and aggregates information from others. In the whole procedure, three kinds of QC exist: \textit{prepareQC}, \textit{lockedQC} (equal to \textit{precommitQC}), and \textit{commitQC}. Readers might be curious about the second one: why do we need \textit{lockedQC}? This question is similar to the question of ``why is the \textit{commit} message needed '' in PBFT. Remember, the answer for PBFT is that the \textit{commit} message is for ensuring the safety of a unique sequence during a ``cross view'' transition. Similarly, \textit{lockedQC} is for safety during a ``cross view'' transition in the command tree. As introduced, the command tree forks, and only one branch will be selected. Therefore, to achieve safety, HotStuff needs to ensure:
\par \textit{If $w$ and $b$ are conflicting nodes, i.e., $w$ and $b$ are on two different branches where none of the branches is an extension of the other, they cannot both be  committed, each by a correct replica.}

As shown in Fig.\ref{fig:HSTree}, if $w$ and $b$ are at the same view, for example, both at $v_1$ in Fig.\ref{fig:HSTree}, it is obvious that they will not be committed simultaneously, as each commit needs $n-f=2f+1$ votes from correct nodes (comparable to the unique sequence within a view in PBFT). If $w$ and $b$ are at different views, as illustrated in the figure ($v_1$ and $v_2$, respectively), we will show how \textit{lockedQC} protects safety.

We assume $w$ is at $v_1$ and $b$ is at $v_2$. $s$ is a node that is the lowest node higher than $w$ and conflicts with $w$. $s$ comes with a valid \textit{prepareQC} which is at $v_s$. $qc_1$ and $qc_2$ are valid \textit{commitQC}s, and $qc_s$ is a valid \textit{prepareQC}. We already know that a \textit{commitQC} is composed of $n-f$ \textit{lockedQC}s, and a \textit{prepareQC} needs to pass \textit{safeNode} checks by $n-f$ replicas. If $w$ and $b$ are both committed, $w$ and $s$ are also both committed. Therefore, the intersection of \textit{commitQC} and \textit{prepareQC} is $n-f$ \textit{lockedQC}s and $n-f$ \textit{safeNode} predicates, $2\times (n-f)-(3f+1)=f+1$, which has at least $f+1-f=1$ correct replica $r$ that updates \textit{lockedQC} to \textit{precommitQC} (i.e., $qc_1$) at view $v_1$. As for \textit{safeNode}, as $w$ and $s$ are conflicting, the first rule in \textit{safeNode} cannot be matched. For the second rule, it is impossible for the correct replica $r$ to have a \textit{lockedQC} view smaller than $qc_1$ because it has already updated its \textit{lockedQC} to \textit{precommitQC}, which is exactly $qc_1$; hence, $n-f$ \textit{safeNode} predicates can never be gathered, which is a contradiction. Therefore, safety during a cross view'' transition is achieved with the help of \textit{lockedQC}, just like with the \textit{commit}'' messages in PBFT.

\begin{figure}[h]

 \centering
 \includegraphics[width=3.0in]{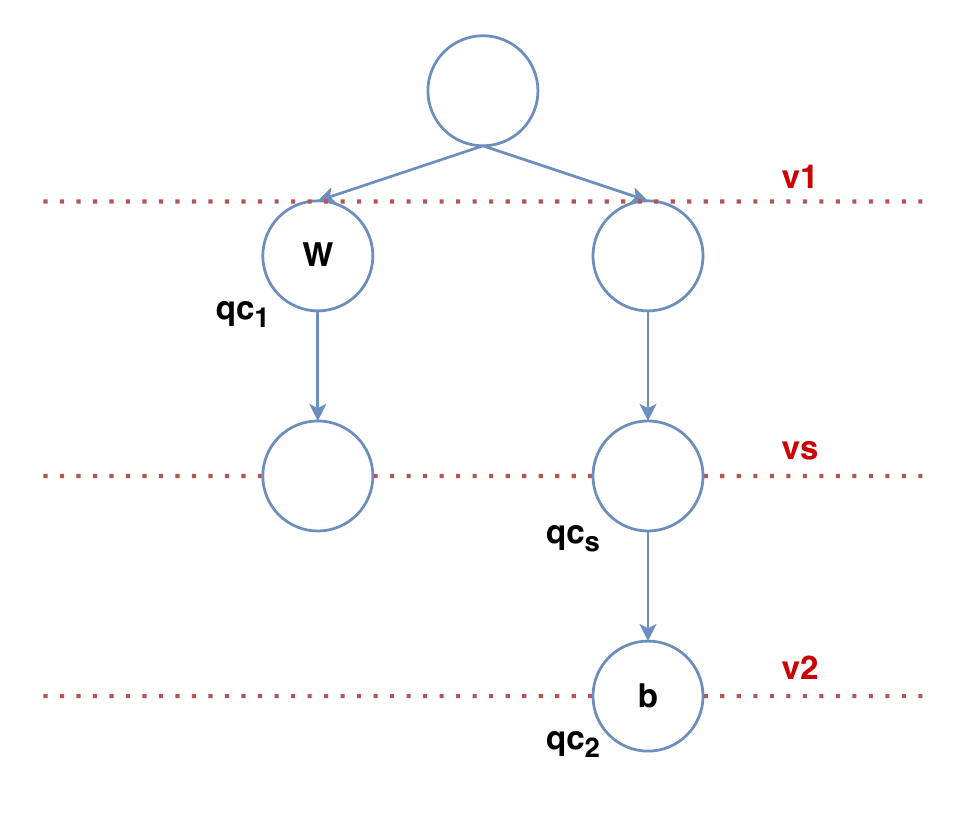}
 \caption{Safety for conflicting nodes}
 \Description{HSTreeConf.}
 \label{fig:HSTree}
\end{figure}

\textbf{$\triangleright$ Remark.} Partially synchronous BFT protocols have been improved for blockchain scenarios. Although blockchain is essentially an SMR system, it has different requirements from traditional SMR. The proposed blockchain-oriented BFT schemes are inspired by traditional agreement and consensus protocols. Tendermint applies quorum intersection but uses voting power instead of the message count. Algorand combines Byzantine agreement and  binary consensus with a common coin, which originate from the Byzantine Generals Problem and Ben-Or's randomised consensus. HotStuff uses a Quorum Certificate, a cryptographic scheme to reduce messages, and  is inspired by PBFT's quorum intersection principle to guarantee safety within a view and across different views.

\section{Practical BFT in Fully Asynchronous Network}
\label{sect:asynchronousbft}
The FLP impossibility theorem~\cite{fischer1985impossibility} states that in an asynchronous environment, even if only one failure exists, a consensus protocol might never be able to achieve correctness. Intuitively speaking, partially synchronous and synchronous protocols might achieve zero throughput against an adversarial asynchronous network scheduler~\cite{wang2022bft}. This is why although partially synchronous protocols like PBFT and HotStuff do not need synchrony for safety, they require partial synchrony and GST for liveness. Thanks to the foundational work of Ben-Or~\cite{ben1983another}, as discussed in Section \ref{sect:earlyrandom}, we know that randomized asynchronous protocols can solve the consensus problem by employing a probabilistic approach to guarantee termination with a probability of 1. However, these protocols are often inefficient due to their exponential worst-case complexity. To address this limitation, techniques such as the common coin method have been proposed to accelerate termination. The common coin can be collectively generated using threshold cryptography~\cite{desmedt1993threshold} and Verifiable Random Functions (VRFs)~\cite{micali1999verifiable}, as explored in Section~\ref{sect:earlyrandom}. When combined with these cryptographic primitives, asynchronous protocols have the potential to achieve practical efficiency.
Asynchronous consensus is considered to be more robust than partially synchronous consensus in the presence of timing and denial-of-service (DoS) attacks, making them appropriate solutions for mission-critical distributed applications and blockchain systems. Before digging into asynchronous consensus protocols, we first introduce the atomic broadcast primitive and how to build it by different ways (Section \ref{sect:MVBA}-\ref{sect:ACS}), and then analyse how asynchronous BFT is designed and deployed in blockchain scenarios (Section \ref{sect:asyncblockchain}).

\subsection{Building Atomic Broadcast on Binary Agreement and Multi-Valued Byzantine Agreement}
\label{sect:MVBA}
In this section, we give a brief overview of how to build an atomic broadcast protocol on top of vector consensus and multi-valued consensus, which can eventually be reduced to binary consensus. Note that atomic broadcast is equal to SMR, which we have introduced in Section ~\ref{sect:SMR}. 

\begin{definition}
\textbf{(Asynchronous) Binary Agreement (ABA)} performs consensus on a binary value $v\in${0,1}. It satisfies the following properties:
\begin{itemize}
\item \textbf{Validity}: If all correct processes propose the same value $v$, then any correct process decides on $v$.
\item \textbf{Agreement}: No two correct processes decide on different values.
\item \textbf{Termination}: Every correct process eventually decides.
\end{itemize}
\end{definition}

Note that Ben-Or's randomised consensus is a typical binary agreement, meaning binary agreement can work in a fully asynchronous network. Note that all of the following ABA schemes utilise the common coin technique to boost termination.
\begin{definition}
\label{def:MVBA}
\textbf{Multi-Valued Consensus (a.k.a. Multi-Valued Byzantine Agreement, MVBA)} performs consensus on a value $v\in\mathcal{V}$ with arbitrary length, where $v$ can be a value proposed by some process in a domain $\mathcal{V}$ or a default value $\perp\notin\mathcal{V}$ in case correct processes fail to propose the same value. The following properties are satisfied:
\begin{itemize}
\item \textbf{All-Same Validity}: If all correct processes propose the same value $v$, then any correct process decides on $v$. 
\item \textbf{Validity1}: If a correct process decides on $v$, then $v$ was proposed by some process or $v=\perp$.
\item \textbf{Validity2}: If no correct process ever proposed $v$, then no correct process will decide on $v$.
\item \textbf{Agreement}: No two correct processes decide on different values.
\item \textbf{Termination}: Every correct process eventually decides.
\end{itemize}
\end{definition}

Correia et al.~\cite{correia2006consensus} showed that an asynchronous multi-valued consensus could be built on top of binary consensus. The main procedure of the protocol for each process $p_i$ is:
\begin{enumerate}
\item \textbf{\textit{(Asynchronously) reliable broadcast}} its initial value and wait until $2f+1$ initial values have been delivered. The received values will be stored in a vector $V_i$. Note that \textbf{\textit{reliable broadcast}} guarantees that two processes will not receive different values from the same sender process, but the vectors might be different for each process because the first $2f+1$ initial values do not have to be the same.
\item If in the vector $V_i$ there are at least $f+1$ instances of the same $v$ value in the $2f+1$ initial values received previously , then $p_i$ sets $w=v$ and broadcasts $w$ together with $V_i$. Otherwise, $p_i$ selects $\perp$ and also broadcasts it with $V_i$.
\item If $p_i$ does not receive two different values $w\neq w'$ broadcasted in the last step, and it receives at least $f+1$ messages with the same $w$, then it proposes 1 for the next \textbf{\textit{binary consensus}}; otherwise, it proposes 0.
\item If the \textbf{\textit{binary consensus}} decides on 0, then the multi-valued consensus returns $\perp$. Otherwise, $p_i$ \textbf{\textit{waits}} for ($f+1$) messages with $w$ in case it has not already received them .
\end{enumerate}

After the multi-valued consensus is built on top of binary consensus, we consider building vector consensus on top of multi-valued consensus. The goal of vector consensus is to reach agreement on a vector containing a subset of proposed values. In a Byzantine system, vector consensus is useful only if a majority of its values are proposed by correct processes. Therefore, the decided vector must have at least $2f+1$ values. Vector consensus is similar to \textbf{\textit{interactive consistency}} in a synchronous environment, which also reaches agreement on a vector. The difference is that interactive consistency reaches agreement on a vector with the values proposed by all correct processes, while vector consensus only guarantees that majority of the values are proposed by correct processes. The reason is that in an asynchronous network, it is impossible to ensure that a vector has the proposals of all correct processes due to the possibility of arbitrary delays. Remember that FLP states that one can never distinguish whether a message is lost or delayed in an asynchronous network.
\begin{definition}
\textbf{\textit{Vector Consensus}} performs consensus on a vector containing a subset of proposed values, which satisfies the following properties:
\begin{itemize}
\item \textbf{Vector Validity}: For every correct process that decides on a vector $vec$ of size $n$,
\par - $\forall p_i$, if $p_i$ is correct, then either $vec[i]$ is the value proposed by $p_i$ or $\perp$.
\par - At least $f+1$ elements in $vec$ are proposed by correct processes.
\item \textbf{Agreement}: No two correct processes decide on different vectors.
\item \textbf{Termination}: Every correct process eventually decides.
\end{itemize}
\end{definition}

Correia et al.~\cite{correia2006consensus} implemented the \textbf{\textit{vector consensus}} on top of \textbf{\textit{multi-valued consensus}} using the following steps:
\begin{itemize}
\item Every process $p_i$ sets $r_i=0$ and \textbf{\textit{reliable broadcasts}} an initial message $<in\_msg_i, v_i>$ with initial value $v_i$.
\item {$p_i$} waits until at least $2f+1+r_i$ $in\_msg$ messages have been delivered. If $<in\_msg_j, v_j>$ is delivered, $p_i$ sets $vec_i[j]=v_j$, otherwise it sets $vec_i[j]=\perp$.
\item $p_i$ calls \textbf{\textit{multi-valued consensus}} with $vec_i$ as the input and repeats this step until it returns a value other than $\perp$; if $\perp$ is returned, $p_i$ increments $r_i$ by 1 and goes back to the second step to wait for more $in\_msg$ messages.
\end{itemize}
Note that when \textbf{\textit{vector consensus}} repeats the second step, it does not re-start the second step but rather waits until the required number of messages has cumulatively been received since the beginning. As $in\_msg$ messages are \textbf{\textit{reliable broadcasted}}, all correct processes will eventually receive the same $in\_msg$ messages and build an identical $vec$. When enough processes propose the same $vec$ to \textbf{\textit{multi-valued consensus}}, $vec$ will be decided, and then \textbf{\textit{vector consensus}} immediately decides. Once \textbf{\textit{vector consensus}} is implemented, \textbf{\textit{atomic broadcast}} can be built on top of it. Recall the definition of \textbf{\textit{atomic broadcast}} (definition~\ref{def:atomic}) and \textbf{\textit{reliable broadcast}} (definition~\ref{def:reliablebroadcast}), and \textbf{\textit{atomic broadcast}} can be seen as \textbf{\textit{reliable broadcast}} plus the \textit{integrity} and \textit{total order} properties. We can utilize a Secure Hash Digest function~\cite{fips2012180} to achieve \textit{integrity}, thanks to the properties of a secure hash function. Then, we only need to guarantee the order of all delivered messages to implement \textbf{\textit{atomic broadcast}}.

Correia et al.~\cite{correia2006consensus} give their solution for building \textbf{\textit{atomic broadcast}} on top of \textbf{\textit{vector consensus}}. The protocol is as follows:
\begin{enumerate}
\item First, use a Hash function to guarantee that if a malicious process tries to call reliable broadcast twice with the same message, reliable broadcast delivers the message only once. Once a message is delivered to a process by reliable broadcast, the process adds it to a message pool \textit{R\_delivered}, which is initially $\emptyset$. Note that each message in an atomic broadcast is assigned a unique sequence number.
\item When \textit{R\_delivered} $\neq\emptyset$, $p_i$ sets a vector $H_i$=Hashes of the messages in \textit{R\_delivered}. Then, the $H$ vector from each process is sent to the vector consensus protocol. The vector consensus protocol decides on a vector $X_i$ after receiving at least $2f+1$ $H$ vectors from different processes, where $X_i$ is a vector containing different $H$ vectors as elements.
\item After receiving $X_i$, every process $p_i$ waits for all messages $M$ whose $Hash(M)$ is in at least $f+1$ cells in $X_i$, where a cell is an element of an $H$ vector. Then, $p_i$ adds these messages to a message pool \textit{$A\_deliver_i$}.
\item $p_i$ atomically delivers \textit{$A\_deliver_i$} messages in a deterministic order for each message. Then, $p_i$ removes \textit{$A\_deliver_i$} messages from \textit{$R\_deliver_i$}.
\end{enumerate}

How does this atomic broadcast work? W.l.o.g., we consider one message $M$ in \textit{$R\_deliver_i$} to be atomically broadcasted. Because reliable broadcast is used, it is guaranteed that in some execution, the hash $Hash(M)$ will be put in $H$ vectors by all correct processes, and the vector consensus protocol will decide on a vector $X_i$ that includes at least $f+1$ entries with $Hash(M)$, since $X_i$ has at least $2f+1$ elements and there are at most $f$ malicious processes. Therefore, $Hash(M)$ will be added to \textit{$A\_deliver_i$}, and $M$ will be delivered by every correct process. To ensure the total order property, note that before each atomic broadcast decides, an execution of the vector consensus protocol is done, and each vector consensus instance is identified by a unique sequence number assigned to each atomic broadcast. Each vector consensus instance decides on a value only when a sufficient number of processes (at least $2f+1$) propose the same vector; it is impossible for a different vector to be decided because there cannot be $2\times (2f+1)$ processes in total . Because each vector consensus instance terminates in the same order, each atomic broadcast instance will decide in the same order. Combined with the deterministic order of messages within one atomic broadcast instance, the total order of each message is guaranteed.

\textbf{$\triangleright$ Remark.} Atomic broadcast could also be directly built on top of MVBA, but the properties of vector consensus would be realised nonetheless . We have shown that atomic broadcast can be built on top of vector consensus and multi-valued consensus, which can in turn be implemented on top of binary consensus. Readers can refer to ~\cite{correia2006consensus} for the complete protocol and proofs. Although ~\cite{correia2006consensus} might not be practical due to its complexity and latency, it demonstrates that asynchronous consensus primitives can be reduced from one level to another, which can finally be reduced to the level of asynchronous binary consensus. Another question might be, what if the output of MVBA is from dishonest ones? The solution could be adding external predicate/validity check to guarantee only valid values could be selected as the output \cite{cachin2001secure}. This scheme is also adopted in \cite{guo2020dumbo}, which will be introduced later in this section.

\subsection{Building Atomic Broadcast on Binary Agreement and Asynchronous Common Subset}
\label{sect:ACS}
The asynchronous common subset (ACS) is arguably one of the most practical frameworks for asynchronous BFT. ACS was first proposed by Ben-Or et al.\cite{ben1994asynchronous}, and recently made practical by HoneyBadgerBFT\cite{miller2016honey} and BEAT~\cite{duan2018beat}.
\begin{definition}
\textbf{Asynchronous Common Subset (ACS)} assumes a setting with a dynamic predicate $Q$ that assigns a binary value to each process, and ensures that a correct process $i$ will (eventually) be assigned the value $Q(i)=1$. Every process is guaranteed to agree on a subset of at least $2f+1$ processes for whom $Q(i)=1$. ACS satisfies:
\begin{itemize}
\item \textbf{Validity}: If a correct process $p_i$ outputs a set $SubSet_i$, then $|SubSet_i|\geq 2f+1$ and $SubSet_i$ contains the input of at least $f+1$ correct processes.
\item \textbf{Agreement}: If a correct process outputs $SubSet$, then every node outputs $SubSet$.
\item \textbf{Totality}: If $2f+1$ correct processes receive an input, then all correct processes produce an output.
\end{itemize}
\end{definition}
The original ACS protocol by Ben-Or~\cite{ben1994asynchronous} is quite simple. We now give a brief description.

\noindent\rule[0.25\baselineskip]{\textwidth}{0.5pt}\\
(0) Execute an ABA for each process to determine whether it will be in the agreed set.\\
(1) For process $p_i$, participate in $ABA_j$ with input 1 for each $p_j$ for which $p_i$ knows $Q(j)=1$.\\
(2) Upon completing $2f+1$ ABA protocols with output 1, enter input 0 to all ABA protocols for which $p_i$ has not yet entered a value.\\
(3) Upon completing all $n$ ABA protocols, output $SubSet_i$ which is the set of all indices $j$ for which $ABA_j$ had output 1.\\
\noindent\rule[0.25\baselineskip]{\textwidth}{1pt}\\
Ben-Or's ACS was designed for multiparty computations. Miller et al. adapted it for use in asynchronous BFT protocols in HoneyBadgerBFT~\cite{miller2016honey}. HoneyBadgerBFT's ACS uses the same $n$-ABA parallel execution as Ben-Or's ACS, but makes an adjustment to the predicate $Q$. HoneyBadgerBFT's ACS adds a RBC phase before ACS. For each process $i$, it first inputs value $v_i$ to $RBC_i$. Upon delivery of $v_j$ from $RBC_j$, if input has not yet been provided to $ABA_j$, then it provides input 1 to $ABA_j$. After that, the procedure is identical to Ben-Or's ACS. In other words, the predicate is implemented by $RBC$. After the $SubSet_i$ is output, i.e., all instances of ABA have completed, $p_i$ waits for $v_j$ from $RBC_j$ such that $j\in Subset_i$.

\begin{figure}[ht]
 \centering
 \setlength{\abovecaptionskip}{-1em}
 \includegraphics[height=2.0in]{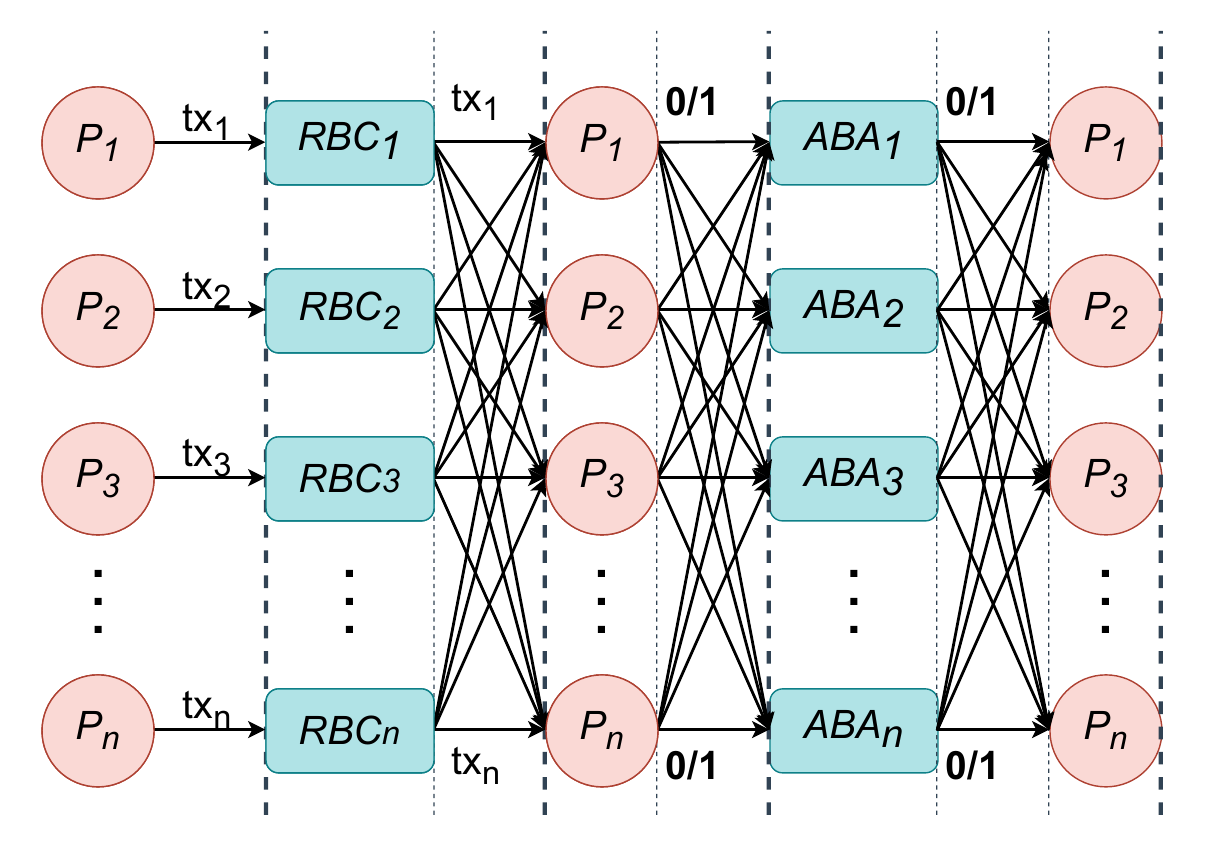}
 \caption{ACS of HoneyBadgerBFT}
 \Description{ACS}
 \label{fig:ACS}
\end{figure}

HoneyBadgerBFT's ACS solution executes $n$ RBC and $n$ ABA in parallel. In the RBC phase, each process broadcasts its value; in the ABA phase, the $i$-th $ABA$ is used to agree on whether $p_i$'s value has been delivered in the RBC phase. Once the $SubSet$ is output by ACS, every process will be aware of whose value is already determined to be accepted by BA, and will wait for these values to be delivered by RBC. The RBC guarantees these values will eventually arrive, so every process can build the same vector $vec$ that contains all these values. Therefore, an atomic broadcast can be achieved by having every process input its value to RBC and delivering values in $vec$ in a deterministic sequence.

\textbf{$\triangleright$ Remark.} ACS and MVBA are two frameworks for implementing asynchronous atomic broadcast. Both ACS and MVBA can be built on ABA. Note that the atomic broadcast built on MVBA could be used for a scenario in which, during a specific time period, only a small subset of processes (e.g., one or two) are broadcasting, because there is no assumption on how many processes trigger RBC. However, for ACS in HoneyBadgerBFT, there is a requirement that $n$ processes propose values in parallel, and each correct process must propose something during the RBC phase, as illustrated in Fig.\ref{fig:ACS} (otherwise, for example, if only a small number of processes trigger RBC, step (2) in ACS that requires $2f+1$ ABA with output 1 will never be satisfied). Intuitively, HoneyBadgerBFT's ACS is suitable for the ``batch'' settings, particularly in blockchain scenarios. In addition, the complexity of these two protocols is different, and MVBA is considered less efficient than ACS in blockchain scenarios\cite{miller2016honey}. Since we focus on how these BFT protocols work , we do not delve into the details in this paper.

\subsection{Asynchronous BFT for Blockchain}
\label{sect:asyncblockchain}
The increasing popularity and interest in blockchain research have reignited focus on traditional SMR and consensus protocols, as discussed in Section~\ref{sect:partially sync BFT for blockchain}. Additionally, recent research trends in asynchronous BFT protocols have also gained attention, as it has become increasingly clear that partially synchronous solutions do not suit all scenarios.
The motivation for implementing asynchronous BFT protocols in blockchains instead of partially asynchronous protocols is that conventional wisdom, like PBFT and its variations, rely critically on network timing assumptions, and only guarantee liveness after GST, which is considered ill-suited for the blockchain scenario as blockchain systems face a much stronger adversary~\cite{miller2016honey}.

\subsubsection{HoneyBadgerBFT: An ACS-based Practical Asynchronous BFT for Blockchain}\
\label{sect:HoneyBadger}

\noindent As an optimisation for cryptocurrency and blockchain scenarios, HoneyBadgerBFT considers network bandwidth to be the scarce resource but computation to be relatively ample. Therefore, HoneyBadgerBFT could take advantage of cryptography that is considered expensive in classical fault-tolerant settings.

The authors first argue that an adversary could thwart PBFT and its variations. In PBFT, the designated leader is responsible for proposing the next batch of transactions at any given time, and if progress isn't made, either due to a faulty leader or a stalled network, the nodes attempt to elect a new leader. Because PBFT relies on a partially asynchronous network for liveness, one can construct an adversarial network scheduler that violates this assumption, resulting in PBFT making no progress at all. For example, when a single node has crashed, the network scheduler delays messages from all the newly elected correct leaders to prevent progress until the crashed node is the next one to be elected as the leader, then the scheduler immediately heals the network and delivers messages very rapidly among correct nodes. However, as the next leader has crashed, no progress is made in this case . In addition, any partially synchronous protocols that rely on timeouts have the problem that they are very slow when recovering from network partitions, because the delay assumption and timeout implementation are bound by a polynomial function of time. The core BFT protocol HoneyBadgerBFT uses is ACS, which we have introduced before, so we do not repeat the core idea here. Next, we take a look at how HoneyBadgerBFT adapts to the blockchain scenario.

The theoretical feasibility of ACS has been demonstrated in~\cite{ben1994asynchronous,cachin2001secure}, but the utilisation of ACS could cause a censorship problem in blockchains. As described, the ACS used by HoneyBadgerBFT has an RBC phase and an ABA phase. An adversary might find a transaction $tx_i$ (which is delivered during RBC, but the attacker receives it earlier than others) that is disadvantageous to them and attempt to exclude it and whoever proposed it. The adversary can then quickly control $f$ faulty nodes to input 0 to $ABA_i$ for which $i$ is the sender of this transaction, in an attempt to stop it from being agreed upon. Next, the adversary delays $RBC_i$ that corresponds to $tx_i$, resulting in correct nodes receiving other $2f+1$ (including those from the adversary) transactions first. As a result, $tx_i$ is censored as the index of its sender will not be included in $SubSet$, and all correct nodes will ignore $tx_i$.

The authors first improve the efficiency by ensuring nodes propose mostly disjoint sets of transactions. As mentioned before, in HoneyBadgerBFT's ACS, every correct process should propose something due to the $n$-parallel design and there is no leader to handle the transactions. Therefore, every process receives all transactions and stores them locally, and randomly chooses a sample such that each transaction is proposed by only one node on average. The authors then propose to use ($f+1,n$)-threshold encryption such that the network nodes must work together to decrypt transactions , i.e., only after $f+1$ (at least one honest) nodes compute and reveal decryption shares for a ciphertext, can the plaintext can be recovered. Prior to this, the adversary knows nothing about the encrypted transactions. Finally , the authors argue that selecting a relatively large batch size can improve efficiency, because one ABA instance is needed for every batch proposed by a process, regardless of the size.

\subsubsection{BEAT: Adaptations and Improvements on HoneyBadgerBFT}\
\label{sect:beat}

\noindent HoneyBadgerBFT provides a novel atomic broadcast protocol and the basic adaptation of using it in a cryptocurrency-like blockchain. BEAT~\cite{duan2018beat} implements a series of adaptations that adapt to different blockchain scenarios, e.g., append-only ledger (BFT storage) and smart contract (general SMR). Although these two types of blockchains share the same security requirements, their storage requirements are different. The former focuses on decentralised storage, and might use redundancy-reducing mechanisms such as allowing servers to keep only fragments~\cite{fanN2021dr,li2021lightweight,wu2021me,song2021supply}, while in contrary, the smart contract form has a more strict requirement on storage as each participant is expected to keep a full redundant copy of all contract states to support contracts executing ~\cite{wood2014ethereum,natanelov2022blockchain,zhang2018smart}.

To meet the requirements of these two major types of application scenarios, BEAT proposes five protocol instances, BEAT0-BEAT4. BEAT0 utilizes a more secure threshold encryption, and a threshold coin-flipping instead of threshold signature as used in HoneyBadgerBFT, and adopts flexible and efficient erasure-coding support for reliable broadcast. The erasure-coding reduces the bandwidth requirement as the sender only has to send a fragment to any other process instead of the whole batch. BEAT1 replaces the erasure-coded broadcast used in BEAT0 with another erasure-coded broadcast protocol to reduce latency when the batch size is small. BEAT2 opportunistically offloads encryption operations of the threshold encryption to the clients to reduce latency, and achieves causal order~\cite{duan2017secure} that ensures a ``first come, first served'' manner, which could be useful for financial payments. BEAT0-BEAT2 are designed for general SMR-based blockchains. Besides, BEAT3 and BEAT4 are suitable for BFT storage. Since BFT storage usually only requires each process to store only a fragment, BEAT3 uses a fingerprinted cross-checksum-based~\cite{hendricks2007verifying} bandwidth-efficient asynchronous verifiable information dispersal protocol to broadcast messages, which significantly reduces bandwidth consumption. BEAT4 is optimised for scenarios where clients frequently read only a fraction of the stored transactions, e.g., a portion of a video. It further extends fingerprinted cross-checksum techniques together with a novel erasure-coded asynchronous verifiable information dispersal protocol to reduce access overhead.

\subsubsection{Dumbo: Reduction on ABA and Carefully-Used MVBA}\
\label{sect:dumbo}

\noindent It is found in Dumbo~\cite{guo2020dumbo} that the $n$ ABA instances terminate slowly when $n$ gets larger and the network is unstable, and the slowest ABA instance determines the running time of HoneyBadgerBFT. Dumbo proposes an improved protocol instance, Dumbo1, which only needs to run $~\kappa$ ABA instances, where $\kappa$ is a security parameter independent of $n$. $\kappa$ is an adjustable parameter; if $n$ is relatively large, the probability that none of $\kappa=\kappa_0$ nodes is honest is at most $(1/3)^{\kappa_0}$ which is safe to neglect, so $\kappa_0$ could be selected; otherwise, $\kappa=f+1$ is used . Committee election is implemented using pseudo-randomness.

The ACS of Dumbo1 is revised from that of HoneyBadgerBFT, as illustrated in Fig. \ref{fig:Dumbo1}. After the $n$ RBC phase, every process executes committee election and $\kappa$ processes are elected as the committee $C$, among which at least one process is honest. If an honest process $c_j$ in the committee receives $2f+1$ values from the RBC phase, it initiates an index-RBC. Each index-RBC is used by $c_j$ to broadcast $S_j$, the set of $2f+1$ RBC instance $c_j$ has already received values from . Finally , $\kappa$ ABA instances are still needed because there might be two or more honest nodes in the committee, resulting in any two honest nodes $\notin C$ receiving $S_i\neq S_j$, which are both sent from honest committee members (due to asynchrony and different latency). The $\kappa$ ABA instances are then used to reach agreement on $S$ that contains $2f+1$ values. For a process $p_i$ $\notin C$, once it receives the indices in $S_j$ from $c_j$ that indicate which $2f+1$ values in the RBC phase should be agreed upon , it waits until all these values are received. Note that $p_i$ might have received $S_i\neq S_j$; in this case, it will input 1 to only one $ABA_i$ once the $2f+1$ values in $S_i$ are all received. Then, once $p_i$ has received 1 from any $ABA_j$, it inputs 0 to all other $\kappa-1$ ABA instances. As a result, all correct processes will agree on the same $2f+1$ values and wait for them to be delivered. Because in $C$ at least one process is honest, and this honest process must have received all input values corresponding to the index set $S_i$. Thus, following the properties of RBC, all other honest nodes will also eventually receive those values.

\begin{figure}[h]
 \centering
 \includegraphics[width=4.3in]{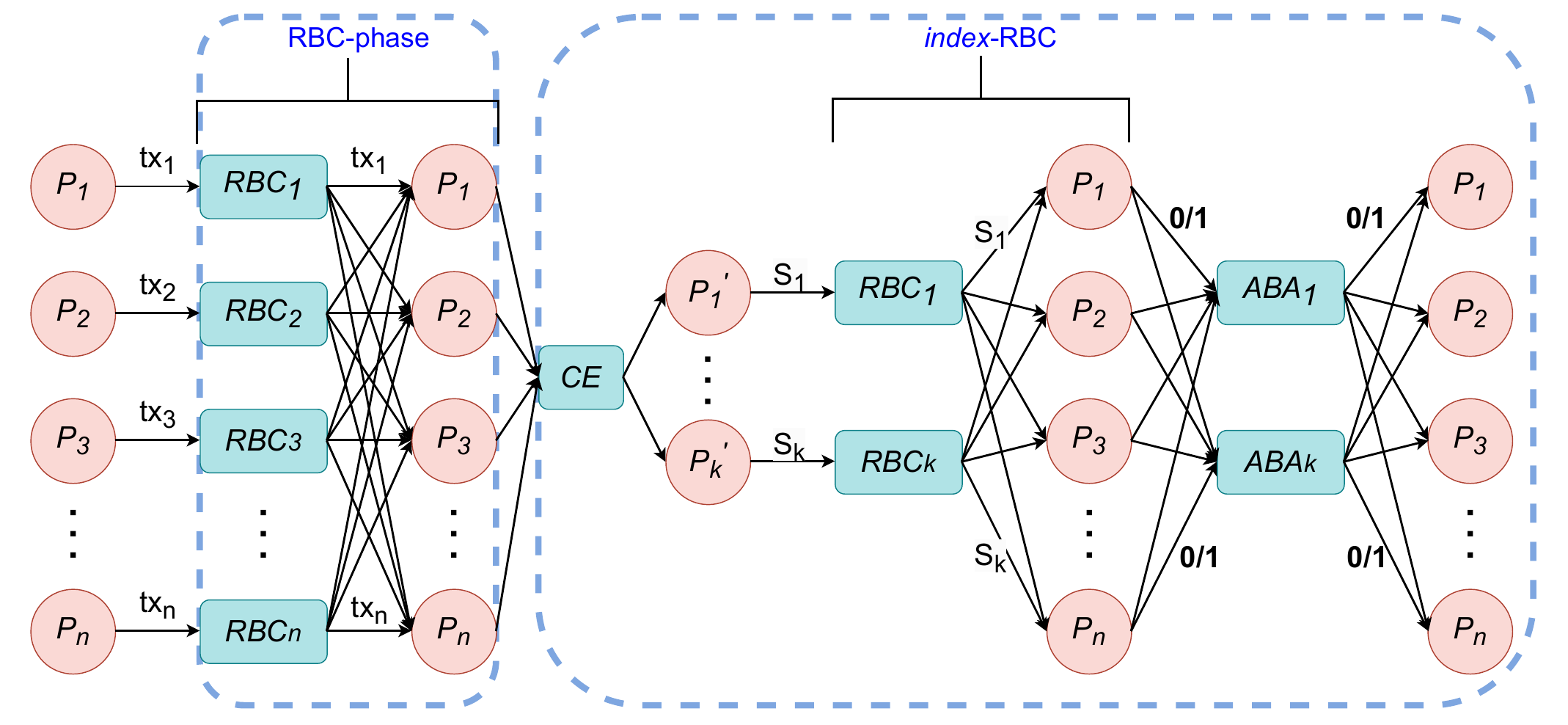}
 \caption{ACS of Dumbo1}
 \Description{Dumbo1}
 \label{fig:Dumbo1}
\end{figure}

In addition, Dumbo also gives another instance, Dumbo2. As we introduced in \ref{sect:MVBA}, it has been demonstrated that asynchronous BFT could be built on MVBA, although it is considered complex and impractical in~\cite{miller2016honey}. The authors of Dumbo examine MVBA again and point out that if the message size of the input value becomes small, the situation changes. Therefore, the authors propose Dumbo2 that utilises MVBA in another way: to agree on a set of indices rather than the proposed values themselves. To guarantee that the output of MVBA is valid (the output might be a fake set of indices for which some processes with those indices are faulty and have never broadcast), Dumbo2 uses a variation of RBC called provable RBC, which is realised by combining RBC and threshold signatures. This guarantees that once a message is delivered by provable RBC, it is indeed the case that at least $f+1$ processes have received this message. Once this set of indices is decided by MVBA, it is guaranteed that at least one honest process has received all values corresponding to these indices. The following idea is similar to that of Dumbo1.

\textbf{$\triangleright$ Remark.} MVBA and ACS are two techniques for realising asynchronous BFT. Both of these two mechanisms could be reduced to ABA, which is a randomised consensus protocol. Generally, a common coin is used to facilitate the termination of ABA. Threshold cryptography is widely used in all these protocols, including for generating the common coin and building provable RBC. The MVBA protocols were originally used in asynchronous BFT solutions~\cite{cachin2001secure,cachin2002secure,correia2006consensus} to agree on the proposed values directly, which was found impractical for blockchain scenarios \cite{miller2016honey}. Then, a number of works\cite{miller2016honey,duan2018beat, guo2020dumbo, liu2020epic,liu2021mib} utilise ACS, which was initially used for multiparty computation~\cite{ben1994asynchronous}, to realise atomic broadcast in blockchain scenarios. In addition, inspired by the notion of ACS, \cite{guo2020dumbo} uses MVBA to agree on the indices of RBC rather than the proposed values themselves to improve performance. The core idea of ACS-based asynchronous BFT is simple: input values to RBC first, and then use ABA to agree on which values should be accepted and wait until these values are eventually delivered by RBC. More optimisation works have been done on ACS and asynchronous BFT~\cite{gao2022dumbo,duan2023practical}; due to content limitation and since our focus is the core ideas of these protocols, we do not introduce them in this paper.

\section{Tree and DAG-based BFT}
\label{sect:treeandDAG}
Conventional BFT protocols like PBFT are considered to have scaling problems, as well as recent advanced BFT protocols for blockchain like HotStuff~\cite{li2021scalable,stathakopoulou2022state}. Researchers have made efforts to improve scalability by applying sharding and partition techniques, as we introduced in Section \ref{sect:scalingbft}. These solutions change the deployment of participants, and achieve scaling at the cost of reduced resilience. A t ree is the most intuitive data structure that could improve scaling, and it has been utilised in many cryptocurrency blockchains\cite{kogias2016enhancing,kogias2018omniledger} to address the bottleneck at the leader by organising processes in a tree topology. In addition, the more parallel-looking DAG structure is used in blockchain~\cite{WU2022102720} to organise transactions and ledgers , allowing participants to propose different blocks simultaneously. Since we focus on BFT protocols, we introduce the tree and DAG based solutions in BFT in detail.

\subsection{Tree and DAG-based Message Dissemination and Memory Pool Abstraction}

Tree and DAG are considered more efficient in nature than a chain due to their parallel structure for organising transactions and blocks~\cite{WU2022102720}. Researchers are also inspired by these structures and have attempted to utilise them to improve efficiency in BFT. Recently, some proposals have optimised  data dissemination schemes~\cite{neiheiser2021kauri} and memory pool~\cite{danezis2022narwhal} as extensions of existing BFT protocols.

\subsubsection{Kauri: Tree-based Message Dissemination and Aggregation}\

\noindent Kauri~\cite{neiheiser2021kauri} is typically designed as an extension of HotStuff~\cite{hotstuff2019yin}. Recall the communication model of HotStuff (see Section \ref{sect:hotstuff}), which has four phases. In each phase, the leader performs dissemination and aggregation procedures during each phase, as shown in Fig.\ref{fig:hotstuff}. Kauri assumes that most commonly the number of consecutive faults is small and the network is partially synchronous. It proposes a tree-based dissemination and aggregation scheme to reduce bandwidth consumption on the leader's side, where the leader is at the tree root. Essentially , Kauri is a speculation-based optimisation mechanism. Because if a process is faulty, all its children could fail to receive the message from the leader. Thus, every process will decide on its own value through a broadcast process anyway, either the value from its parent or a default value $\perp$ (if it receives no value from its parent before a timeout, which is used to guarantee liveness). The tree-based aggregation process is similar: a parent will collect all signatures from its children, form them into an aggregate signature, and send it back to its parent recursively until reaching the tree root. The topology and communication pattern are illustrated in Fig~\ref{fig:kauri}. If a correct leader fails to collect enough signatures, Kauri tries to reconfigure the system a certain number of times (determined by a specific mechanism based on an evolving graph; we do not introduce this detail here) to find a correct tree configuration, and will return to the original HotStuff star topology if no correct configuration can be found.

\begin{figure}[h]
 \centering
  \setlength{\abovecaptionskip}{-0.2cm}  %
 \setlength{\belowcaptionskip}{-0.2em} 
 \includegraphics[width=5.5in]{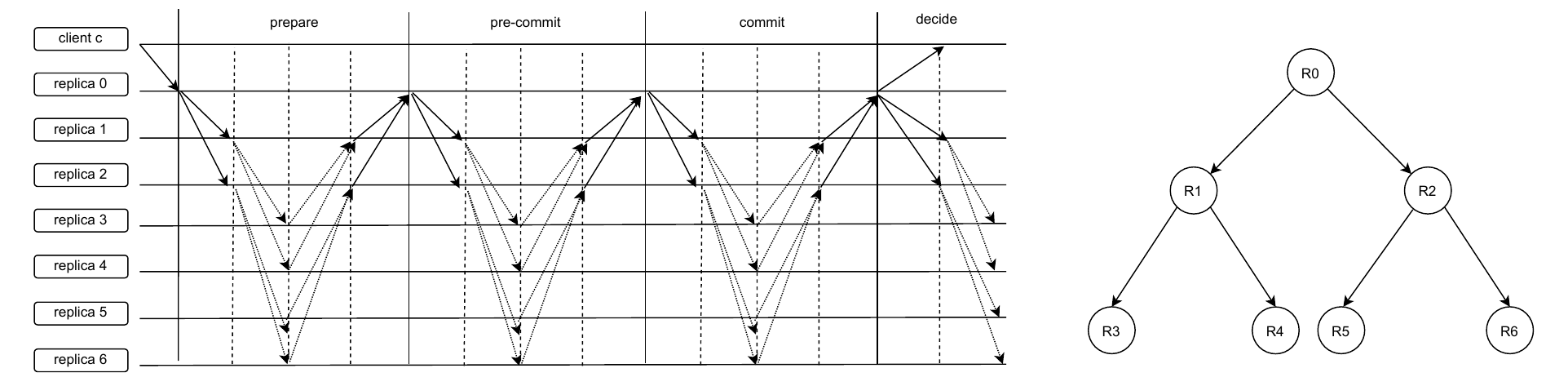}
 \caption{Tree communication pattern and topology}
 \Description{Kauri}
 \label{fig:kauri}
\end{figure}

\subsubsection{Narwhal: DAG-based Memory Pool}\
\label{sect:narwhal}

\noindent The authors in~\cite{danezis2022narwhal} pointed out that some consensus protocols in which the leader is expected to broadcast and collect information lead to uneven resource consumption. This phenomenon is also discussed in Kauri, as we introduced above. In addition, it is found in~\cite{danezis2022narwhal} that consensus protocols group a lot of functions into a monolithic protocol, such as Bitcoin and LibraBFT~\cite{baudet2019state}. In these protocols, the transactions will be shared among participants and then a subset of them are periodically re-shared and committed as part of the consensus protocol, which causes redundancy. Narwhal is proposed as a better memory pool that separates transaction dissemination (message delivery) from ordering in the consensus protocol. This is theoretically feasible; recall that we pointed out in Section ~\ref{sect:MVBA} that atomic broadcast (could be converted to consensus) could be realised by the reliable broadcast protocol plus the total ordering property. Similarly, Narwhal separates the monolithic consensus protocol into reliable transaction dissemination and sequencing, and the sequencing could be realised by any consensus protocol such as HotStuff or LibraBFT. The consensus protocol used only needs to be performed on a very small amount of metadata instead of the complete transactions/block to increase performance.

We first check how Narwhal realises reliable transaction dissemination. Initially, it assumes clients send transactions to different validators (processes). Ideally, a transaction only needs to be sent to one validator to save bandwidth. Each validator will accumulate the transactions to form a block. Then, validators reliably broadcast~\cite{bracha1987asynchronous} each block they create to ensure the integrity and availability of the block. Each block corresponds to a round $r$ (column in the DAG) and contains certificates for at least $2f+1$ blocks of round $r-1$. These certificates are used to confirm that the validator has received these blocks and prove they will be available. In addition, the authors propose that every block includes certificates of past blocks, from all validators. By doing this, a certificate refers to a block plus its full causal history (i.e., all blocks directly and indirectly connected to it); therefore, when a block is committed, all blocks in its causal history are committed simultaneously. An example of Narwhal is shown in Fig.~\ref{fig:Narwhal}. Even if the network is unstable and asynchronous, the DAG-based memory pool continues to grow. Because reliable broadcast is used, every validator eventually observes the same DAG and the same certificates.
The Narwhal memory pool could be integrated with various BFT protocols. Using HotStuff as an example, in the original HotStuff (see Section~\ref{sect:hotstuff} a leader proposes a proposal that is certified by other validators. To be integrated with Narwhal, a leader proposes to commit one (or several) certificates corresponding to the block(s) created by Narwhal. Due to Narwhal's design , once a block is committed, all blocks in its causal history are committed as well . In Fig.~\ref{fig:Narwhal}, once C1 is committed, all blocks in its causal history (blue vertices) are committed, and the same applies to the yellow and green vertices. Hence, a partial order is built: green vertices (C1)$\leftarrow$ red vertices (C2)$\leftarrow$ blue vertices (C3). To build a total order, one can deploy any deterministic algorithm on the partial order, e.g., by sorting by hash.

\begin{figure}[h]
 \centering
 \includegraphics[width=3.8in]{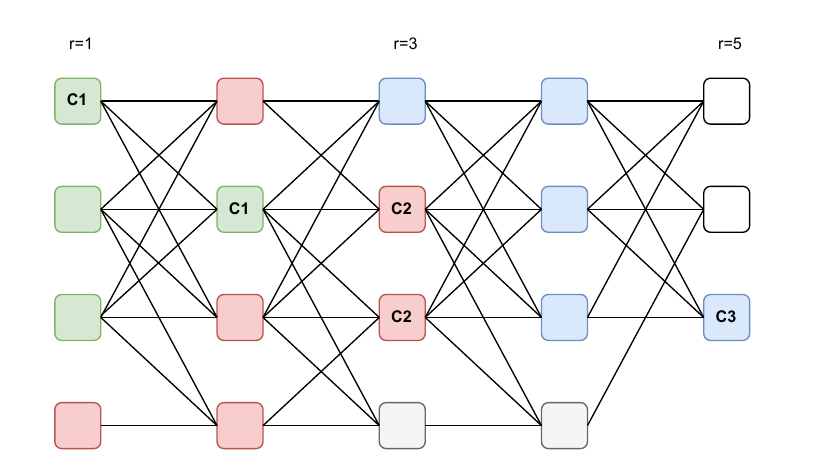}
 \caption{Example of Narwhal memory pool with HotStuff, where c2 in red indicates the vertex committed in second round and all its former connected vertices after (not committed) by c1, and so forth.}
 \Description{Narwhal}
 \label{fig:Narwhal}
\end{figure}

\subsection{DAG-based BFT}

\noindent Narwhal builds a DAG-based memory pool that abstracts the communication and could be integrated with any other partially synchronous BFT protocols. Nevertheless, asynchronous BFT protocols are also proposed that could be directly deployed on the DAG abstraction. These protocols share a similar core idea: utilize the DAG abstraction with reliable broadcast to build the communication history, then use threshold cryptography to build a common coin to select an anchor block every several rounds (i.e., in each column in the DAG) to establish a partial order, and finally use a deterministic algorithm to build the total order.

\subsubsection{DAG-Rider: DAG-based Atomic Broadcast}\ 

\noindent Similar to the DAG-based memory pool, DAG-Rider~\cite{keidar2021all} also builds a DAG to abstract communication history. The DAG of DAG-Rider is similar in structure , in which each vertex also refers to $2f+1$ vertices in the last round (called strong edges in DAG-Rider). The difference is, these $2f+1$ edges have been included in a vertex when this vertex is broadcasted, rather than being built after an availability certificate is created . In addition, with the goal of implementing an atomic broadcast protocol (see definition~\ref{def:atomic}), the validity property should be met . Therefore, for each vertex $v$ in round $r$, in addition to the $2f+1$ strong edges referring to the vertices in round $r-1$, it also refers to at most $f$ vertices in round $r'<r-1$ such that otherwise there would be no path from $v$ to them. The weak edges are used to guarantee validity, as every message that has been broadcasted must be delivered and needs to be included in the total order.

DAG-Rider assumes that each process broadcasts an infinite number of blocks (vertices). Once a process $p_i$ invokes atomic broadcast for a vertex $v$, it reliably broadcasts $v$. The round and source, a.k.a. sender, of the vertex are included in the broadcasted message so that the receiver process knows its location in the DAG. Once $p_i$ receives a vertex, it adds it to its buffer. Then, it continuously goes through its buffer to see if there is any vertex $v$ for which the DAG contains all the vertices that $v$ has a strong or weak edge to. Once this requirement is met , $v$ can be added to the DAG. Once $p_i$ has at least $2f+1$ vertices in the current round, it advances to the next round by creating and reliably broadcasting a new vertex $v'$, which is the new atomically broadcasted one ($p_i$ has an infinite number of vertices to be broadcasted).

Once the local DAG is built, each process needs to interpret it. The DAG is divided into waves, each of which consists of four consecutive rounds. This is because after three rounds of all-to-all sending and collecting accumulated sets of values, all correct processes have at least $2f+1$ common values, by the \textit{common-core} abstraction~\cite{canetti1996studies}. Therefore, at least $2f+1$ vertices $\in V$ in the last round of a wave have a strong path to at least $2f+1$ vertices $\in U$ in the first round of the same wave. The $2f+1$ quorum intersection guarantees that if an anchor vertex in (the first round of) wave $w$ is committed by some process $p_i$, then for every process $p_j$ and for every anchor vertex of a wave $w'>w$, a strong path exists between these two anchor vertices. In each wave $w$, an anchor vertex will be selected by a globally known common coin, and it is committed by a process $p_i$ if in $p_i$'s local DAG there are at least $2f+1$ vertices in the last round of $w$.
The common-core property guarantees that once an anchor vertex is selected, for any process $p_i$, it has $\frac{2f+1}{3f+1}\approx \frac{2}{3}$ probability that it has this vertex in its local DAG. It should be noticed that some processes might not have this anchor vertex in their local DAG, so they simply advance to the next wave. However, it should be ensured that all correct processes commit the same anchor vertices. In other words, if some processes committed an anchor vertex, those who do not have it in their local DAG must commit it later. To ensure this, when $p_i$ commits an anchor vertex $v$ in some wave $w$, and there is a strong path from $v$ to $v'$ such that $v'$ is an uncommitted anchor vertex in a wave $w'<w$, then $p_i$ commits $v'$ in $w$ as well. It is guaranteed by quorum intersection that if any process ever commits $v'$ there must be a strong path, otherwise $v'$ is not committed by anyone. The anchor vertices committed in the same wave are ordered by their round wave numbers, so the anchor vertices of earlier waves are ordered before the later ones. An example of ordering is illustrated in Fig.\ref{fig:DAG-Rider}. Fig.\ref{fig:DAG-Rider} shows the local DAG of $p_1$. The highlighted vertices $v_2$ and $v_3$ are the anchor vertices of waves 2 and 3, respectively. $v_2$ is not committed in wave 2 since there are fewer than $2f+1$ vertices in round 8 with a strong path to $v_2$, but it is met in wave 3 since there are $2f+1$ vertices in round 12 with a strong path to $v_3$. Since there is a strong path from $v_3$ to $v_2$ (highlighted), $p_i$ commits $v_2$ before $v_3$ in wave 3.

\begin{figure}[h]
 \centering
 \setlength{\abovecaptionskip}{-0.2cm}  %
 \setlength{\belowcaptionskip}{-0.2em} 
 \includegraphics[width=4.8in]{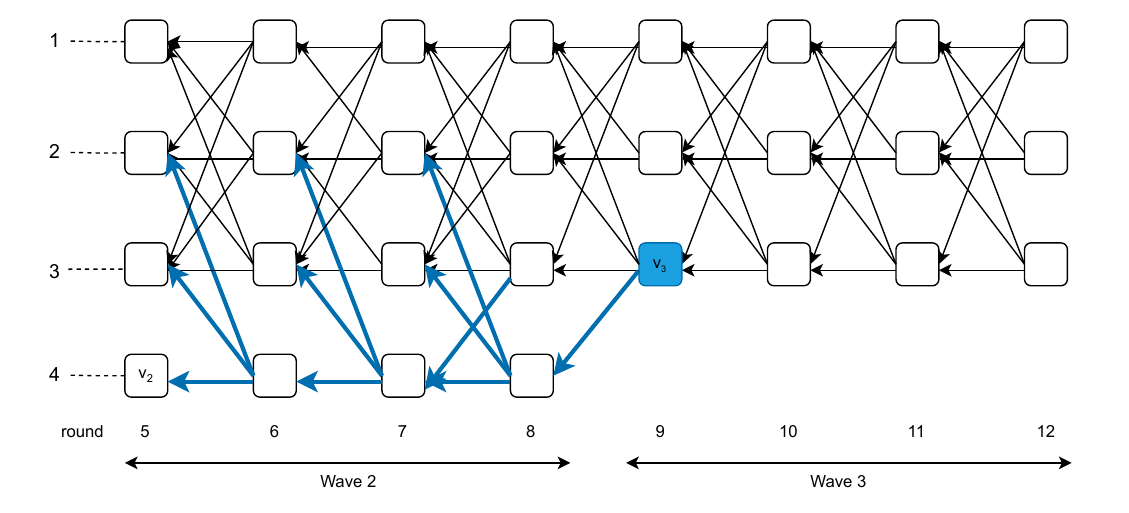}
 \caption{Illustration of DAG-Rider}
 \Description{DAG-Rider}
 \label{fig:DAG-Rider}
\end{figure}

It's important to note that although the \textit{common-core} property guarantees the commit probability, it could be attacked by an adversary who can fully control the network. Therefore, the common coin should be unpredictable. This motivation is similar to the unpredictable property in HoneyBadgerBFT and Dumbo. In DAG-Rider, the authors also use threshold signatures, and the elected anchor vertex in a wave $w$ can only be revealed after the processes complete $w$. Since reliable broadcast is used, every process will eventually observe the same DAG. After the anchor vertices and all vertices in its causal history are committed, the remaining step is to add some deterministic ordering scheme to order the causal history. Finally , every process only needs to atomically deliver the causal history vertices one by one, which is guaranteed to be the same for everyone. It's important to note that atomic broadcast guarantees that every message, if broadcasted, will be delivered. Therefore, if one utilises DAG-Rider for blockchain, an external predicate is needed to check the validity of blocks or transactions.

\subsubsection{Tusk: The Asynchronous Consensus on Narwhal}\

We introduced in Section ~\ref{sect:narwhal} that the Narwhal DAG memory pool could be integrated with various partially synchronous BFT protocols to propose several anchor blocks as well as their causal history to be committed. However, these partially synchronous BFT protocols make no progress when the network is asynchronous, such as under DDoS attacks. Therefore, the authors propose an asynchronous consensus protocol called Tusk on the DAG. The core idea is similar to that of DAG-Rider: the vertices are broadcasted by reliable broadcast to build the DAG (in Tusk this step has been done by Narwhal), then use threshold signatures to generate a common coin that elects an anchor vertex every several rounds, and processes try to commit the anchor together with all vertices in its causal history to build a partial order, then apply any deterministic scheme to achieve total ordering. However, Tusk is a total ordering protocol on Narwhal rather than an atomic broadcast protocol, which means it does not guarantee that once a vertex is broadcasted, it is ensured to be delivered and ordered. Therefore, in Tusk no weak path exists. If a vertex arrives very late and no certificate of availability is signed, it could be an orphan vertex and thus ignored.

In Tusk, a process (a.k.a. validator in Narwhal) interprets every three rounds of the DAG in Narwhal as a consensus instance. The links in the first two rounds are interpreted as all-to-all message exchange and the third round produces a common coin to elect a unique vertex from the first round to be the leader. To reduce latency, the third round is combined with the first round of the next consensus instance. The goal of this interpretation process is the same as DAG-Rider in that, with a constant probability, it aims to safely commit the anchor of each instance. Once an anchor is committed, its entire causal history (a sub-DAG) in the DAG is also committed and could be totally ordered by any deterministic ordering scheme such as topological sorting~\cite{kahn1962topological}. The commit rule is simple: a process commits an anchor vertex $v$ of an instance $i$ if its local DAG includes at least $f+1$ nodes in the second round of $i$ with links to $v$. Similar to DAG-Rider, some processes might advance to the next round without observing the elected anchor vertex. To solve this problem and guarantee that every process commits the same anchors, once an anchor vertex $v$ is committed, Tusk also continuously checks if there are any previous anchor vertices in $v$'s causal history, and commits previous anchor vertices if they exist . The commit rule is illustrated in Fig \ref{fig:Tusk}.
The commit rule requires $f+1$, which is different from DAG-Rider; the probability of each anchor meeting the commit rule is $\frac{1}{3}$.

\begin{figure}[tbp]
 \centering
 \setlength{\abovecaptionskip}{0.1em}  %
 \setlength{\belowcaptionskip}{-0.2em} 
 \includegraphics[width=4in]{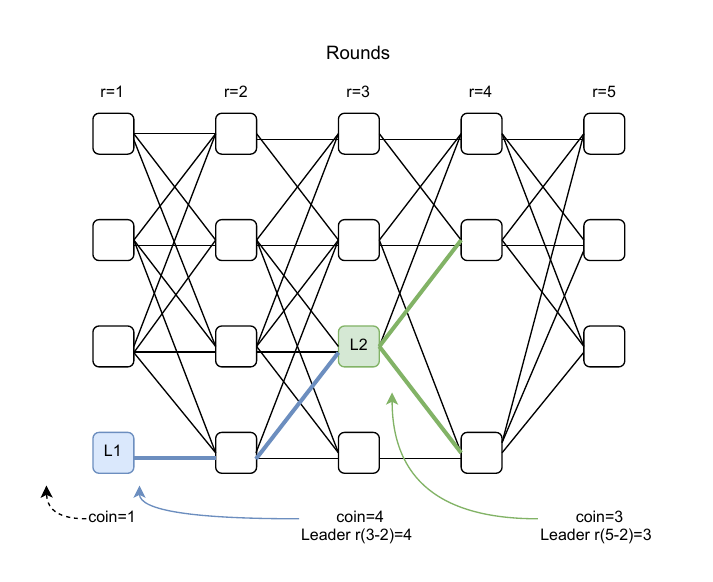}
 \caption{Illustration of commit rule in Tusk}
 \Description{Tusk}
 \label{fig:Tusk}
\end{figure}

\textbf{$\triangleright$ Remark.} All DAG-based BFT protocols are based on a similar idea: to divide consensus or atomic broadcast into two separate protocols: a reliable broadcast protocol and a total ordering protocol. This scheme separates message dissemination from ordering to improve performance and fit better in asynchronous networks. With this approach, each process only needs to check its local DAG to make progress and no further communication is needed. Recently, a DAG-based asynchronous BFT protocol with optimisation for synchronous networks has been proposed~\cite{spiegelman2022bullshark}, which maintains the same core ideas as other asynchronous DAG-BFT protocols but adds timeouts to boost performance in synchronous networks.
The DAG-based abstraction of building a DAG to represent communication and interpreting it locally has been considered before~\cite{dolev1993early,chockler1998adaptive}, but was not realised in the Byzantine setting . Tree~\cite{sapirshtein2016optimal} and DAG-based~\cite{WU2022102720,baird2016hashgraph,churyumov2016byteball,sompolinsky2018phantom} communication abstractions and system structures are also widely used in DLT, which is an extended notion of blockchain where the structure is no longer required to be a ``chain''.

\section{Summaries and Comparisons of Different BFT Protocols}
\label{sect:sumBFT}
\begin{table}[tbp]
 \centering
 \caption{Comparison of core protocols under different timing assumption.}
 \scalebox{0.85}{

  \begin{tabular}{p{13.5em}p{8em}p{7em}p{4.5em}p{9em}}
  \toprule
  Protocol & Timing & Termination & \textit{Resilience} & Components \\
  \midrule
  Byzantine agreement (Sign)\cite{lamport1982byzantine} & Synchronous & Deterministic & \textit{N>2f} & Message, signature \\
  Byzantine agreement (Oral)\cite{lamport1982byzantine} & Synchronous & Deterministic & \textit{N>3f} & Message \\
  Ben-Or's binary consensus\cite{ben1983another} & Asynchronous & Eventually & \textit{N>2f/N>5f} & Message, random coin \\
  Bracha's RBC\cite{bracha1987asynchronous} & Asynchronous & Eventually & \textit{N>3f} & Broadcast message \\
  Vector consensus\cite{correia2006consensus} & Asynchronous & Eventually & \textit{N>3f} & RBC, ABA, MVBA \\
  ACS\cite{ben1994asynchronous}  & Asynchronous & Eventually & \textit{N>3f} & RBC, multiple ABA \\
  PBFT\cite{castro1999practical} & Partial synchronous & Deterministic & \textit{N>3f} & Message, signature \\
  \bottomrule
  \end{tabular}%

  }
 \label{tab:comparisontiming}%
\end{table}%

The comparison of core protocols of synchronous, asynchronous and partially synchronous systems is shown in Table.~\ref{tab:comparisontiming}.

Originating from Lamport's Byzantine generals problem, the distributed fault tolerance research could be classified into three main categories: synchronous, asynchronous and partially synchronous, according to timing assumptions. Synchronous algorithms were the earliest researched. However, due to the requirement of synchrony, synchronous algorithms are generally deployed with clock synchronisation mechanisms. For example, in aviation control systems, a time-triggered bus might be used~\cite{rushby2001bus}, for which clock synchronisation is a fundamental requirement. Based on the synchronous bus, applications that perform safety-critical functions must generally be replicated for fault tolerance. For example, several computers need to perform the same computation on same data, and any disagreement could be viewed as a fault; then comparisons can be used to detect faults. A vital requirement for this approach is that the replicated computers must work on the same data. Therefore, the data distribution needs to be identical for every computer. Byzantine agreement (or Byzantine broadcast) can then be used to achieve this requirement. In addition, due to synchrony, the consensus problem which requires consistency and total order could be solved by executing Byzantine agreement one by one, because everyone knows the order. The improvements and use cases of synchronous Byzantine agreement protocols are shown in Table.~\ref{tab:sync}.


\begin{table}[tbp]
\centering
\caption{Synchronous protocols}
\scalebox{0.55}{
\begin{tabular}{lllll}
\toprule
Protocol & \textit{Resilience} & Components & Functions & Additional Requirements \\
\midrule
Byzantine agreement (Sign)\cite{lamport1982byzantine} & \textit{N>2f} & Message, signature & Guarantees same value for honest participants & None \\
Byzantine agreement (Oral)\cite{lamport1982byzantine} & \textit{N>3f} & Message & Guarantees same value for honest participants & None \\
SAFEbus/TTP\cite{hoyme1992safebus,kopetz1993ttp} & \textit{N>3f} & Message & Builds interactive consistency using Byzantine agreement & Clock synchronisation \\
ScalableBA\cite{king2011breaking} & \textit{N>3f with prob. 1} & Message, secret sharing & Significantly reduces communication complexity & Private channels \\
\bottomrule
\end{tabular}%
}
\label{tab:sync}%
\end{table}%

However, due to the strong requirement of synchrony, synchronous Byzantine protocols have limited utilisation ; they are typically implemented with physically proximate participants, such as in SAFEbus~\cite{hoyme1992safebus} and TTP~\cite{kopetz1993ttp}. However, when the timing assumption changes from synchronous to partially synchronous, the consensus problems could be transformed into an agreement problem that ensures every honest participant has the same value of request plus an ordering problem that ensures every honest participant executes the requests in the same order.
In the representative partially synchronous consensus protocol PBFT, each follower needs $2f$ \textit{prepare} messages that match the \textit{pre-prepare} message before \textit{commit}ting. In the worst case, a follower might receive $f$ \textit{prepare} messages from Byzantine nodes first, and in this scenario , it needs to gather all $3f$ \textit{prepare} messages. This also shows why PBFT needs partial synchrony for liveness: if the last $f$ messages are delayed for a long time before GST, during this period, the follower will be unable to distinguish whether the mismatch is caused by the leader sending equivocal messages or $f$ Byzantine nodes sending fake messages. Then, after GST, followers will detect if the mismatch is caused by the leader and propose a \textit{view change}. The agreement protocol is integrated within the PBFT consensus: the $2f$ \textit{prepare} messages plus 1 \textit{pre-prepare} message guarantee that no honest participant will choose a different value after the \textit{prepare} phase. After the agreement protocol, the \textit{commit} phase ensures that the order of requests is identical for all honest followers.
The quorum size in PBFT is $2f+1$ out of $3f+1$ overall nodes; this is because the leader might be Byzantine, and $2f+1$ ensures that at least $f+1$ messages are from honest participants , hence at least 1 honest follower exists in any quorum. Because the honest follower does not collude with the Byzantine leader, the leader will never be able to separate the system and convince different quorums to commit two different values (Lemma. \ref{lemma:3.2}). However, if the leader cannot send equivocal messages to followers, the quorum size could possibly be reduced in certain scenarios . If we can guarantee that all participants cannot send equivocal messages to others, e.g., by using TEE~\cite{chun2007attested}, the quorum size could be reduced to $f+1$ among $2f+1$ nodes. BFT consensus protocols in the partially synchronous model are listed in Table.~\ref{tab:psync}.

\begin{table}[htbp]
\centering
\caption{Partial synchronous protocols}
\scalebox{0.6}{
\begin{tabular}{lllp{16.785em}l}
\toprule
Protocol & \textit{Resilience} & Components & \multicolumn{1}{l}{Functions} & Additional Requirements \\
\midrule
PBFT\cite{castro1999practical} & \textit{N>3f} & Message, signature & Guarantees unique order for SMR & GST for liveness \\
Zyzzyva\cite{kotla2010zyzzyva} & \textit{N>3f} & Message, signature, crypto proof & Guarantees unique order for SMR & Speculation \\
Q/U\cite{abd2005fault}  & \textit{N>5f} & Client, quorum, crypto validation & Achieves similar SMR functionality using quorums & Shifts tasks to client \\
HQ\cite{cowling2006hq}  & \textit{N>3f} & Client, quorum, certificates & Improves contention and resilience of Q/U & Certificates for solving contention \\
RBFT\cite{aublin2013rbft} & \textit{N>3f} & f+1 BFT instances & Makes BFT scalable & N>3f in each instance \\
CBASE\cite{kotla2004high} & \textit{N>3f} & BFT with relaxed order & Allows concurrent execution of independent requests & A paralleliser for contention safety \\
Eve\cite{kapritsos2012all}  & \textit{N>3f} & Parallel requests with speculation & Speculatively executes requests concurrently & Re-executes when executions diverge \\
EZBFT\cite{arun2019ez} & \textit{N>3f} & Enhanced replicas and multiple BFT & Multiple instances to boost throughput & Clients need to resolve contention \\
ME-BFT\cite{wu2021me} & \textit{N>3f} & BFT, full and light replicas & Reduces all-to-all communication to scale by structure & N>3f in full replicas, CRDT \\
SBFT\cite{gueta2019sbft} & \textit{N>3f} & BFT, threshold/aggregated signature & Reduces all-to-all communication to scale by signature & Collector to collect signatures \\
Mir-BFT & \textit{N>3f} & BFT with parallel partition & Multiple leaders to achieve scalability & Partitions the request hash space \\
CNV\cite{correia2004tolerate}  & \textit{N>2f} & BFT, TMO & Uses TMO to restrict Byzantine failures & TTCB for basic security service \\
A2M\cite{chun2007attested}  & \textit{N>2f} & BFT, attested append only memory & Uses append only memory to avoid equivocation & Realises the append only memory \\
MinBFT\cite{veronese2011efficient} & \textit{N>2f} & BFT, USIG & USIG assigns the sequence & Public key crypto for USIG \\
CheapBFT\cite{kapitza2012cheap} & \textit{N>2f} & MinBFT, CASH & CASH manages counters and can authenticate messages & Falls back to MinBFT after failure \\
Tendermint\cite{buchman2018tendermint} & \textit{N>3f} & PBFT, power-based quorum intersection & Replaces 2f+1 messages with 2f+1 voting power & None \\
Algorand\cite{gilad2017algorand} & \textit{N>3f} & Byzantine agreement, VRF, binary consensus & Byzantine agreement for synchrony, binary consensus for asynchrony & Common coin for binary consensus \\
HotStuff\cite{hotstuff2019yin} & \textit{N>3f} & Threshold signature, quorum certificate & Replaces message quorums with quorum certificates & Sufficient computational power \\
Kauri\cite{neiheiser2021kauri} & \textit{N>3f} & Tree-based dissemination and aggregation & Extension of HotStuff, replaces the communication model & Tree topology \\
\bottomrule
\end{tabular}%
}
\label{tab:psync}%
\end{table}%

\begin{table}[htbp]
\centering
\caption{Asynchronous protocols}
\scalebox{0.6}{
\begin{tabular}{llp{12.915em}p{17.785em}p{13.785em}}
\toprule
Protocol & \textit{Resilience} & \multicolumn{1}{l}{Components} & \multicolumn{1}{l}{Functions} & Additional Requirements \\
\midrule
Ben-Or's\cite{ben1983another} & \textit{N>2f/N>5f} & Message, random coin & Terminates with probability 1 & Binary value \\
Bracha's RBC\cite{bracha1987asynchronous} & \textit{N>3f} & Broadcast message & Ensures that the sender cannot equivocate & None \\
Vector consensus\cite{correia2006consensus} & \textit{N>3f} & RBC, ABA, MVBA & RBC to restrain Byzantine failure, ABA for asynchrony, MVBA for agreement, deterministic order for atomic broadcast & Common coin for binary consensus \\
ACS\cite{ben1994asynchronous}  & \textit{N>3f} & RBC, multiple ABA & RBC to restrain Byzantine failure, multiple ABA for agreement, deterministic order for atomic broadcast & Suitable for batch settings \\
HoneyBadgerBFT\cite{miller2016honey} & \textit{N>3f} & ACS, threshold encryption & ACS for agreement, threshold encryption to resist censorship & Suitable for blockchains \\
BEAT\cite{duan2018beat} & \textit{N>3f} & Asynchronous atomic broadcast & Implements asynchronous broadcast for different scenarios & Fingerprinted cross-checksum \\
Dumbo\cite{guo2020dumbo} & \textit{N>3f} & Provable-RBC, ABA, MVBA, revised ACS, threshold signature & Crypto election to reduce ABA instances; uses MVBA to agree on indices rather than on the requests & \multicolumn{1}{p{15em}}{Suitable for blockchains/batch settings} \\
DAG-Rider\cite{keidar2021all} & \textit{N>3f} & RBC, DAG, threshold signature, common coin & DAG for communication abstraction, common-core for committing anchor blocks and their causal history, deterministic order for atomic broadcast & Each process broadcasts an infinite number of blocks \\
Tusk\cite{danezis2022narwhal} & \textit{N>3f} & RBC, DAG, VRF, common coin & An ordering protocol for the DAG mempool & Unlimited memory \\
\bottomrule
\end{tabular}%
}
\label{tab:addlabel}%
\end{table}%

In the asynchronous settings, due to the restrictions of the FLP impossibility, which states that no ``deterministic" consensus protocol exists, consensus protocols relax the termination requirement i.e., the protocol eventually terminates with probability 1 . In all the asynchronous consensus protocols we introduced , RBC is used to guarantee that any sender cannot send equivocal messages (otherwise the protocols might be too complex). Therefore, beyond the RBC protocol, every node is not a traditional Byzantine node; its malicious behaviour is restrained, and it can only either broadcast same messages or not broadcast at all. Three research paths are being conducted: the first one is to utilise ABA to construct MVBA, then implement vector consensus from MVBA. Once the vector is confirmed, atomic broadcast (equivalent to BFT-SMR) could be built by delivering values in the vector in a deterministic order. Another path is similar; it implements ACS using ABA, then implements atomic broadcast. The difference is that ACS is more suitable for batching, i.e., $n$ participants exchange messages simultaneously. The third pathway does not utilise ABA for randomisation to evade the FLP impossibility result. Instead, it first broadcasts messages by RBC to ensure that everyone will eventually receive the same messages, then uses a DAG to build the message history and applies randomness by utilising a VRF to select anchor vertices with a probability. Finally , a deterministic order could be built on the anchor vertices to form atomic broadcast.
The quorum size in asynchronous consensus is $2f+1$ among $3f+1$ nodes. However, in contrast to partially synchronous consensus where GST exists, in asynchronous consensus a node might never have the chance to receive $3f+1$ messages, and thus cannot rely on it for liveness. Therefore, once a node observes $2f+1$ messages (or that instances have terminated), it must move on to the next step as it can never determine whether the remaining $f$ messages will be delivered or not, because in an asynchronous network, one can never distinguish if a message is delayed or lost. BFT consensus protocols in the partially asynchronous assumptions are summarised in Table.~\ref{tab:psync}.


\section{BFT in Wireless Environment}
\label{sect:BFT in wireless environment}
So far, nearly all the reviewed solutions imexplicitly assumed a wired connection, which could be seen from their message assumptions: for nearly all BFT (even CFT) solutions in different network environment, such as early years' Ben-or consensus \cite{ben1983another}, Paxos \cite{lamport2005generalized} and the most popular PBFT \cite{castro1999practical}, they all need a reliable channel, i.e., they assume a message could be delayed for arbitrary time through asynchronous network condition, but will eventually be delivered, and the safety property of these consensus exactly rely on this assumption. However, in wireless scenario, the network challenge is not only limited to the timing problem in asynchronous scenarios, but also face unique issues including limited resources, unreliable signals, unstable communication,  dynamic topology and various communication protocols \cite{xu2024wireless}, which makes the wireless BFT problem more complex, and existing BFT protocols might face significant challenges. In this section, we will analyse the challenges of deploying wireless BFT (Section \ref{sect:challenges and impact on wireless BFT}), and introduce recent works on analysing BFT resource (Section \ref{sect:wireless BFT resource}) and analysing the probabilistic BFT consensus result (Section \ref{sect:probabilistic wireless BFT}). Finally, we present current BFT solutions in wireless applications, including wireless blockchains (Section \ref{sect:BFT in wireless applications and wireless blockchain}).

\subsection{Challenges and Impact on Wireless BFT}
\label{sect:challenges and impact on wireless BFT}
\begin{itemize}
    \item \textbf{Unreliable Communications:} Wireless communication are susceptible to physical obstructions, frequent mobility and communication ranges, leading to frequent communication failure. The limited frequency resource increases the probability of data collisions and errors, further affecting communication reliability.
    \item \textbf{Resource and Energy Limits:} Wireless devices such as mobile devices and IoT sensors, are limited by resource and energy constrains, further restricting their capabilities and communication durations. Wireless devices cannot consistently participate in consistent message passing and data exchanges, leading to significant challenges of BFT implementation.
    \item \textbf{Dynamic Topology and Network:} Wireless devices may move position frequently, causing network topology changes. The dynamic of network environment raise challenges in maintaining stable and synchronous network, increasing consensus complexity.
    \item \textbf{Impact on BFT Primitives:} The unreliable communications introduce unique challenges that significantly impact the performance and reliability of the BFT primitive reliable broadcast (Definition \ref{def:reliablebroadcast}). The unstable network and communication introduce inconsistencies in message delivery, hindering  the effectiveness of traditional reliable broadcast primitive. Additionally, if an ideal reliable broadcast primitive cannot be built, i.e., unreliable links always exist, building atomic broadcast (a.k.a. total order broadcast, Definition \ref{def:atomic}) primitive is impossible \cite{xu2024wireless}.
\end{itemize}

\subsection{Wireless BFT Resource Usage and Optimisation}
\label{sect:wireless BFT resource}

While it is well established that BFT is a resource-sensitive solution, its demands can be even greater in wireless communication environments. To address this challenge, researchers \cite{xu2024wireless} have suggested optimisations such as hierarchical or multi-layered consensus and Gossip-based protocols, which have been analysed from a resource perspective in \cite{zhang2021much}. 
Zhang et al. \cite{zhang2021much} investigate the relationship between PBFT performance and communication resource requirements, highlighting that PBFT necessitates a total of $N + 2N^2$ communications, where $N$ is the total number of nodes. They further demonstrate that adopting a multi-layered approach can reduce communication complexity to $O(N^{3/4})$, aligning with the optimisations suggested in \cite{xu2024wireless}. 
Additionally, their study reveals that the spectrum requirement for PBFT is $2N+1$ and that higher transmission power is essential to maintain secure consensus, particularly as the number of faulty nodes increases. While greater communication resources (e.g., spectrum and transmission power) can improve throughput and reduce latency, PBFT’s scalability in wireless networks remains limited due to its quadratic growth in communication complexity, making it inherently less efficient for large-scale deployments.

\subsection{Probabilistic Wireless BFT}
\label{sect:probabilistic wireless BFT}
In this subsection, we introduce two kinds of works, one is analysing the reliability through probabilistic analysis, the other is bypassing the impossibility result on atomic broadcast with practical probability by a novel DAG-based data dissemination mechanism.

\subsubsection{Reliability Analysis}\ 

\noindent Reliability calculation is crucial in wireless scenarios because the extra uncertainty from wireless channels directly affects the performance of distributed consensus systems and ensuring a high consensus success rate is a primary challenge. Xu et al. \cite{xu2022wireless} proposed a probabilistic reliability analysis method for both wireless PBFT and Raft in their research. The core of this method is to model the reliability of nodes and links in wireless communication environments as random variables. This method decomposes the consensus process into multiple stages, and uses a chain probability model to derive a closed-form expression for the overall consensus success rate, thereby quantitatively characterizing the impact of a single node or link failure on the overall consensus reliability of the system. Another work by Li et al. \cite{li2023raft} also used a probabilistic model to analyze the reliability of the RAFT consensus protocol in a wireless environment. Especially, \cite{li2023raft} assumed more comprehensive failure conditions by employing a probability density function based computation method to capture the dynamics of node and link failures. In addition, this work also proposed performance indicators, Reliability Gain and Tolerance Gain, which provide theoretical guidance for fast estimation of the reliability of the consensus systems.

\subsubsection{Bypassing the Impossibility Result}\ 

\noindent Recognising the impossibility of achieving atomic broadcast in unreliable wireless communication channels, researchers suggested utilising probabilistic broadcast such as gossip-based protocols \cite{zhang2021much,xu2024wireless} and virtual communication protocols \cite{xu2024wireless} that combine multiple mechanisms such as probabilistic dissemination  mechanism (e.g., gossip) and hierarchical consensus.
Wu et al. \cite{wu2025when} designed a DAG-based message dissemination protocol with two-layered DAG-based hierarchical consensus. The protocol requires each message to carry part of the historic communication, and use the DAG to build the complete communication history, which achieves reliable broadcast with practical probability. Further, they build a two-dimension DAG structure, which orders the messages disseminated through the probabilistic reliable broadcast protocol to form an atomic broadcast protocol with practical possibility.

\subsection{BFT in Wireless Applications}
\label{sect:BFT in wireless applications and wireless blockchain}
Although facing significant challenges, wireless BFT is still developed and deployed in wireless applications. Feng et al. \cite{feng2023wireless} proposed a BFT consensus optimised for wireless vehicle to vehicle networks for autonomous driving. They pointed out the sensor reading in vehicles might be inaccurate, even incorrect, and the wireless channel is unreliable. To achieve consistent decision making, they utilise PBFT consensus as the core mechanism to achieve consistency among the whole network. For the impossibility result for total order broadcast, the authors add a gossip-based message dissemination phase to synchronise failure nodes to enhance the success rate. The multi-layered protocol designed by Wu et al. \cite{wu2025when} is specifically tailored to vehicular network at application layer, which support the decision making among vehicles, as well as the blockchain systems which need total ordering. Liu et al. \cite{liu2025partially} designed a wireless communication protocol at MAC layer to support reliable communications for seamless adaptation from wired to ad hoc wireless networks, further support deploying traditional wired consensus such as PBFT and HotStuff on wireless network. Cao et al. \cite{cao2024hierarchical} designed a hierarchical consensus framework that specific tailored to negotiation scheme for multi-platoon cooperative control. Consensus in this framework is divided into intra-platoon and inter-platoon, to achieve better scalability and lower delay. Additionally, they use digital twin to model vehicle position and predict vehicular network topology.
Xie et al. \cite{xie2024aircon} designed a over-the-air computation-based \cite{nazer2007computation} consensus protocol that enables users to transmit their hash symbols to the base station simultaneously using the same wireless channel, and use a hash consistency verification scheme in the physical layer, wherein each user determines whether its hash value is consistent or not with the aggregated hash symbols without decoding the signal. The hash symbols is therefore acting as a failure detector (Section \ref{Sect:FLP}) to reduce wireless resource usage and degrade the consensus complexity.

\section{Future BFT Research Issues}
\label{sect:futureBFT}
The BFT problem has been a long-standing research field since Lamport~\cite{lamport1978time} proposed it in the 1970s. A series of notions and solutions have been proposed since then , including SMR, reliable broadcast, and randomised consensus, which have built the cornerstone for distributed services. BFT has radiated energy again with the advance of blockchain, which, although it has more strict requirements, is essentially an SMR problem and could be built on BFT-SMR with optimisations and justifications. It also turns out that with the fast development of the Internet of Everything~\cite{kong2022edge}, networks are becoming more collaborative to support high-level applications, which generates demand for consensus protocols to provide a consistent and trustworthy cooperative platform. In this section, we introduce future research topics in BFT, including optimisations of the protocols themselves, adaptation to deployment on various applications, and improvements for future communication environments.



\subsection{Fundamental Research and Optimisations on the BFT Primitives}
\subsubsection{Fault Tolerance without Total Order}\ 

\noindent Recently proposed BFT solutions~\cite{miller2016honey,hotstuff2019yin, duan2018beat,guo2020dumbo} are total ordering protocols for SMR. Although SMR is the gold standard for implementing ideal functionality, it incurs higher overhead as in some cases, the cost of totally ordering all transactions and requests is unnecessary. It was found by Lamport~\cite{lamport2005generalized} that sometimes only a weaker problem than total ordering consensus needs to be solved. This finding has led to fundamental research like EPaxos~\cite{moraru2013there}. In addition, this finding also appears in cryptocurrency blockchain settings~\cite{lewenberg2015inclusive}. Further, it is also considered in general blockchain settings~\cite{karlsson2018vegvisir, wu2021me} that Conflict-free Replicated Data Types~\cite{shapiro2011conflict} have weaker requirements on ordering. Releasing the total ordering requirements could decrease system overhead. Thus, protocols based on weak ordering requirements, as well as methods to achieve weak ordering in current total ordering protocols deserve further research efforts.

\subsubsection{Scalability and Performance}\ 

\noindent Scalability of BFT is bounded by complexity, including communication complexity and running time complexity. Sharding technology and cryptography are used to reduce communication complexity (e.g.,~\cite{amir2006scaling} and \cite{hotstuff2019yin}, respectively), but sharding will alter the deployment structure and reduce the fault tolerance ability, while cryptography requires more powerful computation . The balance between computation and complexity should be researched.

In the early work on asynchronous MVBA, the protocols had a communication complexity of $O(n^2)$ for agreement on each message ($O(n^3)$ if there are $n$ peers). Recently, asynchronous BFT protocols utilize ACS to reach asynchronous agreement on a common core. HoneyBadgerBFT uses $n$-parallel ACS to reduce the communication complexity to $O(n^2)$ ($O(n)$ for each message). Dumbo reduces the number of ABA instances to optimise the time complexity and proposes to carefully use MVBA to further reduce time complexity. However, an open question remains: due to the $n$-parallel ABA instances, the time complexity of ACS is $O(log_n)$~\cite{guo2020dumbo} (in Dumbo-ACS, it is reduced to $O(log_{\kappa})$, $\kappa$ is a constant). However, we expect that the time complexity could be reduced to $O(1)$ in future work.

\subsubsection{Configuration and Reconfiguration}\ 

\noindent In nearly all of the recently proposed works~\cite{hotstuff2019yin, miller2016honey, duan2018beat,gao2022dumbo,guo2020dumbo, danezis2022narwhal, keidar2021all, spiegelman2022bullshark} that utilise cryptography, it is assumed that a reliable third party exists to initially distribute keys and configure the Public Key Infrastructure (PKI). However, this open problem needs to be solved in real implementations, as in some blockchain scenarios there are no trusted third parties to initialise the platform. In addition, the participants might change during the system's progress even after initial configuration. Therefore, future research should consider methods for initial configuration without trusted third parties and reconfiguration mechanisms during the system's progress, both of which should not significantly impact system performance.


\subsection{Deployment in Different Types of Blockchain Applications} It is known that most of the permissioned blockchains are based on BFT-SMR~\cite{wang2022bft}. This is because although BFT-SMR protocols provide a higher throughput, they generally less scalable, which generally only support up to several hundred of nodes. It is less decentralised due to the utilisation of PKI which is a centralised service to some degree. Nevertheless, there are still efforts to implement BFT in permissionless blockchains. These works integrate BFT with Proof-of-Stake (PoS) and PoW to resist Sybil attack~\cite{douceur2002sybil}. To launch Sybil attack, an attacker creates multiple identities in the blockchain to cause security issues. Sybil-resistance is unique in permissionless blockchains, and PoS, PoW are two widely used techniques to realise it. By integrating PoS/PoW together with BFT, a blockchain can enjoy both Sybil-resistance and high throughput. However, in permissionless blockchains, one can join or leave the system freely, which again leads to the configuration and reconfiguration research issue. Except for the permissioned/permissionless classification, blockchain can also be divided into different types according to different domains according to their use cases, such as financial, medical, governmental blockchains, etc. Different types of blockchain have different property requirements. For example, in financial applications such as cryptocurrencies~\cite{nakamoto2008bitcoin}, safety, decentralisation and fairness are strong demands but throughput could be balanced. In governmental~\cite{KASSEN2022101862} and medical blockchains~\cite{mamun2022blockchain,wu2021MB}, they usually take throughput and safety into account while decentralisation is a supplement, while in Internet of Things (IoT)~\cite{ayub2022internet,huo2022comprehensive,fernandez2018review} applications, an increasing number of heterogeneous devices are connected and their ability of communication is better than that of complex cryptography algorithms, which makes rethink of the schemes such as HotStuff that applies cryptography but reduces communication. In addition, IoT systems might be sensitive to latency, raising higher requirements on reducing time complexity. Different requirements lead to to a common belief that no one-size-fits-all BFT protocol exists. Research efforts are necessary on optimisations and adaptations on current protocols and even novel protocols that could satisfy different requirements.

\subsection{Deployment in Complex Application Structures}

The original protocols of BFT assume a flat system structure in which every participant has completely the same status. This includes the protocol design and the underlying peer-to-peer communication. This assumption holds for distributed service models and cryptocurrencies. However, in some other real applications, such as Internet of Things and Industrial Internet of Things~\cite{huo2022comprehensive,fernandez2018review}, healthcare~\cite{wu2021MB}, and autonomous driving~\cite{cao2022v2v,feng2023wireless}, the participants, for example, vehicles and Road Side Units (RSU), have different resources including computation power, bandwidth and may have different levels, i.e., leader-member relations that make the structure hierarchical. Reputation~\cite{zhang2023prestigebft} and sharding~\cite{kwak2020design} techniques have been used in partial BFT protocols to mitigate these issues, but there is still a lack of adaptations for asynchronous and DAG-based consensus in these contexts. In addition, research efforts are needed on how to carefully deploy consensus protocols that adapt to the system structure and yield better performance .

\subsection{BFT in Next Generation Communications}

With the rapid development of communication technology and the gradual maturity of 5G technology, researchers have various prospects for the development direction of the next generation communications. Among them, wireless distributed consensus based on the idea of distributed consensus algorithms has been proposed in some research \cite{zhang2021much}. Most of this idea is aimed at the design of communication system architectures for the Internet of Things (IoT) systems. Most current IoT systems adopt a centralised architecture, which requires a central controller to exchange data with other nodes. However, the increasing number of IoT nodes brings more challenges to the reliability and stability of the central node. In addition to the centralised architecture, implementing a distributed consensus network in wireless communication systems may be an alternative or supplementary solution \cite{zongyao2022design}. In a wireless distributed consensus system, the transmission and interaction of information no longer completely depends on the scheduling of the central node but instead realises reliable execution request transmission according to the consensus algorithm. Since the design of BFT algorithms leave a fault-tolerant space for Byzantine nodes, node failures and link failures in the wireless communication system can be recovered within a certain limit \cite{xu2022wireless}. Nevertheless, adapting and refining the original consensus algorithms to suit the specific requirements of a given communication system is paramount. As noted in \cite{xu2024wireless}, wireless BFT presents unique challenges, as the inherent unreliability of wireless communication may render traditional BFT primitives infeasible. This necessitates dedicated research efforts to develop novel primitives and to optimise existing wireless BFT across different wireless network layers to ensure performance and efficiency.

\subsection{BFT in Machine Learning and Artificial Intelligence}\ 

\noindent Machine Learning (ML) and Artificial Intelligence (AI) have emerged as rapidly evolving research areas with substantial recent advancements. The intersection of consensus mechanisms and ML/AI has begun to attract scholarly attention. For instance, Tessler et al. \cite{tessler2024ai} demonstrate that AI-assisted mediators can support humans in achieving consistency during democratic deliberation. In the context of Byzantine Fault Tolerance (BFT), Wu et al. \cite{wu2024bftbrain} propose BFTBrain, an adaptive framework in which ML techniques dynamically optimise network and hardware configurations to enable real-time switching of BFT protocols. These studies illustrate how ML/AI can enhance the efficiency and adaptability of consensus processes, either by accelerating the path to consistency or improving system performance.

Conversely, consensus mechanisms may also play a role in supporting AI-driven decision-making. Wu et al. \cite{wu2025when} explore the implications of intelligent agents participating in fault-tolerant consensus. Traditional BFT solutions rely on predefined rules and often assume uniform initial conditions and deterministic state transitions. However, in AI/ML-driven environments, agent behaviours may vary significantly due to differing levels of intelligence and diverse decision spaces, which do not necessarily conform to a deterministic state machine model. This divergence poses new challenges for the applicability and design of BFT in AI/ML scenarios.

In summary, two promising research directions emerge at the intersection of BFT and AI/ML: (1) leveraging AI/ML to enhance BFT protocols, such as through intelligent error detection and system reconfiguration; and (2) rethinking BFT design to accommodate the dynamic and non-deterministic nature of intelligent agents in AI/ML-driven systems.

\section{BFT in Blockchain and Blockchain Use Cases}
\label{sect:usecases}
Blockchains, also referred to as DLT, originated with Bitcoin~\cite{nakamoto2008bitcoin} and have since been explored and adopted across various domains. Their emergence has reignited significant research interest in BFT. Notably, many recent BFT solutions, such as HotStuff and Honey Badger BFT, have been specifically inspired by and designed for blockchain applications.
As the most popular killer application in blockchain, Bitcoin has driven the trend of research on cryptocurrency. With the development of cryptocurrency, in 2013, Ethereum~\cite{wood2014ethereum} was proposed, which provides a platform that supports any extendable application based on consensus and cryptocurrency. At this stage, researchers also abstracted blockchain from cryptocurrency as a decentralised, trustless platform or component that could be integrated with various information systems, including financial, healthcare, cloud computing, Internet of Things , supply chain, etc. With the power of blockchain, the trust, security, and privacy aspects of these information systems could be improved. In addition, the decentralised idea has gradually become a novel paradigm for brand new design models based on decentralisation and the crypto ecosystem. Building on the underlying blockchain infrastructure, Web3 , the next generation of the internet, is emerging.

In the preceding sections, we systematically reviewed BFT technology, which originated four decades before blockchain and has been extensively studied for nearly half a century, with recent research trends focusing on enhancing blockchain systems. In this section, we shift our view to post-blockchain use cases, including blockchain integrated with Web2 and general scenarios, and Web3 visions which use blockchain as the underlying protocol layer. Our goal in this section is not to provide a comprehensive review of as many blockchain systems as possible, nor to advocate any form of blockchain. Instead, we aim to present use cases and demonstrate how blockchain could be used in the real world and how blockchain may evolve in the future. The exploration of real-world applications and futuristic visions serves to illuminate the vital importance of sustained research on enduring BFT problems. By offering a panoramic overview of the distributed and decentralised system models before and after blockchain, we aspire to stimulate increased research that spans beneath (supporting blockchain) and atop (constructing upon blockchain) the blockchain paradigm.

\subsection{Web2 Applications and General Scenarios}

\subsubsection{Financial Applications}

The financial application Bitcoin and its variants are exactly how blockchains became famous as they offer a decentralised, trustless platform and environment for people to exchange value. Financial applications are still a major focus in blockchain research. Financial use cases include asset management, insurance, and payment. For example, RippleNet~\cite{ripple} is exploring financial trading and management on blockchain as well as new business models. Another example in the insurance industry is OpenIDL~\cite{openidl}, which is built on the IBM Hyperledger Fabric blockchain platform~\cite{hyperledger} that supports different consensus mechanisms. OpenIDL was initiated in 2020 by the American Association of Insurance Services (AAIS) and aims to solve data sharing, privacy, standardisation, and auditability challenges in the insurance industry. These two use cases show that financial applications can be built on both cryptocurrency and general blockchain platforms.

\subsubsection{Healthcare and Medical Services}

With the need for building secure and reliable data storage systems , research has been conducted to explore blockchain applications in healthcare and medicine . The intuitive idea is to use blockchain as a tamper-resistant platform to store data so that all historical data and operations are immutable, thus improving trust and safety. Numerous efforts have been made in this domain to build health records, medical data sharing and analysis platforms, medical supply chain records, etc~\cite{agbo2019blockchain}. In addition, recent research has also considered the future Internet of Medical Things (IoMT) \cite{QU2022942}. Moreover, blockchain in healthcare and medicine has also attracted government agencies. For example, the Centre for Surveillance Epidemiology and Laboratory Services of the Centres for Disease Control (CDC) built a proof of concept to track the opioid crisis. The U.S. Department of Health and Human Services (HHS) is also discussing the potential of blockchain in public and private healthcare\cite{hhs2021}.

\subsubsection{Supply Chain Management}
The blockchain naturally provides a tamper-proof, immutable, and transparent history, which is a perfect match for supply chains. The U.S. Food and Drug Administration (FDA) launched pilot programs to utilize blockchain as a solution to assist drug supply chain stakeholders with developing electronic, interoperable capabilities that will identify and trace certain prescription drugs as they are distributed within the United States~\cite{fda1,fda2}. Blockchain-empowered supply chains are also being researched in industry. For example, a fish provenance and quality tracking framework was proposed in \cite{wang2022blockchain}. The World Wildlife Fund (WWF) also launched a supply chain project, OpenSC, which is a global digital platform based on blockchain that tracks the journey of food along the supply chain.

\subsubsection{Cloud Computing}

Fusing blockchain with cloud computing has the potential to enhance security and privacy aspects. Blockchain has been utilised in enhancing traditional cloud access control. The major drawback of traditional cloud access control schemes is that they highly rely on centralised settings. Blockchain-based access control~\cite{sukhodolskiy2018blockchain,nguyen2019blockchain} has a few benefits, including traceability and immutable governance provided by the blockchain structure, and the involvement of all stake holders by the requirement of consensus. Blockchain is also considered in cloud resource allocation due to its economical nature; the tokenised incentive-based resource allocation could increase the resource sharing rate and reduce the energy cost~\cite{gai2020blockchain}. Some approaches have embedded blockchain to provide a secure and reliable environment for data sharing and collection in deep learning algorithms~\cite{liu2019blockchain}.

\subsubsection{Edge Computing and Internet of Things}

As an extension of cloud computing, edge computing~\cite{WU2022102720,yang2019integrated} aims to offload heavy centralised tasks from the cloud to decentralised edge servers. With the development of 5G, the Internet of Things, and the Internet of Vehicles (IoV), edge computing has gradually played an important role in these system models as the infrastructure to provide low latency computation and data processing, making it a key technology for realising various visions for the next-generation Internet. Having the same distributed architecture, edge computing is a perfect use case for blockchain applications. In addition to access control and resource allocation, which have already been considered in cloud computing, blockchain-assisted decentralised data storage for IoT devices~\cite{WU2022102720}, secure multi-party computation (MPC)\cite{guan2021blockchain}, and self-sovereign identity (SSI)\cite{basha2022fog} are also being researched.

\textbf{$\triangleright$ Remark.} Some of the financial applications that require a completely decentralised permissionless network use cryptocurrency consensus protocols such as PoW and PoS, while most of the general applications in a more controllable network consider using permissioned blockchains, which tend to deploy BFT consensus for much higher efficiency. Current BFT consensus protocols still have a long way to go to satisfy the increasing demands on throughput, security, and latency requirements in various blockchain applications.

\subsection{Web3 Visions}

\subsubsection{Layer1 and Layer2 Blockchain}

Layer1 and Layer2 blockchains are the fundamental layers for the web3 ecosystem. Layer1 blockchains, a.k.a. smart contract platforms, are the settlement layer that provides decentralisation, security, and decentralised programmability for the network through consensus, scripts, smart contracts, virtual machines, and other protocols. For instance, Bitcoin~\cite{nakamoto2008bitcoin} and Ethereum~\cite{wood2014ethereum} are layer1 blockchains. Layer2 blockchains are based on layer1, which could be regarded as off-chain protocols~\cite{decker2018eltoo}. Generally, Layer2 utilizes technologies including side chains and state channels to improve throughput without altering the basic layer1 blockchain rules. For example, Lightning Network~\cite{poon2015bitcoin} is a layer2 implementation on Bitcoin, and Polygon~\cite{polygon} is a layer2 solution on Ethereum. In addition to layer2 on permissionless blockchains, layer2 could also be built on permissioned blockchains~\cite{GANGWAL2023103539}. The different layers of the web3 ecosystem are illustrated in Fig.~\ref{fig:web3}.

\begin{figure}[ht]
 \centering
 \setlength{\abovecaptionskip}{0.1em}  %
 \setlength{\belowcaptionskip}{-0.3em} 
 \includegraphics[width=0.75\textwidth, trim=20 300 250 75, clip]{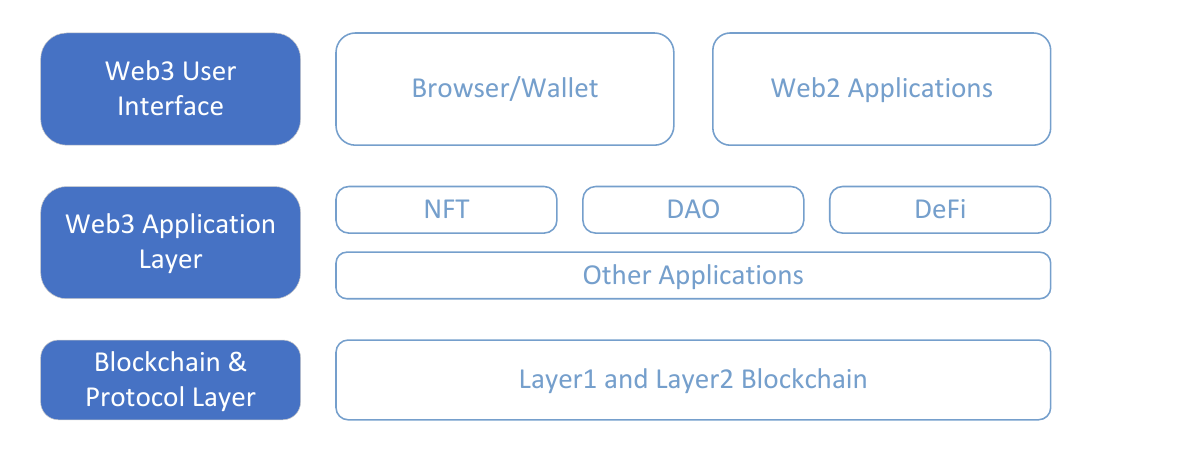}
 \caption{Illustration of Web3 Ecosystem}
 \Description{Web3 Ecosystem}
 \label{fig:web3}
\end{figure}

\subsubsection{NFT}
Non-fungible tokens (NFTs) are digital assets built on blockchain technology. Unlike cryptocurrencies such as Bitcoin, NFTs represent not only monetary value but also unique digital assets, meaning that each NFT is distinct from all others \cite{wang2021non}. The uniqueness and ownership confirmation functions realised by blockchain technology have enabled NFTs to be widely used in Web3. For example, since NFTs can be used to prove the ownership of digital content, artists can convert their works into NFTs, ensuring their uniqueness and authenticity. Users can obtain ownership of content by purchasing NFTs, and creators can break the shackles of traditional monopoly platforms and directly benefit from their works \cite{nadini2021mapping}. Another example is that NFTs can endow items, land, characters, and buildings in virtual reality environments with true value \cite{nadini2021mapping}. The virtual assets connected by NFTs can even be circulated across platforms, providing users with more practical autonomy.

\subsubsection{DAO}
Decentralised autonomous organisations (DAOs) have their origins rooted in Ethereum. They are internet-native communities/organisations that are executed through smart contracts and recorded by blockchain, and featured as open and transparent \cite{santana2022blockchain}. In contrast to traditional Internet organisations, being supported and influenced by blockchain technology, DAOs emphasise a flat personnel structure and a decentralised decision-making mechanism. They decentralise decision-making rights to all members as much as possible, write execution logic into smart contracts, and record all the critical activities on the chain \cite{fan2023insight}. The decentralisation of infrastructure and the decentralisation of management spirit in DAOs are complementary to each other. Blockchain technology provides a reliable infrastructure for decentralised autonomous organisations, enabling them to operate autonomously and make decisions without trusted intermediaries.

\subsubsection{DeFi}
Decentralised Finance (DeFi) is a newly emerged financial system built on blockchain technology. In contrast to traditional financial services, DeFi services are usually built on decentralised smart contract platforms for executing financial agreements and transactions, providing financial services that do not rely on traditional financial institutions and intermediaries. There are various financial products in the DeFi ecosystem, such as lending agreements, liquidity pools, decentralised exchanges, stablecoins, etc. \cite{werner2022sok}. These products can be realised through smart contracts, enabling users to participate in a more open and diversified financial market. The openness and programmability brought by blockchain to DeFi enables anyone to build and deploy their own financial applications, promoting innovation and competition \cite{chohan2021decentralized}. This provides users with more choices and drives the evolution of financial services.

\textbf{$\triangleright$ Remark.} Web2 and Web3 applications can be built upon both permissioned and permissionless blockchains. Web3 is considered to potentially change the current Internet service model, and Web3 applications usually assume an ideal underlying blockchain protocol layer and focus on the application layer. However, current blockchain consensus protocols still encounter challenges, and ongoing research into consensus protocols, including Byzantine Fault Tolerance (BFT) and hybrid approaches (e.g., BFT-based PoW), continues to be a focal point of interest.

\section{Conclusion}
\label{sect:conclusion}
This article provides a comprehensive review of BFT protocols for the long-term popular Byzantine fault-tolerance problems over half a century, from the original ones to state-of-the-art protocols that were designed for general applications before blockchain, such as distributed databases, to present a comprehensive view of the develop pathways of BFT research. In addition, we also review post-blockchain BFT protocols that are specifically designed for blockchain and cryptocurrency scenarios, and analyse the design principles of BFT in different scenarios with varied requirements. We introduce important concepts, properties, and technologies in BFT protocols, and elaborate on how they achieve these properties by employing specific designs and technologies in general applications and blockchain. After reviewing the principles and developments of BFT, we present future research instructions. We conclude by summarising blockchain applications, encompassing both real-world Web2 and general-purpose scenarios, as well as the rapidly evolving Web3 paradigms built on blockchain technology. These developments represent a potential next stage in BFT research, highlighting the transformative power of blockchain and reinforcing the critical role of continued BFT advancement in supporting future distributed systems.

This article presents a comprehensive survey of Byzantine Fault Tolerance (BFT) protocols, tracing their evolution over the past five decades. We begin by reviewing foundational and pre-blockchain-era BFT protocols developed for general-purpose distributed systems, such as databases, to provide a clear view of the developmental trajectories and underlying principles of BFT research. We then examine post-blockchain BFT protocols designed for various environments and requirements, especially the novel protocols specifically developed for blockchain and cryptocurrency applications, highlighting the varied design goals and performance requirements across different contexts. Key concepts, properties, and mechanisms of BFT protocols are introduced and analysed, with a focus on how these are realized through distinct architectural and algorithmic choices in both traditional and blockchain-based systems. Further, we outline promising directions for future research, including in the context of blockchain-enabled systems. 
Finally, we conclude by summarising blockchain applications across both real-world Web2 and general-purpose scenarios, as well as the rapidly evolving Web3 paradigms built upon blockchain technology. These emerging developments represent a potential next wave of breakthroughs and high-impact applications in BFT research, underscoring the transformative potential of blockchain and reaffirming the essential role of advancing BFT protocols in enabling the next generation of resilient and trustworthy distributed systems.

\bibliographystyle{ACM-Reference-Format}
\bibliography{ref}

\appendix

\end{document}